\documentclass[a4paper,11pt]{article}
\pdfoutput=1

\usepackage[a4paper,top=3.5cm,right=3.1cm,left=3.1cm,bottom=3.5cm]{geometry}
\usepackage{amsthm}
\usepackage{amssymb}
\usepackage{amsmath}
\usepackage{bm}
\usepackage{cite}
\usepackage{graphicx}

\usepackage{hyperref}

\newcommand\bra[1]{{\langle#1|}}
\newcommand\ket[1]{{|#1\rangle}}

\newcommand\bbra[1]{{\langle\!\langle#1|}}
\newcommand\kket[1]{{|#1\rangle\!\rangle}}

\newcommand{\llangle}{\langle\!\langle}
\newcommand{\rrangle}{\rangle\!\rangle}

\newcommand\id{\mathbf{1}}

\numberwithin{equation}{section}
\newtheorem{definition}{Definition}[section]
\newtheorem{example}[definition]{Example}
\newtheorem{conjecture}[definition]{Conjecture}
\newtheorem{proposition}[definition]{Proposition}
\newtheorem{theorem}[definition]{Theorem}
\newtheorem{corollary}[definition]{Corollary}
\newtheorem{lemma}[definition]{Lemma}

\begin{document}

\setcounter{page}{0}
\pagestyle{empty}
\null

\vfill

\begin{center}
    {\LARGE  {\sffamily
    Matrix product construction for Koornwinder polynomials and fluctuations of the current in the open ASEP
    }}
    \vspace{60pt}

    \begin{large}
     C. Finn$^{a}$\footnote{caley.finn@lapth.cnrs.fr},
     and M. Vanicat$^{a}$\footnote{matthieu.vanicat@lapth.cnrs.fr}
     \\[.41cm] 
    $^{a}$  LAPTh,
     CNRS - Universit{\'e} Savoie Mont Blanc\\
       9 chemin de Bellevue, BP 110, F-74941  Annecy-le-Vieux Cedex, 
    France. 
    \end{large}
\end{center}

\vfill

\begin{abstract}
Starting from the deformed current-counting transition matrix for the open boundary ASEP, we prove that with a
further deformation, the symmetric Koornwinder polynomials for partitions with equal row lengths appear as the
normalisation of the twice deformed ground state.  We give a matrix product construction for this ground state
and the corresponding symmetric Koornwinder polynomials.  Based on the form of this construction and numerical
evidence, we conjecture a relation between the generating function of the cumulants of the current, and a certain
limit of the symmetric Koornwinder polynomials.
\end{abstract}
\vfill
\vfill
\begin{flushright}
October 2016\\
LAPTH-061/16
\end{flushright}
\vfill
\newpage
\pagestyle{plain}

\setcounter{footnote}{0}

\section{Introduction}

The asymmetric simple exclusion process (ASEP) \cite{Spitzer70,Liggett85} has become over the last decades a paradigmatic model in
non-equilibrium statistical mechanics \cite{Derrida98,ChouMZ11}.  It is an example of a physical system exhibiting a
macroscopic current in a stationary regime.  Such systems, which cannot be described by the usual thermal
equilibrium formalism, can be seen as the simplest out-of-equilibrium situation one can imagine
\cite{KatzLS84,KrapivskyRB10,SchmittmannZ95}.  The large deviation function of the current is a key tool in
the study of these systems, and has been proposed as a generalization of the traditional thermodynamic potentials
to non-equilibrium systems \cite{Touchette09}.

The large deviation function is studied via the generating function of the cumulants of the current, and
several exact results have been obtained for the ASEP using a deformed current-counting transition matrix
\cite{DerridaEM95,deGierE11,LazarescuM11,GorissenLMV12,Lazarescu13jphysA,LazarescuP14}.  The works
\cite{DerridaEM95,LazarescuM11,GorissenLMV12,Lazarescu13jphysA,LazarescuP14} build upon the matrix product
method \cite{DerridaEHP93}, used to compute the stationary state of the undeformed ASEP.  In this work we
continue in this vein, relying particularly on the integrability of the  ASEP \cite{Baxter82,Sklyanin88} and
the connection between integrability and the matrix product method \cite{SasamotoW97,CrampeRV14}.  We note
also the approach of \cite{deGierE11}, in which the Bethe ansatz was used to obtain the cumulant
generating function in the thermodynamic limit.

The integrable structure of the ASEP gives rise to a connection with Hecke algebras, and Macdonald and
Koornwinder polynomials \cite{CantiniDGW15,Cantini15,CantiniGDGW16}.  The Macdonald polynomials are associated with the
periodic system, and in \cite{CantiniDGW15} this connection is exploited to derive a matrix product formula
for the symmetric Macdonald polynomials.  The connection between the open system and
Koornwinder polynomials was first identified in \cite{Cantini15}, then fully established in
\cite{CantiniGDGW16}.  However a matrix product formula, and the link to the general form of
the Koornwinder polynomials is still lacking.

The aim of this paper is to exploit the integrable structure of the ASEP with deformed current-counting matrix,
to make a connection to the general form of the Koornwinder polynomials.  This in turns leads to a connection
between the symmetric Koornwinder polynomials and the generating function of the cumulants of the current.  We
do this by introducing scattering relations and $q$KZ equations with a further deformation, through which we
define a twice deformed ground state vector.  We give a matrix product construction of this ground state
vector and of the symmetric Koornwinder polynomial associated with it.  This leads us to
conjecture a beautiful relation between the generating function of the cumulants of the current, and a
certain limit of symmetric Koornwinder polynomials.

In this work we consider the ASEP with \emph{partial} asymmetry, but it would be interesting also to consider
the \emph{totally} asymmetric simple exclusion process (TASEP)\footnote{That is with $q = 0$ in the model, as
defined below.}. The TASEP exhibits
broadly similar behaviour to the general ASEP physically, but often the involved mathematical expressions are much
simpler.  The stationary state of the TASEP can be expressed in matrix product form \cite{DerridaEHP93}, but
was also given by directly solving certain recursion relations \cite{SchuetzD93}.  The results relating to
current fluctuations in the TASEP \cite{LazarescuM11} are also much simpler than those for the general ASEP.
In our notation, the TASEP relates to the $t \to \infty$ limit of the Koornwinder polynomials, which has been
previously studied \cite{OrrS13}. Thus it would
be interesting to study this limit in the TASEP context, and to see if any simplifications occur.

In the following subsections we review briefly the main tools that are needed in this work: 
(i) the ASEP, the current-counting deformation of the associated Markov matrix, and the link with 
the generating function of cumulants of the current, (ii) the Hecke algebra and Koornwinder polynomials, (iii) the
integrable structure of the ASEP.

The outline of the rest of the paper is as follows: in section \ref{sec:twice_deformed}, we introduce
scattering relations and $q$KZ equations, whose solutions define a twice deformed ground state vector.  We
point out the connection between components of this deformed ground state vector and non-symmetric Koornwinder
polynomials.  In section \ref{sec:matrixproduct} we construct solutions of the $q$KZ equations in matrix
product form, which then allows us to make the connection to symmetric Koornwinder polynomials.  Finally in
section \ref{sec:currentKoornwinder} we conjecture that the twice deformed ground state vector converges in a
specific limit to the ground state of the deformed Markov matrix.  This conjecture allows us
to express the generating function of the cumulants of the current as a certain limit of symmetric Koornwinder
polynomials.

\subsection{The ASEP and current fluctuations}
The open boundary ASEP is a stochastic model of interacting particles
set on a one-dimensional lattice.  We consider a lattice of length $N$, where each lattice site may be
occupied by a single particle, or is empty.  In the bulk of the lattice particles hop right one site with rate
$p$, and left with rate $q$, so long as the target site is empty (the exclusion rule).  With open boundaries,
particles may enter and exit at the first and last sites.  If site $1$ is empty (occupied), a particle is
injected (extracted) with rate $\alpha$ ($\gamma$).  At site $N$, particles are extracted with rate $\beta$ and
injected with rate $\delta$.  These rules, summarised in figure~\ref{fig:aseprules}, describe a continuous time
Markov process.
\begin{figure}[h]
    \centering
    \includegraphics{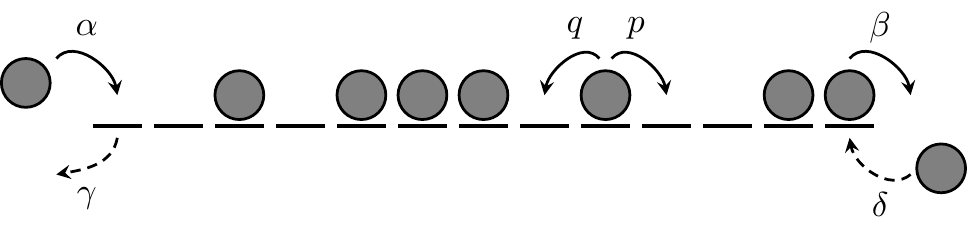}
    \caption{Transition rates for the ASEP with open boundaries.}
    \label{fig:aseprules}
\end{figure}

A Markov process is defined by its transition matrix, and to give this matrix we must first specify a basis.
To each site $i$ we attach a boolean variable $\tau_i \in \{0,1\}$ indicating if the site is empty ($\tau_i =
0$) or occupied ($\tau_i = 1$).  The state of a single site is represented by a vector $\ket{\tau_i} \in
\mathbb{C}^2$, where
\begin{equation*}
    \ket{0} = \begin{pmatrix}
        1
        \\
        0
    \end{pmatrix},
    \qquad
    \ket{1} = \begin{pmatrix}
        0
        \\
        1
    \end{pmatrix}.
\end{equation*}
The state of the lattice is given by a vector
$\ket{\bm\tau} = \ket{\tau_1, \ldots, \tau_N} \in \left(\mathbb{C}^2\right)^{\otimes N}$ with
\begin{equation*}
    \ket{\tau_1, \ldots \tau_N} = \ket{\tau_1} \otimes \ldots \otimes \ket{\tau_N}.
\end{equation*}
The ASEP transition rates are then encoded in the transition matrix $M(\xi = 1)$, where\footnote{The unusual
normalisation is to ease the notation in later sections.}
\begin{equation}
    M(\xi)
    =
    \sqrt{\alpha \gamma} B_1(\xi) + \sum_{i=1}^{N-1} \sqrt{p q} w_{i,i+1} + \sqrt{\beta \delta} \overline{B}_N,
\end{equation}
and
\begin{equation}
    \sqrt{\alpha \gamma} B(\xi) = \begin{pmatrix}
        -\alpha & \xi^{-1} \gamma \\
        \xi \alpha & -\gamma
    \end{pmatrix},
    \qquad
    \sqrt{\beta \delta} \overline{B} = \begin{pmatrix}
        -\delta & \beta \\
        \delta & -\beta
    \end{pmatrix},
    \qquad
    \sqrt{p q} w = \begin{pmatrix}
        0 & 0 & 0 & 0
        \\
        0 & -q & p & 0
        \\
        0 & q & -p & 0
        \\
        0 & 0 & 0 & 0
    \end{pmatrix}.
\end{equation}
The indices on the matrices indicate the sites on which they act.  The matrix $M(\xi)$ is stochastic only
for $\xi = 1$, but the introduction of this fugacity allows the study of the current generating function, as will be
discussed below.  The stochastic matrix $M(\xi = 1)$ has a unique eigenvector with eigenvalue $0$, that is
\begin{equation*}
    M(1) \ket{\Psi} = 0,
    \qquad
    \ket{\Psi} = \sum_{\bm\tau} \psi_{\bm\tau} \ket{\bm\tau}.
\end{equation*}
Normalising this vector gives the stationary distribution of the system: letting
\begin{equation*}
    \mathcal{Z} = \langle 1 | \Psi \rangle,
    \qquad
    \bra{1} = (1, 1)^{\otimes N},
\end{equation*}
the stationary probability of a configuration $\bm\tau$ is
\begin{equation*}
    P_{\text{stat}}(\bm{\tau}) = \frac{1}{\mathcal{Z}} \psi_{\bm\tau}.
\end{equation*}

If we now consider the deformed transition matrix $M(\xi)$, then the deformed ground state vector
satisfies
\begin{equation*}
    M(\xi) \ket{\Psi(\xi)} = \Lambda_0(\xi) \ket{\Psi(\xi)},
\end{equation*}
with $\Lambda_0(\xi) \to 0$ as $\xi \to 1$. The eigenvalue $\Lambda_0(\xi)$ for general $\xi$ is an object of
prime interest in the context of out-of-equilibrium statistical physics, because of its connection to the
generating function of the cumulants of the current, $E(\mu)=\Lambda_0(e^{\mu})$.    It has been shown recently
\cite{LazarescuM11,GorissenLMV12,Lazarescu13jphysA,LazarescuP14} that the cumulants of the current for finite systems can be extracted
analytically at any order at the price of solving non-linear implicit equations.
  The Legendre transformation
of $E(\mu)$ provides the large deviation function of the particle current in the stationary state,
\begin{equation*}
G(j)=\min\limits_{\mu}\big(\mu j-E(\mu)\big),
\end{equation*}
which is expected to be a possible generalisation of thermodynamic potential to non-equilibrium systems
\cite{Touchette09}.
In words, $G(j)$ describes the non-typical fluctuations of the
mean particle flux.
More precisely if we denote by $Q_T$ the algebraic number of particles exchanged between
the system and the left reservoir during the time interval $[0,T]$, then $G(j)$ is characterised by
$P(Q_T/T=j)\sim \exp(-TG(j))$ for large $T$.
   The reader can refer, for instance, to \cite{GorissenLMV12} for more details.

The eigenvalue $\Lambda_0(\xi)$ is invariant under the Gallavotti--Cohen symmetry \cite{GallavottiC95,LebowitzS99}
\begin{equation}
    \xi \to \xi' = \frac{\gamma \delta}{\alpha \beta} \left(\frac{q}{p}\right)^{N-1} \xi^{-1}.
    \label{eq:GCsym}
\end{equation}
This translates immediately into a symmetry on the large deviation function of the particle current.
\begin{equation}
 G(j)-G(-j)=j\ln \left(\frac{\gamma \delta}{\alpha \beta} \left(\frac{q}{p}\right)^{N-1}\right).
\end{equation}
This symmetry arises from the relation between the transition matrix and its transpose:
\begin{equation}
    M(\xi') = U_{\text{GC}} M(\xi)^T U_{\text{GC}}^{-1},
    \label{eq:GCTranspose}
\end{equation}
where $\xi'$ is as defined in \eqref{eq:GCsym} and
\begin{equation}
    U_{\text{GC}}
    =
    \begin{pmatrix}
        1 & 0
        \\
        0 & \frac{\delta}{\beta}\left(\frac{q}{p}\right)^{N-1}
    \end{pmatrix}
    \otimes
    \begin{pmatrix}
        1 & 0
        \\
        0 & \frac{\delta}{\beta}\left(\frac{q}{p}\right)^{N-2}
    \end{pmatrix}
    \otimes \dots \otimes
    \begin{pmatrix}
        1 & 0
        \\
        0 & \frac{\delta}{\beta}
    \end{pmatrix}.
    \label{eq:UGC}
\end{equation}
The relation \eqref{eq:GCTranspose} implies that $M(\xi)$ and $M(\xi')$ have the same spectrum and thus the largest
eigenvalue is the same:
\begin{equation*}
    \Lambda_0(\xi') = \Lambda_0(\xi).
\end{equation*}
As a further consequence of this symmetry, given a solution of the left eigenvalue problem
\begin{equation*}
    \bra{\Phi({\xi})} M(\xi) = \Lambda(\xi) \bra{\Phi(\xi)}.
\end{equation*}
there is a corresponding solution of the right eigenvalue problem
\begin{equation*}
    M(\xi') \ket{\Psi({\xi'})} = \Lambda(\xi) \ket{\Psi(\xi')},
\end{equation*}
with
\begin{equation*}
    \ket{\Psi(\xi')} = U_{\text{GC}} \ket{\Phi(\xi)},
\end{equation*}
and $\xi'$ as defined in \eqref{eq:GCsym}.
Note that here and in the following we use the convention $\bra{\cdot}^T=\ket{\cdot}$, where $^T$ denotes the usual transposition.
We will explain in the following sections the connection that can be made between the ground state $\ket{\Psi(\xi)}$ and the 
theory of Koornwinder polynomials that we present now.

\subsection{Koornwinder polynomials}
We now introduce the symmetric and non-symmetric Koornwinder polynomials, which form the other main theme of
this work.  The symmetric Koornwinder polynomials \cite{Koornwinder92, vanDiejen96} are a family of
multivariate orthogonal polynomials generalising the Askey--Wilson polynomials.  The symmetric
Koornwinder polynomials can be constructed from their non-symmetric counterparts, which arise from the
polynomial representation of the affine Hecke algebra of type $C_N$ \cite{Sahi99, Stokman00}.

\subsubsection{Hecke algebra}
The affine Hecke algebra of type $C_N$ is generated by elements $T_0, T_1, \ldots, T_N$, with parameters
$t^{1/2}$, $t_0^{1/2}$ and $t_N^{1/2}$.  The generators satisfy the quadratic relations,
\begin{align*}
    & \left(T_0 - t_0^{1/2}\right)\left(T_0 + t_0^{-1/2}\right) = 0,
    \\
    & \left(T_i - t^{1/2}\right)\left(T_i + t^{-1/2}\right) = 0, \qquad 1 \le i \le N-1,
    \\
    & \left(T_N - t_N^{1/2}\right)\left(T_N + t_N^{-1/2}\right) = 0,
\end{align*}
the braid relations
\begin{align*}
    T_1 T_0 T_1 T_0 & = T_0 T_1 T_0 T_1
    \\
    T_i T_{i+1} T_i & = T_{i+1} T_i T_{i+1}, \qquad 1 \le i \le N - 2,
    \\
    T_N T_{N-1} T_N T_{N-1} & = T_{N-1} T_N T_{N-1} T_N,
\end{align*}
and otherwise commute.  That is
\begin{equation*}
    T_i T_j = T_j T_i, \qquad |i - j| \ge 2.
\end{equation*}
The algebra contains a family of mutually commuting elements \cite{Sahi99,Lusztig89}
\begin{equation}
    Y_i = T_i \ldots T_{N-1} T_N \ldots T_0 T_1^{-1} \ldots T_{i-1}^{-1},
    \qquad
    1 \le i \le N.
    \label{eq:Yi}
\end{equation}

We are interested in the representation of this algebra due to Noumi \cite{Noumi95} (see also \cite{Sahi99}),
acting on Laurent polynomials in $x_1, \ldots, x_N$.  The Noumi representation contains three additional
parameters, $u_0^{1/2}$, $u_N^{1/2}$, and $s^{1/2}$, and is defined in terms of operators $s_i$ acting on the $x_i$
as
\begin{equation}
    s_0: x_1 \to s x_1^{-1}, \qquad s_N: x_N \to x_N^{-1},
        \qquad s_i: x_i \leftrightarrow x_{i+1}, \quad 1 \le i \le N - 1.
\label{eq:sidef}
\end{equation}
The elements $s_0, s_1, \ldots s_N$ generate the affine Weyl $W$ group of type $C_N$.  The \emph{finite} Weyl
group $W_0$ is the subgroup generated by $s_1, \ldots, s_N$.

Then in the Noumi representation, the generators of the affine Hecke algebra are given by
\begin{equation}
\begin{aligned}
    & T_0^{\pm 1} = t_0^{\pm 1/2} - t_0^{-1/2}\frac{(x_1 - a)(x_1 - b)}{x_1(x_1 - s x_1^{-1})}(1 - s_0),
        \\
    & T_i^{\pm 1} = t^{\pm 1/2} - \frac{t^{1/2} x_i - t^{-1/2} x_{i+1}}{(x_i - x_{i+1})}(1 - s_i),
       \qquad 1 \le i \le N - 1,
        \\
    & T_N^{\pm 1} = t_N^{\pm 1/2} + t_N^{-1/2}\frac{(c x_N - 1)(d x_N - 1)}{x_N(x_N - x_N^{-1})}(1 - s_N),
\end{aligned}
\label{eq:Tireps}
\end{equation}
with
\begin{equation*}
    a = s^{1/2} t_0^{1/2}u_0^{1/2}, \quad b = -s^{1/2} t_0^{1/2} u_0^{-1/2},
    \quad
    c = t_N^{1/2}u_N^{1/2}, \quad d = -t_N^{1/2} u_N^{-1/2}.
\end{equation*}
One can check directly that the definitions \eqref{eq:Tireps} satisfy the relations of the Hecke algebra.
Formally, we define the field $\mathbb{F} = \mathbb{C}(s^{1/2}, t^{1/2}, t_0^{1/2}, u_0^{1/2}, t_N^{1/2}, u_N^{1/2})$, and
let $\mathcal{R} = \mathbb{F}[x_1, \ldots, x_N]$ be the ring of Laurent polynomials in $N$ variables over
$\mathbb{F}$.  The map sending the generators of the Hecke algebra to the operators defined in
\eqref{eq:Tireps} gives a representation of the algebra on $\mathcal{R}$ \cite{Sahi99}.

Later we will see that to relate the ASEP to the Noumi representation of the Hecke algebra we should take
\begin{equation*}
    t^{1/2} = \sqrt{\frac{p}{q}},
    \qquad
    t_0^{1/2} = \sqrt{\frac{\alpha}{\gamma}},
    \qquad
    t_N^{1/2} = \sqrt{\frac{\beta}{\delta}},
\end{equation*}
and
\begin{equation*}
    u_0^{1/2} - u_0^{-1/2} = \frac{p - q + \gamma - \alpha}{\sqrt{\alpha \gamma}},
    \qquad
    u_N^{1/2} - u_N^{-1/2} = \frac{p - q + \delta - \beta}{\sqrt{\beta \delta}}.
\end{equation*}
For the remainder of this paper we will use this parameterisation in preference to the physical parameters of
the ASEP, or the combinations $a$, $b$, $c$, $d$ appearing in \eqref{eq:Tireps}.

\subsubsection{Non-symmetric Koornwinder polynomials}
Before defining the non-symmetric Koornwinder polynomials, we will introduce some notation and definitions
concerning integer vectors, $\lambda \in \mathbb{Z}^N$, with
\begin{equation*}
    \lambda = (\lambda_1, \ldots, \lambda_N).
\end{equation*}
We call such a vector a \emph{composition}.  For a given composition, $\lambda$, we write monomials
\begin{equation*}
    \mathbf{x}^\lambda = x_1^{\lambda_1} \ldots x_N^{\lambda_N}.
\end{equation*}
A \emph{partition} is a composition satisfying
\begin{equation*}
    \lambda_1 \ge \lambda_2 \ge \ldots \ge \lambda_N \ge 0.
\end{equation*}
We denote by $\lambda^+$ the unique partition obtained from a composition $\lambda$ by reordering and changing
signs so that the entries are non-negative and in decreasing order.

There are two partial orderings on compositions that will be relevant \cite{Stokman00}.  First define the
\textit{dominance order}:  for $\mu, \lambda \in \mathbb{Z}^N$
\begin{equation*}
    \mu \le \lambda \iff \sum_{i=1}^j(\mu_i - \lambda_i) \le 0, \qquad \forall j, \, 1 \le j \le N.
\end{equation*}
Then $\mu < \lambda$ if $\mu \le \lambda$ and $\mu \ne \lambda$.
The second partial ordering `$\preceq$` is defined as
\begin{equation*}
    \mu \preceq \lambda \iff \mu^+ < \lambda^+ \text{ or }
        \left(\mu^+ = \lambda^+ \text{ and } \mu \le \lambda\right).
\end{equation*}
Then, $\mu \prec \lambda$ if $\mu \preceq \lambda$ and $\mu \ne \lambda$.

\begin{definition}
The non-symmetric Koornwinder polynomial $E_\lambda(\mathbf{x})$, indexed by composition $\lambda$, is the
unique Laurent polynomial satisfying
\begin{equation*}
\begin{aligned}
    Y_i E_\lambda(\mathbf{x}) & = y(\lambda)_i E_\lambda(\mathbf{x}), \quad 1 \le i \le N, \\
    E_\lambda(\mathbf{x})
    & =
    \mathbf{x}^\lambda + \sum_{\lambda' \prec \lambda} c_{\lambda \lambda'} \mathbf{x}^{\lambda'},
\end{aligned}
\end{equation*}
where $Y_i$ is defined in \eqref{eq:Yi}, \eqref{eq:Tireps}, $y(\lambda)_i$ is the eigenvalue, and $c_{\lambda
\lambda'}$ are coefficients.
\end{definition}
The composition $\lambda$ determines the eigenvalue $y(\lambda)_i$ \cite{Sahi99}.  The
two following cases will appear directly in this work:
\begin{itemize}
    \item
For $m > 0$, $\lambda = \left((-m)^N\right) = (-m, \ldots, -m)$,
\begin{equation}
    y(\lambda)_i = t_0^{-1/2} t_N^{-1/2} s^{-m} t^{-(i - 1)}.
    \label{eq:ynegm}
\end{equation}

    \item
For $m \ge 0$, $\lambda = \left(m^N\right) = (m, \ldots, m)$,
\begin{equation}
    y(\lambda)_i = t_0^{1/2} t_N^{1/2} s^{m} t^{N - i}.
    \label{eq:yposm}
\end{equation}
\end{itemize}
However, other non-symmetric Koornwinder polynomials will appear implicitly, and we define the following
space:
\begin{definition}
For a partition $\lambda$ of length $N$, define $\mathcal{R}^\lambda$ as the space spanned by
$\{E_\mu | \mu \in \mathbb{Z}^N, \mu^+ = \lambda\}$.
\end{definition}

\subsubsection{Symmetric Koornwinder polynomials}

The symmetric Koornwinder polynomials were introduced in \cite{Koornwinder92},
 as eigenfunctions of the $s$-difference operator
\begin{equation} \label{eq:operatorD}
 D=\sum_{i=1}^{N} g_i(\mathbf{x})(T_{s,i}-1)+\sum_{i=1}^{N} g_i(\mathbf{x}^{-1})(T_{s,i}^{-1}-1),
\end{equation}
where $g_i(\mathbf{x})$ is defined by
\begin{equation} \label{eq:function_g_i}
 g_i(\mathbf{x})= \frac{(1-ax_i)(1-bx_i)(1-cx_i)(1-dx_i)}{(1-x_i^2)(1-sx_i^2)}
 \prod_{j=1\\j\neq i}^{N} \frac{(1-tx_ix_j^{-1})(1-tx_ix_j)}{(1-x_ix_j^{-1})(1-x_ix_j)},
\end{equation}
and $T_{s,i}$ is the $i^{th}$ $s$-shift operator
\begin{equation}
 T_{s,i}f(x_1,\dots,x_i,\dots,x_N)=f(x_1,\dots,sx_i,\dots,x_N).
\end{equation}

\begin{definition}
For a partition $\lambda$, the symmetric Koornwinder polynomial $P_\lambda(\mathbf{x})$ is characterised by
the eigenvalue equation
\begin{equation} \label{eq:eigenvalue_eq_symmetric_Koornwinder}
 D P_\lambda = d_\lambda P_\lambda,
\end{equation}
with eigenvalue
\begin{equation}
 d_\lambda = \sum_{i=1}^{N} \left[t_0 t_N t^{2N-i-1}(s^{\lambda_i}-1)+t^{i-1}(s^{-\lambda_i}-1)\right],
\end{equation}
and where the coefficient of $\mathbf{x}^\lambda$ in $P_\lambda$ is equal to $1$.
\end{definition}

The symmetric Koornwinder polynomials are $W_0$-invariant (that is, invariant under the action of $s_1,
\ldots, s_N$, defined in \eqref{eq:sidef}), and their relation to the non-symmetric Koornwinder polynomials
was given in \cite{Sahi99}.
\begin{theorem}[Corollary 6.5 of \cite{Sahi99}]
The symmetric Koornwinder polynomial $P_\lambda$ can be characterised as the unique $W_0$-invariant polynomial in
$\mathcal{R}^\lambda$ which has the coefficient of $\mathbf{x}^\lambda$ equal to 1.
\label{theorem:SahiSymmetric}
\end{theorem}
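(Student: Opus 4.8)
The plan is to derive the characterisation from two facts: (i) the space $(\mathcal{R}^\lambda)^{W_0}$ of $W_0$-invariant elements of $\mathcal{R}^\lambda$ is at most one-dimensional, and (ii) $P_\lambda$ itself lies in $\mathcal{R}^\lambda$. Since $P_\lambda$ is $W_0$-invariant, nonzero, and has the coefficient of $\mathbf{x}^\lambda$ equal to $1$, (i) and (ii) together give exactly the claimed statement, the normalisation pinning down the unique admissible scalar multiple. I would work over the generic field $\mathbb{F}$, so that the non-degeneracy needed is an open condition, and whenever a claim is an identity of rational functions in the parameters I would verify it generically and specialise.

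For (i) I would use the dominance filtration. For a partition $\nu$ of length $N$ set $\mathcal{R}_{\le\nu}=\operatorname{span}\{\mathbf{x}^\mu:\mu^+\le\nu\}$ and $\mathcal{R}_{<\nu}$ with strict inequality; both are finite-dimensional. Because $E_\mu=\mathbf{x}^\mu+\sum_{\mu'\prec\mu}c_{\mu\mu'}\mathbf{x}^{\mu'}$ and $\mu'\prec\mu$ forces $(\mu')^+\le\mu^+$, the family $\{E_\mu:\mu^+\le\nu\}$ is a unitriangular rewriting of $\{\mathbf{x}^\mu:\mu^+\le\nu\}$, so it is a basis of $\mathcal{R}_{\le\nu}$ and hence $\mathcal{R}_{\le\nu}=\bigoplus_{\rho\le\nu}\mathcal{R}^\rho$; in particular $\mathcal{R}_{\le\nu}=\mathcal{R}_{<\nu}\oplus\mathcal{R}^\nu$. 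The $W_0$-invariant Laurent polynomials supported inside $\{\mathbf{x}^\mu:\mu^+\le\nu\}$ are precisely the span of the orbit sums $m_\rho=\sum_{\mu\in W_0\rho}\mathbf{x}^\mu$ with $\rho\le\nu$, so $\dim(\mathcal{R}_{\le\nu})^{W_0}=\#\{\rho\le\nu\}$ and $\dim(\mathcal{R}_{<\nu})^{W_0}=\#\{\rho<\nu\}$; since $(\mathcal{R}_{<\nu})^{W_0}$ and $(\mathcal{R}^\nu)^{W_0}$ lie inside $(\mathcal{R}_{\le\nu})^{W_0}$ with zero intersection, $\dim(\mathcal{R}^\nu)^{W_0}\le\#\{\rho\le\nu\}-\#\{\rho<\nu\}=1$.

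For (ii) I would first invoke the standard dominance-triangularity $P_\lambda=m_\lambda+\sum_{\rho<\lambda}c_\rho m_\rho$, so $P_\lambda\in\mathcal{R}_{\le\lambda}$. The essential input is the known fact (see \cite{Noumi95,Sahi99,Stokman00}) that on $W_0$-invariant Laurent polynomials the operator $D$ coincides, up to an affine renormalisation, with $\sum_{i=1}^N(Y_i+Y_i^{-1})$; hence $D$ lies in the commutative algebra generated by the $Y_i$, so it preserves each summand $\mathcal{R}^\rho$ and acts there by a scalar $\widetilde d_\rho$, equal to $\sum_i(y(\rho)_i+y(\rho)_i^{-1})$ up to the same renormalisation, and for generic parameters the scalars $\widetilde d_\rho$ are pairwise distinct. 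Writing $P_\lambda=\sum_{\rho\le\lambda}g_\rho$ with $g_\rho\in\mathcal{R}^\rho$, only $g_\lambda$ can contain the monomial $\mathbf{x}^\lambda$ (every $\mathbf{x}^\mu$ occurring in $g_\rho$ has $\mu^+\le\rho$), so $g_\lambda\ne 0$ with $\mathbf{x}^\lambda$-coefficient $1$. Applying $DP_\lambda=d_\lambda P_\lambda$ and comparing components in $\bigoplus_{\rho\le\lambda}\mathcal{R}^\rho$ gives $(\widetilde d_\rho-d_\lambda)g_\rho=0$ for all $\rho\le\lambda$; since $g_\lambda\ne 0$ this forces $\widetilde d_\lambda=d_\lambda$, and then $\widetilde d_\rho\ne\widetilde d_\lambda=d_\lambda$ for $\rho<\lambda$ forces $g_\rho=0$. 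Thus $P_\lambda=g_\lambda\in\mathcal{R}^\lambda$, and together with (i) this yields $(\mathcal{R}^\lambda)^{W_0}=\mathbb{F}\,P_\lambda$, which is the theorem.

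The main obstacle is concentrated in the single imported fact that $D$, restricted to $W_0$-invariants, belongs to the commutative subalgebra generated by the Cherednik operators $Y_i$ — equivalently, that the non-symmetric Koornwinder polynomials block-diagonalise $D$. This is exactly what makes the eigenvalue bookkeeping collapse the $E_\mu$-expansion of $P_\lambda$ onto a single block, and proving it from scratch needs Noumi's explicit computation (or the Macdonald--Koornwinder duality machinery); in practice one simply cites it. The remaining points are routine genericity checks — distinctness of the $\widetilde d_\rho$, visible from the explicit form of the $Y_i$-eigenvalues, and the dominance-triangularity of $P_\lambda$ — handled by the "rational identity, then specialise" principle above.
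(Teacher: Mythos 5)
The paper offers no proof of this statement at all: it is imported verbatim as Corollary~6.5 of \cite{Sahi99}, so the only meaningful comparison is with Sahi's own argument. Your route is genuinely different from his. Sahi obtains $P_\lambda$ by applying the Hecke symmetriser $\sum_{w\in W_0} t_w T_w$ to $E_\lambda$ and shows directly that the symmetrised image of the block $\mathcal{R}^\lambda$ is one-dimensional; you instead split the claim into (i) $\dim(\mathcal{R}^\lambda)^{W_0}\le 1$ via the dominance filtration and a count of orbit sums, which is clean and correct, and (ii) $P_\lambda\in\mathcal{R}^\lambda$ via eigenvalue separation for the Koornwinder operator. Your approach has the virtue of using only the two definitions actually stated in the paper (the $Y_i$-eigenfunction characterisation of the $E_\mu$ and the $D$-eigenfunction characterisation of $P_\lambda$), at the cost of importing the Noumi identity that $D$, restricted to $W_0$-invariants, is an affine function of $\sum_i(Y_i+Y_i^{-1})$ --- which you correctly flag as the one non-routine input; it is standard and citable. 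One phrasing should be tightened: it is not true that ``$D$ preserves each summand $\mathcal{R}^\rho$'' --- the operator $D$ of \eqref{eq:operatorD} does not even map non-symmetric Laurent polynomials to Laurent polynomials, since the denominators of the $g_i$ only cancel after symmetrisation. The argument must be run with $\widetilde D=c_1\sum_i(Y_i+Y_i^{-1})+c_0$, which does act on each $\mathcal{R}^\rho$ by the scalar $\widetilde d_\rho$ (this scalar depends only on $\rho=\mu^+$ because the spectrum $\{y(\mu)_i\}$ is a signed permutation of $\{y(\rho)_i\}$), and the agreement $D P_\lambda=\widetilde D P_\lambda$ is invoked only on the invariant element $P_\lambda$ before projecting onto the direct summands. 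With that substitution the eigenvalue comparison $(\widetilde d_\rho-d_\lambda)g_\rho=0$ goes through exactly as you wrote it, and the genericity of the distinctness of the $\widetilde d_\rho$ is visible from the explicit eigenvalues.
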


In order to make contact between Koornwinder polynomials and the ground state of the current-counting
deformation of the open ASEP, we need to exploit the integrable structure of the underlying the physical model,
which we now review.

\subsection{Structure of integrability}
The ASEP is an integrable model -- the deformed transition matrix $M(\xi)$ belongs to an infinite family of
commuting matrices \cite{Sklyanin88} (see also \cite{CrampeRV14,CantiniGDGW16}). 
The generating function of these commuting matrices is called the transfer matrix. The key ingredients to construct this transfer matrix are
matrices $\check{R}(x)$, $K(x)$, and $\overline{K}(x)$ satisfying the Yang--Baxter relation
\begin{equation*}
    \check{R}_i(x_2/x_3) \check{R}_{i+1}(x_1/x_3) \check{R}_i(x_1/x_2)
    =
    \check{R}_{i+1}(x_1/x_2) \check{R}_i(x_1/x_3) \check{R}_{i+1}(x_2/x_3),
\end{equation*}
left and right reflection relations
\begin{align*}
    \check{R}_1(x_2/x_1) K_1(x_2) \check{R}_1(x_1 x_2) K_1(x_1)
    & =
    K_1(x_1) \check{R}_1(x_1 x_2) K_1(x_2) \check{R}_1(x_2/x_1),
    \\
    \check{R}_1(x_1/x_2) \overline{K}_2(1/x_1) \check{R}_1(x_1 x_2) \overline{K}_2(1/x_2)
    & =
    K_2(1/x_2) \check{R}_1(x_1 x_2) K_2(1/x_1) \check{R}_1(x_1/x_2),
\end{align*}
and the unitarity conditions
\begin{equation*}
    \check{R}_i(x) \check{R}_i\left(x^{-1}\right) = 1,
    \qquad
    K(x) K\left(x^{-1}\right) = 1,
    \qquad
    \overline{K}(x) \overline{K}\left(x^{-1}\right) = 1.
\end{equation*}

For the open boundary ASEP the matrices are given by
\begin{align}
    \check{R}(x) & = 1 + r(x) w,
    \label{eq:Rcheck}
    \\
    K(x; \xi) & = 1 + k(x; t_0^{1/2},u_0^{1/2}) B(\xi),
    \label{eq:K}
    \\
    \overline{K}(x) & = 1 + k(x^{-1}; t_N^{1/2}, u_N^{1/2}) \overline{B},
    \label{eq:Kbar}
\end{align}
where
\begin{equation}
    r(x) = \frac{x - 1}{t^{-1/2} x - t^{1/2}},
    \qquad
    k(x; t_i^{1/2}, u_i^{1/2}) = \frac{x^2 - 1}{t_i^{-1/2} x^2 - (u_i^{1/2} - u_i^{-1/2}) x - t_i^{1/2}},
\end{equation}
and $w$, $B(\xi)$, $\overline{B}$ are written in terms of the Hecke parameters as
\begin{equation}
B(\xi) = \begin{pmatrix}
    -t_0^{1/2} & \xi^{-1} t_0^{-1/2} \\
    \xi t_0^{1/2} & -t_0^{-1/2}
\end{pmatrix},
\qquad
\overline{B} = \begin{pmatrix}
    -t_N^{-1/2} & t_N^{1/2} \\
    t_N^{-1/2} & -t_N^{1/2}
\end{pmatrix},
\qquad
w = \begin{pmatrix}
        0 & 0 & 0 & 0
        \\
        0 & -t^{-1/2} & t^{1/2} & 0
        \\
        0 & t^{-1/2} & -t^{1/2} & 0
        \\
        0 & 0 & 0 & 0
\end{pmatrix}.
\label{eq:BBBarwHecke}
\end{equation}
We will write $K(x)$ for $K(x; \xi)$, except when it is necessary to distinguish between values of $\xi$.
Note that the local (physical) transition matrices are obtained from \eqref{eq:Rcheck} -- \eqref{eq:Kbar} through
\begin{equation*}
    \sqrt{p q} w = (q-p)\check{R}'(1),
    \qquad
    \sqrt{\alpha \gamma} B(\xi) = \frac{1}{2}(q - p) K'(1; \xi),
    \qquad
    \sqrt{\beta \delta} \overline{B} = -\frac{1}{2}(q - p) \overline{K}'(1),
\end{equation*}
and the Gallavotti--Cohen symmetry on the transition matrices now implies that
\begin{equation*}
\begin{aligned}
    \check{R}_i(x) & = U_{\text{GC}} \check{R}_i(x)^T U_{\text{GC}}^{-1},
    \\
    K_1(x; \xi') & = U_{\text{GC}} K_1(x; \xi)^T U_{\text{GC}}^{-1},
    \\
    \overline{K}_N(x) & = U_{\text{GC}} \overline{K}_N(x)^T U_{\text{GC}}^{-1},
\end{aligned}
\end{equation*}
with $\xi'$ and $U_{\text{GC}}$ defined in \eqref{eq:GCsym} and \eqref{eq:UGC} respectively.

\section{Twice deformed inhomogeneous ground state} \label{sec:twice_deformed}

\subsection{Scattering matrices}
Instead of the transfer matrix approach, one can define scattering matrices \cite{ZinnJustin07,Cantini15},
although the two methods are closely related.  We first define a modified left boundary matrix
\begin{equation}
    \widetilde{K}(x) = K(s^{-1/2}x),
    \label{eq:Ktilde}
\end{equation}
in order to introduce the Hecke parameter $s$.  The matrix $\widetilde{K}(x)$ satisfies
deformed unitary and reflection relations
\begin{align*}
    \widetilde{K}(s x) \widetilde{K}\left(x^{-1}\right) & = 1,
    \\
    \check{R}_1(x_2/x_1) \widetilde{K}_1(x_2) \check{R}_1(s^{-1} x_1 x_2) \widetilde{K}_1(x_1)
    & =
    \widetilde{K}_1(x_1) \check{R}_1(s^{-1} x_1 x_2) \widetilde{K}_1(x_2) \check{R}_1(x_2/x_1),
\end{align*}
and has the Gallavotti--Cohen symmetry
\begin{equation*}
    \widetilde{K}_1(x; \xi') = U_{\text{GC}} \widetilde{K}_1(x; \xi)^T U_{\text{GC}}^{-1}.
\end{equation*}

For $1 \le i \le N$, define the scattering matrices
\begin{equation}
\begin{aligned}
    \mathcal{S}_i(\mathbf{x}) =
        & \check{R}_{i-1}\left(\frac{x_{i-1}}{sx_{i}}\right) \ldots \check{R}_1\left(\frac{x_1}{sx_i}\right)
        \\
        & \cdot \widetilde{K}_1\left(\frac{1}{x_i}\right)
        \\
        & \cdot \check{R}_1\left(\frac{1}{x_i x_1}\right) \ldots \check{R}_{i-1}\left(\frac{1}{x_i x_{i-1}}\right)
          \cdot \check{R}_i\left(\frac{1}{x_i x_{i+1}}\right) \ldots \check{R}_{N-1}\left(\frac{1}{x_i x_N}\right)
        \\
        & \cdot \overline{K}_N(x_i)
        \\
        & \cdot \check{R}_{N-1}\left(\frac{x_N}{x_i}\right) \ldots \check{R}_i\left(\frac{x_{i+1}}{x_{i}}\right).
\end{aligned}
\label{eq:scatteringmatrix}
\end{equation}
Using the Yang--Baxter, reflection relations, and unitarity, we see that the scattering matrices satisfy a
deformed commutation relation
\begin{equation*}
    \mathcal{S}_i(\ldots, s x_j, \ldots) \mathcal{S}_j(x_1, \ldots, x_N)
    =
    \mathcal{S}_j(\ldots, s x_i, \ldots) \mathcal{S}_i(x_1, \ldots, x_N).
\end{equation*}
With $s = 1$, $[\mathcal{S}_i(\mathbf{x}), \mathcal{S}_j(\mathbf{x})] = 0$ for all $i$, $j$, and in fact there
is a direct relation to the transfer matrix approach \cite{Sklyanin88,CrampeRV14,CantiniGDGW16}
\footnote{Note that for $s \neq1$ there is no obvious link between the scattering matrices and the usual transfer matrix, as far as we know.}: 
\begin{equation}
 \mathcal{S}_i(\mathbf{x})=t(x_i | \mathbf{x}),
\end{equation}
where $t(z | \mathbf{x})$ is the usual transfer matrix with spectral parameter $z$ and inhomogeneity parameters $ \mathbf{x}=x_1,\dots,x_N$
(the reader may refer to \cite{Sklyanin88,CrampeRV14,CantiniGDGW16} for a precise definition).
At $s=1$, we also have
the important relations
\begin{equation} \label{eq:properties_scatteringmatrices}
    \mathcal{S}_i(\mathbf{x})|_{s = x_1 = \ldots = x_N = 1} = 1,
    \qquad
    \frac{\partial}{\partial x_i} \mathcal{S}_i(\mathbf{x})|_{s = x_1 = \ldots = x_N = 1}
    = \frac{2}{p-q} M(\xi).
\end{equation}

Considering $s$ general again, we would like to find solutions of the scattering relation
\begin{equation}
    \mathcal{S}_i(\mathbf{x}) \ket{\Psi(\ldots, x_i, \ldots)} = \ket{\Psi(\ldots, s x_i, \ldots)},
    \label{eq:scattering}
\end{equation}
where
\begin{equation}
    \ket{\Psi(\mathbf{x})} = \sum_{\bm\tau} \psi_{\bm\tau}(\mathbf{x}) \ket{\bm\tau}.
\end{equation}
Taking the derivative of \eqref{eq:scattering} with respect to $x_i$ and specialising with $x_1 = \ldots = x_N
= s = 1$, this would imply
\begin{equation}
    M(\xi) \ket{\Psi(\mathbf{1})} = 0.
    \label{eq:MxiOnPsi}
\end{equation}
For $\xi = 1$ this is the unnormalised stationary vector of the ASEP with eigenvalue 0.  For $\xi \ne 1$, the
ground state eigenvalue is non-zero, and so \eqref{eq:MxiOnPsi} should not have a solution at this point (that
is $s = 1$, $\xi \ne 1$).  However, in section~\ref{sec:currentKoornwinder} we will discuss how there could be
a solution of \eqref{eq:scattering} for $s \to 1$, $\xi \ne 1$, and how it relates to the current-counting
eigenvalue.

\subsection{$q$KZ equations}
It can be checked directly that sufficient conditions for a solution of the scattering relation
\eqref{eq:scattering} are
\begin{align}
    & \check{R}_i(x_{i+1}/x_{i}) \ket{\Psi(\ldots, x_i, x_{i+1}, \ldots)} = \ket{\Psi(\ldots, x_{i+1}, x_i, \ldots)},
      \qquad 1 \le i \le N - 1,
      \label{eq:qKZi}
    \\
    & \widetilde{K}_1(x_1^{-1}) \ket{\Psi(x_1^{-1}, x_2, \ldots)} = \ket{\Psi(s x_1, x_2, \ldots) },
      \label{eq:qKZ1}
    \\
    & \overline{K}_N(x_N) \ket{\Psi(\ldots, x_{N-1}, x_N)} = \ket{\Psi(\ldots, x_{N-1}, 1/x_N) }.
      \label{eq:qKZN}
\end{align}
Note that the Yang--Baxter, reflection, and unitary conditions ensure the consistency of this definition.  We
will refer to \eqref{eq:qKZi} -- \eqref{eq:qKZN}  as the $q$KZ equations, although in our notation the $q$ has
been replaced by the parameter $s$.  These $q$-difference equations were first introduced in \cite{FrenkelR92}
and appear as $q$-deformation of the KZ equations \cite{KnizhnikZ84}.

Motivated by the connection to the ASEP stationary state, we make the following definition:
\begin{definition}
We call a solution
\begin{equation*}
    \ket{\Psi(\mathbf{x}; s, \xi)} = \sum_{\bm\tau} \psi_{\bm\tau}(\mathbf{x}; s, \xi) \ket{\bm\tau}
\end{equation*}
of equations \eqref{eq:qKZi} -- \eqref{eq:qKZN} a twice deformed inhomogeneous ground state vector, with deformation
parameters $s$ and $\xi$.
\end{definition}
As indicated at the end of the previous section, such a vector with $s = \xi = 1$ is the inhomogeneous ground
state vector of the open boundary ASEP, and can be constructed in matrix product form \cite{CrampeRV14,CrampeMRV15inhomogeneous} or
from specialised non-symmetric Koornwinder polynomials \cite{CantiniGDGW16}.  We will show that more general
solutions exist when $s$ and $\xi$ are related in certain ways.

We use the Noumi representation of the Hecke algebra to write the $q$KZ equations in component form.  To
specify a lattice configuration $\bm\tau$ we use `$\circ$' for an empty site ($\tau_i = 0$) and `$\bullet$'
for a filled site ($\tau_i = 1$).  Then, for example, we write $\psi_{\circ\ldots}$ to indicate the weight for
any configuration with the first site empty ($\tau_1 = 0$)
\begin{lemma}
The $q$KZ equations \eqref{eq:qKZi} -- \eqref{eq:qKZN} for the deformed ground state vector are equivalent to
the following exchange relations on the components:
\begin{align}
    T_0 \psi_{\circ \ldots} & = \xi^{-1} t_0^{-1/2} \psi_{\bullet \ldots},
    \label{eq:qkzExchangeLeft}
    \\
    T_N \psi_{\ldots \bullet} & = t_N^{-1/2} \psi_{\ldots \circ},
    \label{eq:qkzExchangeRight}
\end{align}
and for $1 \le i \le N - 1$,
\begin{align}
    T_i \psi_{\ldots \circ \circ \ldots} & = t^{1/2} \psi_{\ldots \circ \circ \ldots},
    \label{eq:qkzExchangeBulk00}
    \\
    T_i \psi_{\ldots \bullet \bullet \ldots} & = t^{1/2} \psi_{\ldots \bullet \bullet \ldots},
    \label{eq:qkzExchangeBulk11}
    \\
    T_i \psi_{\ldots \bullet \circ \ldots} & = t^{-1/2} \psi_{\ldots \circ \bullet \ldots},
    \label{eq:qkzExchangeBulk01}
\end{align}
where the marked sites are in positions $i$, $i+1$.
\label{lemma:rightExchange}
\end{lemma}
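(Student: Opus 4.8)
The plan is to rewrite each $q$KZ equation \eqref{eq:qKZi} -- \eqref{eq:qKZN} in component form using the explicit Noumi operators \eqref{eq:Tireps}, and then to match the result term-by-term against the claimed exchange relations \eqref{eq:qkzExchangeLeft} -- \eqref{eq:qkzExchangeBulk01}. The key observation making this work is that the $\check R$, $\widetilde K$, $\overline K$ matrices in \eqref{eq:Rcheck} -- \eqref{eq:Kbar} are affine linear in the single "reflection/exchange" operators $w$, $B(\xi)$, $\overline B$, while the Hecke generators $T_i$ in the Noumi representation are affine linear in $(1-s_i)$; both sides are governed by the \emph{same} rational prefactors $r(x)$, $k(x;\cdot)$ up to the substitution $x\to$ ratio of spectral parameters. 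So the proof is really a dictionary lemma: I would first establish the precise relation
\begin{equation*}
    \check R_i(x_{i+1}/x_i) = \text{(rational)}\cdot\big(\text{something built from } T_i\big),
\end{equation*}
and similarly for the boundary matrices, then feed it into \eqref{eq:qKZi}.

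\textbf{Bulk relations.} First I would treat \eqref{eq:qKZi}. Acting with $\check R_i(x_{i+1}/x_i) = 1 + r(x_{i+1}/x_i)w_{i,i+1}$ on $\ket{\Psi(\mathbf x)}$ and demanding it equal $\ket{\Psi(\ldots,x_{i+1},x_i,\ldots)}$, I project onto the four local configurations at sites $i,i+1$. On $\circ\circ$ and $\bullet\bullet$ the matrix $w$ annihilates, so the equation says $\psi_{\ldots\circ\circ\ldots}$ and $\psi_{\ldots\bullet\bullet\ldots}$ are symmetric under $x_i\leftrightarrow x_{i+1}$; combined with the explicit form of $T_i$ in \eqref{eq:Tireps} (whose $(1-s_i)$ piece kills $s_i$-symmetric functions) this is exactly $T_i\psi = t^{1/2}\psi$, giving \eqref{eq:qkzExchangeBulk00} -- \eqref{eq:qkzExchangeBulk11}. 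On the two-dimensional $\{\bullet\circ,\circ\bullet\}$ block I get a $2\times2$ linear system relating $\psi_{\ldots\bullet\circ\ldots}$, $\psi_{\ldots\circ\bullet\ldots}$ and their $s_i$-images; solving it and comparing with $T_i$ applied to $\psi_{\ldots\bullet\circ\ldots}$, using $r(x)=(x-1)/(t^{-1/2}x-t^{1/2})$ and the coefficient $(t^{1/2}x_i - t^{-1/2}x_{i+1})/(x_i-x_{i+1})$ from \eqref{eq:Tireps}, collapses to \eqref{eq:qkzExchangeBulk01}. This is a short but genuine $2\times 2$ algebra computation.

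\textbf{Boundary relations.} For \eqref{eq:qKZ1}, I act with $\widetilde K_1(x_1^{-1}) = 1 + k(s^{-1/2}x_1^{-1};t_0^{1/2},u_0^{1/2})B(\xi)$, from \eqref{eq:Ktilde}, \eqref{eq:K}, on $\ket{\Psi(x_1^{-1},x_2,\ldots)}$ and project onto $\tau_1\in\{\circ,\bullet\}$. Using the explicit $B(\xi)$ from \eqref{eq:BBBarwHecke} this gives two scalar equations linking $\psi_{\circ\ldots}$, $\psi_{\bullet\ldots}$ at argument $x_1^{-1}$ with their values at $sx_1$; identifying $s x_1^{-1}$ with the action of $s_0$ on $x_1$ (recall $s_0:x_1\to sx_1^{-1}$ from \eqref{eq:sidef}) lets me write this as an equation involving $(1-s_0)\psi$, which I match to $T_0$ in \eqref{eq:Tireps} with $a=s^{1/2}t_0^{1/2}u_0^{1/2}$, $b=-s^{1/2}t_0^{1/2}u_0^{-1/2}$. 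The factor $\xi^{-1}t_0^{-1/2}$ in \eqref{eq:qkzExchangeLeft} traces directly to the off-diagonal entry $\xi^{-1}t_0^{-1/2}$ of $B(\xi)$. The relation \eqref{eq:qKZN} is handled identically with $\overline K_N$, $s_N:x_N\to x_N^{-1}$, $T_N$, and $\overline B$, producing \eqref{eq:qkzExchangeRight}; note there is no $\xi$ here since $\overline B$ is $\xi$-independent.

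\textbf{Main obstacle.} Each individual equivalence is a finite computation; the real care is bookkeeping the arguments of the rational prefactors and the $s_i$-action on them so that $r(\cdot)$ and $k(\cdot)$ in the $\check R$/$K$ matrices line up \emph{exactly} with the rational coefficients multiplying $(1-s_i)$ in \eqref{eq:Tireps} — in particular tracking the shifts by $s$ and $s^{-1/2}$ introduced in $\widetilde K$ and making sure the substitution $x\mapsto x_{i+1}/x_i$ (bulk) or $x\mapsto x_1^{-1}$, $x\mapsto x_N$ (boundary) is consistent on both sides. I also need to verify the converse direction: that the exchange relations \eqref{eq:qkzExchangeLeft} -- \eqref{eq:qkzExchangeBulk01} imply the $q$KZ equations, which amounts to running the same $2\times2$ (and boundary $1\times1$ block, $2$-component) linear algebra backwards — this is automatic once the forward dictionary is established, since at each step I am solving an invertible linear system, so "equivalent" really does hold rather than just one implication.
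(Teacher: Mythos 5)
Your proposal is correct and is essentially the paper's own argument: the paper's proof of Lemma~\ref{lemma:rightExchange} is simply ``This can be checked directly,'' and your component-by-component dictionary between the $\check R$, $\widetilde K$, $\overline K$ matrices and the Noumi operators $T_i$ is exactly that direct check, spelled out. The points you flag as needing care (the rational prefactors lining up, the $s$-shifts in $\widetilde K$, and the invertibility needed for the converse) are the right ones, and none of them hides a gap.
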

\begin{proof}
This can be checked directly.
\end{proof}
Note that the parameters $s$ and $\xi$ both enter through \eqref{eq:qkzExchangeLeft}, with $s$ contained
within the $T_0$ operator.

\begin{lemma}
For any vector $\ket{\Psi(\mathbf{x}; s, \xi)}$ satisfying the $q$KZ equations \eqref{eq:qKZi} --
\eqref{eq:qKZN}, the empty lattice weight $\psi_{\circ \ldots \circ}$ is an eigenfunction of the $Y_i$
operators \eqref{eq:Yi}, satisfying
\begin{equation}
    Y_i \psi_{\circ \ldots \circ}
        = \xi^{-1} t_0^{-1/2} t_N^{-1/2} t^{-(i-1)} \psi_{\circ \ldots \circ}.
    \label{eq:asepEigenref}
\end{equation}
\label{lemma:asepEigenref}
\end{lemma}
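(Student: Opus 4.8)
The plan is to evaluate $Y_i\,\psi_{\circ\ldots\circ}$ head-on, letting the Hecke generators in the word \eqref{eq:Yi}, namely $Y_i = T_i\cdots T_{N-1}\,T_N T_{N-1}\cdots T_1 T_0\,T_1^{-1}\cdots T_{i-1}^{-1}$, act one at a time from right to left, and tracking both the lattice configuration and the accumulated scalar using the exchange relations of Lemma~\ref{lemma:rightExchange}. The one preliminary observation is the action of the inverse bulk generators on empty pairs: from the quadratic relation $(T_j - t^{1/2})(T_j + t^{-1/2})=0$ one gets $T_j^{-1} = T_j - (t^{1/2}-t^{-1/2})$ for $1\le j\le N-1$, so combining with \eqref{eq:qkzExchangeBulk00} gives $T_j^{-1}\psi_{\ldots\circ\circ\ldots} = t^{-1/2}\psi_{\ldots\circ\circ\ldots}$ when sites $j,j+1$ are empty.

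Then I would run through the word as follows. First, the block $T_1^{-1}\cdots T_{i-1}^{-1}$ acts on $\psi_{\circ\ldots\circ}$; each generator sees a pair of empty sites, contributes $t^{-1/2}$ by the preliminary observation, and leaves the configuration unchanged, producing $t^{-(i-1)/2}\,\psi_{\circ\ldots\circ}$. Next, $T_0$ creates a particle at site $1$ via \eqref{eq:qkzExchangeLeft}, giving $\xi^{-1} t_0^{-1/2}\, t^{-(i-1)/2}\,\psi_{\bullet\circ\ldots\circ}$. Then the generators $T_1, T_2, \ldots, T_{N-1}$ act in turn: at each stage the lone particle sits at site $j$ with site $j+1$ empty, so \eqref{eq:qkzExchangeBulk01} contributes $t^{-1/2}$ and shifts the particle one step to the right; after all $N-1$ of them the particle is at site $N$ and an extra $t^{-(N-1)/2}$ has been accumulated. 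The generator $T_N$ then removes this particle via \eqref{eq:qkzExchangeRight}, contributing $t_N^{-1/2}$ and restoring the fully empty configuration. Finally the remaining block $T_i\cdots T_{N-1}$ acts on pairs of empty sites, contributing $t^{(N-i)/2}$ by \eqref{eq:qkzExchangeBulk00}. Collecting the factors yields $\xi^{-1} t_0^{-1/2} t_N^{-1/2}\, t^{\,-(i-1)/2-(N-1)/2+(N-i)/2}\,\psi_{\circ\ldots\circ}$, and the exponent of $t$ telescopes to $-(i-1)$, which is exactly \eqref{eq:asepEigenref}. The extreme cases are handled by the same bookkeeping: for $i=1$ the block of inverse generators is empty, and for $i=N$ the leftmost block $T_i\cdots T_{N-1}$ is empty.

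I do not expect a genuine obstacle here — the content is purely combinatorial bookkeeping, and indeed the paper's own proof is likely just ``this can be checked directly.'' The only point requiring care is respecting the right-to-left order of application and verifying at each step that the pair of sites on which a given $T_j$ acts is in one of the configurations ($\circ\circ$, $\bullet\circ$, or a single boundary $\bullet$) covered by Lemma~\ref{lemma:rightExchange}; this is automatic because the single particle introduced by $T_0$ is transported monotonically rightward and then annihilated at site $N$ by $T_N$, so all intermediate configurations are completely determined.
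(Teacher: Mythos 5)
Your proposal is correct and is precisely the "direct computation with the exchange relations" that the paper leaves to the reader: the bookkeeping of the single particle created by $T_0$, transported rightward by $T_1,\ldots,T_{N-1}$, annihilated by $T_N$, together with the scalar factors $t^{-(i-1)/2}$, $t^{-(N-1)/2}$, $t^{(N-i)/2}$ from the empty-pair and hopping relations, telescopes to the stated eigenvalue. The observation $T_j^{-1}\psi_{\ldots\circ\circ\ldots}=t^{-1/2}\psi_{\ldots\circ\circ\ldots}$ via the quadratic relation is exactly the right way to handle the inverse generators, and the order-of-application issues are handled correctly.
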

\begin{proof}
This follows by direct computation with the exchange relations in Lemma~\ref{lemma:rightExchange}.
\end{proof}

Lemma~\ref{lemma:asepEigenref} immediately suggests the connection to the non-symmetric Koornwinder polynomials:
\begin{enumerate}
    \item
Taking $\xi = s^m$, $m > 0$, the eigenvalue in \eqref{eq:asepEigenref} is given by \eqref{eq:ynegm},
corresponding to the non-symmetric Koornwinder polynomial labelled by the composition $\left((-m)^N \right)$.

    \item
Taking $\xi = t_0^{-1} t_N^{-1} t^{-(N-1)} s^{-m}$, $m \ge 0$, the eigenvalue instead corresponds to
\eqref{eq:yposm}, for the composition $\left(m^N\right)$.
\end{enumerate}
Moreover, note that case $2$ is obtained from case $1$ by sending
\begin{equation*}
    \xi \to t_0^{-1} t_N^{-1} t^{-(N-1)} \xi^{-1},
\end{equation*}
which is exactly the Gallavotti--Cohen symmetry \eqref{eq:GCsym}.
  In section~\ref{sec:matrixproduct} we will
give a direct matrix product construction of the inhomogeneous ground state vector for case 1, that is $\xi =
s^m$.  To solve case 2, we will use the Gallavotti--Cohen symmetry on solutions of \emph{left} $q$KZ
equations, which we will present next.
We note that an alternative approach, as followed in \cite{Kasatani10,CantiniGDGW16}, would be to take $\psi_{\circ\ldots\circ}$
as the non-symmetric Koornwinder polynomial given in case 1 or case 2, then show that a solution of the
exchange relations \eqref{eq:qkzExchangeLeft} -- \eqref{eq:qkzExchangeBulk01} can be constructed from this
reference state.

\subsection{Left $q$KZ equations}
We define left $q$KZ equations
\begin{align}
    & \bra{\Phi(\ldots, x_i, x_{i+1}, \ldots)} \check{R}_i(x_{i+1}/x_{i}) = \bra{\Phi(\ldots, x_{i+1}, x_i, \ldots)},
      \qquad 1 \le i \le N - 1,
      \label{eq:leftqKZi}
    \\
    & \bra{\Phi(x_1^{-1}, x_2, \ldots)} \widetilde{K}_1(x_1^{-1}) = \bra{\Phi(s x_1, x_2, \ldots) },
      \label{eq:leftqKZ1}
    \\
    & \bra{\Phi(\ldots, x_{N-1}, x_N)} \overline{K}_N(x_N) = \bra{\Phi(\ldots, x_{N-1}, 1/x_N) },
      \label{eq:leftqKZN}
\end{align}
with
\begin{equation}
    \bra{\Phi(\mathbf{x})} = \sum_{\bm\tau} \phi_{\bm\tau}(\mathbf{x}) \bra{\bm\tau}.
    \label{eq:vecPhi}
\end{equation}
These would imply a solution of a left scattering equation (analogous to \eqref{eq:scattering}) with a
scattering matrix, defined by reversing the order of matrices in the definition \eqref{eq:scatteringmatrix}.
The two following lemmas are analogous to lemmas \ref{lemma:rightExchange} and \ref{lemma:asepEigenref}.

\begin{lemma}
The left $q$KZ equations \eqref{eq:leftqKZi} -- \eqref{eq:leftqKZN} for a vector of form \eqref{eq:vecPhi}
are equivalent to the following exchange relations on the components:
\begin{align}
    T_0 \phi_{\circ \ldots} & = \xi t_0^{1/2} \phi_{\bullet \ldots},
    \label{eq:leftqkzExchangeLeft}
    \\
    T_N \phi_{\ldots \bullet} & = t_N^{1/2} \phi_{\ldots \circ},
    \label{eq:leftqkzExchangeRight}
\end{align}
and for $1 \le i \le N - 1$,
\begin{align}
    T_i \phi_{\ldots \circ \circ \ldots} & = t^{1/2} \phi_{\ldots \circ \circ \ldots},
    \label{eq:leftqkzExchangeBulk00}
    \\
    T_i \phi_{\ldots \bullet \bullet \ldots} & = t^{1/2} \phi_{\ldots \bullet \bullet \ldots},
    \label{eq:leftqkzExchangeBulk11}
    \\
    T_i \phi_{\ldots \bullet \circ \ldots} & = t^{1/2} \phi_{\ldots \circ \bullet \ldots},
    \label{eq:leftqkzExchangeBulk01}
\end{align}
where the marked sites are in positions $i$, $i+1$.
\label{lemma:leftqkzExchange}
\end{lemma}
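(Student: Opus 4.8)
The plan is to obtain the left-hand exchange relations from the right-hand ones already established in Lemma~\ref{lemma:rightExchange}, by conjugating with the Gallavotti--Cohen operator $U_{\text{GC}}$ rather than repeating the component computation; a direct verification in the style of Lemma~\ref{lemma:rightExchange} is of course also available.

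First I would transpose each left $q$KZ equation \eqref{eq:leftqKZi}--\eqref{eq:leftqKZN}, using the convention $\bra{\cdot}^T = \ket{\cdot}$. Since each equation carries a single matrix factor to the right of $\bra{\Phi}$, transposition turns that factor into its transpose acting on the left of $\ket{\Phi}$; for instance \eqref{eq:leftqKZ1} becomes $\widetilde{K}_1(x_1^{-1};\xi)^T\ket{\Phi(x_1^{-1},x_2,\ldots)} = \ket{\Phi(sx_1,x_2,\ldots)}$. Then I would insert the Gallavotti--Cohen relations in the form $\check{R}_i(x)^T = U_{\text{GC}}^{-1}\check{R}_i(x)U_{\text{GC}}$, $\widetilde{K}_1(x;\xi)^T = U_{\text{GC}}^{-1}\widetilde{K}_1(x;\xi')U_{\text{GC}}$, $\overline{K}_N(x)^T = U_{\text{GC}}^{-1}\overline{K}_N(x)U_{\text{GC}}$, with $\xi'$ as in \eqref{eq:GCsym}, and multiply through by $U_{\text{GC}}$. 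The transposed left equations then become exactly the right $q$KZ equations \eqref{eq:qKZi}--\eqref{eq:qKZN} for the vector $U_{\text{GC}}\ket{\Phi(\mathbf{x};s,\xi)}$, with the left-boundary parameter $\xi$ replaced by $\xi'$ (the bulk and right-boundary equations are untouched, $\check{R}$ and $\overline{K}$ being $\xi$-independent). Lemma~\ref{lemma:rightExchange}, applied at $\xi'$, therefore yields the exchange relations \eqref{eq:qkzExchangeLeft}--\eqref{eq:qkzExchangeBulk01} for the components $\psi'_{\bm\tau}$ of $U_{\text{GC}}\ket{\Phi}$.

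It then remains to translate back to $\phi_{\bm\tau}$. As $U_{\text{GC}}$ is diagonal in the $\ket{\bm\tau}$ basis with eigenvalue $\prod_{i=1}^N\big(t_N^{-1}t^{-(N-i)}\big)^{\tau_i}$ on $\ket{\bm\tau}$, we have $\psi'_{\bm\tau} = \big(\prod_i (t_N^{-1}t^{-(N-i)})^{\tau_i}\big)\phi_{\bm\tau}$; these scalar prefactors commute past the $T_j$, and in each exchange relation only the sites whose occupation actually changes leave a net factor. Writing $\xi'^{-1} = t_0 t_N t^{N-1}\xi$ and cancelling the accumulated powers of $t_0$, $t_N$, $t$, the $\xi'$-relations for $\psi'$ become the claimed relations for $\phi$: for example $T_0\psi'_{\circ\ldots} = \xi'^{-1}t_0^{-1/2}\psi'_{\bullet\ldots}$ gives $T_0\phi_{\circ\ldots} = \xi t_0^{1/2}\phi_{\bullet\ldots}$, while $T_i\psi'_{\ldots\bullet\circ\ldots} = t^{-1/2}\psi'_{\ldots\circ\bullet\ldots}$ gives $T_i\phi_{\ldots\bullet\circ\ldots} = t^{1/2}\phi_{\ldots\circ\bullet\ldots}$, the diagonal ($\circ\circ$, $\bullet\bullet$) and right-boundary relations following in the same way.

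I expect the only delicate point to be exactly this last bookkeeping, concentrated at the left boundary, where both the $s$-shift built into $\widetilde{K}_1$ (and into $s_0$) and the $\xi$-dependence built into $B(\xi)$ are in play: one must check that the powers of $t_0$, $t_N$, $t$ harvested from $U_{\text{GC}}$ precisely convert the $\xi^{-1}t_0^{-1/2}$ coefficient of the right case into the $\xi t_0^{1/2}$ coefficient of the left case. Everything else is routine, and indeed the whole lemma can alternatively be confirmed by a direct component computation from \eqref{eq:Rcheck}--\eqref{eq:Kbar}, \eqref{eq:BBBarwHecke} and the Noumi representation \eqref{eq:Tireps}, exactly as for Lemma~\ref{lemma:rightExchange}.
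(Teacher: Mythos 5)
Your argument is correct, but it is not the route the paper takes: the paper gives no explicit proof here, declaring the lemma ``analogous'' to Lemma~\ref{lemma:rightExchange}, whose own proof is a direct component-by-component verification in the Noumi representation. You instead transpose the left $q$KZ equations, conjugate by $U_{\text{GC}}$ to land on the right $q$KZ equations at $\xi'=t_0^{-1}t_N^{-1}t^{-(N-1)}\xi^{-1}$ (this is precisely the content of the paper's later Lemma~\ref{lemma:leftrightsol}, so there is no circularity), invoke Lemma~\ref{lemma:rightExchange} at $\xi'$, and then undo the diagonal conjugation on the components. I checked the bookkeeping: with $u_{\bm\tau}=\prod_i\bigl(t_N^{-1}t^{-(N-i)}\bigr)^{\tau_i}$ one gets $\xi'^{-1}t_0^{-1/2}\cdot t_N^{-1}t^{-(N-1)}=\xi t_0^{1/2}$ at the left boundary, $t_N^{-1/2}\cdot t_N=t_N^{1/2}$ at the right, and $t^{-1/2}\cdot t=t^{1/2}$ in the bulk, exactly reproducing \eqref{eq:leftqkzExchangeLeft}--\eqref{eq:leftqkzExchangeBulk01}. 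Two small points worth making explicit: the prefactors $u_{\bm\tau}$ are constants in $\mathbf{x}$, which is why they pass through the Noumi operators $T_j$; and since $U_{\text{GC}}$ is invertible, the map $\xi\mapsto\xi'$ is reversible, and Lemma~\ref{lemma:rightExchange} is itself an equivalence, the claimed \emph{equivalence} (not just one implication) does transfer. What your approach buys is that the left coefficients are \emph{explained} rather than recomputed -- they are forced by the Gallavotti--Cohen symmetry -- at the cost of relying on the GC relations for $\check{R}$, $\widetilde{K}$, $\overline{K}$, which the paper states without proof; the paper's direct verification is self-contained but opaque about where the sign of the exponent of $t^{1/2}$ in \eqref{eq:leftqkzExchangeBulk01} comes from.
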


\begin{lemma}
For any vector $\bra{\Phi(\mathbf{x}; s, \xi)}$ satisfying the left $q$KZ equations \eqref{eq:leftqKZi} --
\eqref{eq:leftqKZN}, the empty lattice weight $\phi_{\circ \ldots \circ}$ is an eigenfunction of the $Y_i$
operators \eqref{eq:Yi}, satisfying
\begin{equation}
    Y_i \phi_{\circ \ldots \circ}
        = \xi t_0^{1/2} t_N^{1/2} t^{N-i} \phi_{\circ \ldots \circ}.
    \label{eq:asepleftEigenref}
\end{equation}
\label{lemma:asepleftEigenref}
\end{lemma}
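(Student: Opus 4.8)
We follow the proof strategy of Lemma~\ref{lemma:asepEigenref}: expand $Y_i$ as the explicit word \eqref{eq:Yi} in the Hecke generators, namely $Y_i = (T_i T_{i+1}\cdots T_{N-1})\, T_N\, (T_{N-1}T_{N-2}\cdots T_1)\, T_0\, (T_1^{-1}\cdots T_{i-1}^{-1})$, and apply it to the all-empty component $\phi_{\circ\cdots\circ}$, evaluating the generators one at a time from right to left using the left exchange relations of Lemma~\ref{lemma:leftqkzExchange}. The key structural observation is that at every intermediate stage the component on which the next generator acts contains at most one particle, so that the two adjacent sites (or the boundary site) seen by that generator are always in one of the configurations covered by \eqref{eq:leftqkzExchangeLeft} -- \eqref{eq:leftqkzExchangeBulk01}, and the computation closes without ever needing a relation outside Lemma~\ref{lemma:leftqkzExchange}.

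Concretely the computation goes in five steps. First, the tail $T_1^{-1}\cdots T_{i-1}^{-1}$ acts on $\phi_{\circ\cdots\circ}$; since $T_j\phi_{\ldots\circ\circ\ldots}=t^{1/2}\phi_{\ldots\circ\circ\ldots}$ by \eqref{eq:leftqkzExchangeBulk00}, each inverse contributes $t^{-1/2}$, giving $t^{-(i-1)/2}\phi_{\circ\cdots\circ}$. Second, $T_0$ acts via \eqref{eq:leftqkzExchangeLeft}, producing $\xi\, t_0^{1/2}\phi_{\bullet\circ\cdots\circ}$. Third, the string $T_{N-1}\cdots T_1$ transports this single particle from site $1$ to site $N$: each application uses \eqref{eq:leftqkzExchangeBulk01} with factor $t^{1/2}$, so after $N-1$ hops we obtain $t^{(N-1)/2}\phi_{\circ\cdots\circ\bullet}$. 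Fourth, $T_N$ removes the particle through \eqref{eq:leftqkzExchangeRight} with factor $t_N^{1/2}$, returning to $\phi_{\circ\cdots\circ}$. Fifth, the head $T_i\cdots T_{N-1}$ acts on $\phi_{\circ\cdots\circ}$ through \eqref{eq:leftqkzExchangeBulk00}, contributing $N-i$ factors of $t^{1/2}$, i.e. $t^{(N-i)/2}$. Collecting the scalars gives $\xi\, t_0^{1/2}\, t_N^{1/2}\, t^{[-(i-1)+(N-1)+(N-i)]/2} = \xi\, t_0^{1/2}\, t_N^{1/2}\, t^{N-i}$, which is \eqref{eq:asepleftEigenref}.

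I do not expect a genuine obstacle here; the only point requiring care is the bookkeeping of the particle's propagation, namely checking that after $T_0$ creates a particle on site $1$, each successive $T_j$ in $T_{N-1}\cdots T_1$ sees exactly the pattern $\bullet\circ$ on sites $j,j+1$ (so that \eqref{eq:leftqkzExchangeBulk01} applies, not \eqref{eq:leftqkzExchangeBulk11}), and that the inverse generators $T_j^{-1}$ act as stated on the two-empty-site sector, which is immediate from the quadratic Hecke relation. The argument is the exact mirror of Lemma~\ref{lemma:asepEigenref}; the only substantive difference is that the particle-creation, transport, and annihilation relations on the left carry $\xi\, t_0^{1/2}$, $t^{1/2}$, $t_N^{1/2}$ where their right counterparts carry $\xi^{-1}t_0^{-1/2}$, $t^{-1/2}$, $t_N^{-1/2}$, and this is precisely what turns the right eigenvalue \eqref{eq:asepEigenref} into the reflected eigenvalue \eqref{eq:asepleftEigenref}.
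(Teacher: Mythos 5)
Your proposal is correct and follows the same route as the paper, which simply states that the result ``follows by direct computation with the exchange relations'' (for the right version, Lemma~\ref{lemma:asepEigenref}, with the left version declared analogous); you have supplied that computation explicitly, and the bookkeeping of the single-particle transport and the exponent $[-(i-1)+(N-1)+(N-i)]/2 = N-i$ checks out.
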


Again, the same two constraints on $\xi$ and $s$ appear, but with the correspondence to the non-symmetric
Koornwinder polynomials reversed: taking $\xi = s^m$, $m \ge 0$, would correspond to the composition
$\left(m^N\right)$; taking $\xi = t_0^{-1} t_N^{-1} t^{-(N-1)} s^{-m}$, $m > 0$, would correspond to the
composition $\left((-m)^N)\right)$.

The Gallavotti--Cohen symmetry allows us to relate solutions of the left and right $q$KZ equations.
\begin{lemma}
For any vector $\bra{\Phi(\mathbf{x}; s, \xi)}$ satisfying the left $q$KZ equations \eqref{eq:leftqKZi} --
\eqref{eq:leftqKZN}, the vector
\begin{equation}
    \ket{\Psi(\mathbf{x}; s, \xi')} = U_{\text{GC}} \ket{\Phi(\mathbf{x}; s, \xi)},
\end{equation}
with
\begin{equation}
    \xi' = t_0^{-1} t_N^{-1} t^{-(N-1)} \xi^{-1},
\end{equation}
is a solution to the right $q$KZ equations \eqref{eq:qKZi} -- \eqref{eq:qKZN}.
\label{lemma:leftrightsol}
\end{lemma}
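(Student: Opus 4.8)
The plan is to derive the right $q$KZ equations for $\ket{\Psi(\mathbf{x};s,\xi')}$ by transposing the left $q$KZ equations for $\bra{\Phi(\mathbf{x};s,\xi)}$ and then conjugating by $U_{\text{GC}}$, exactly mirroring the way the Gallavotti--Cohen relation \eqref{eq:GCTranspose} on $M(\xi)$ was obtained from $M(\xi')=U_{\text{GC}}M(\xi)^TU_{\text{GC}}^{-1}$. First I would apply the convention $\bra{\cdot}^T=\ket{\cdot}$ to each of \eqref{eq:leftqKZi}--\eqref{eq:leftqKZN}; using $(\bra{\Phi}A)^T=A^T\ket{\Phi}$, the bulk relation \eqref{eq:leftqKZi} turns into
\[
  \check{R}_i(x_{i+1}/x_i)^T\,\ket{\Phi(\ldots,x_i,x_{i+1},\ldots)}=\ket{\Phi(\ldots,x_{i+1},x_i,\ldots)},
\]
and similarly \eqref{eq:leftqKZ1}, \eqref{eq:leftqKZN} become equations with $\widetilde{K}_1(x_1^{-1})^T$ and $\overline{K}_N(x_N)^T$ acting on $\ket{\Phi}$.

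Next I would feed in the Gallavotti--Cohen transpose relations stated earlier. From $\check{R}_i(x)=U_{\text{GC}}\check{R}_i(x)^TU_{\text{GC}}^{-1}$ one has $\check{R}_i(x)^T=U_{\text{GC}}^{-1}\check{R}_i(x)U_{\text{GC}}$; substituting this and left-multiplying by $U_{\text{GC}}$ gives
\[
  \check{R}_i(x_{i+1}/x_i)\,U_{\text{GC}}\ket{\Phi(\ldots,x_i,x_{i+1},\ldots)}=U_{\text{GC}}\ket{\Phi(\ldots,x_{i+1},x_i,\ldots)},
\]
which is precisely \eqref{eq:qKZi} for $\ket{\Psi(\mathbf{x})}:=U_{\text{GC}}\ket{\Phi(\mathbf{x})}$. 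Since neither $\check{R}_i$ nor $U_{\text{GC}}$ depends on the fugacity, this step does not touch $\xi$. The right boundary is handled identically with $\overline{K}_N(x)=U_{\text{GC}}\overline{K}_N(x)^TU_{\text{GC}}^{-1}$, yielding \eqref{eq:qKZN}.

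The fugacity enters only through the left boundary. The relevant relation there is $\widetilde{K}_1(x;\xi')=U_{\text{GC}}\widetilde{K}_1(x;\xi)^TU_{\text{GC}}^{-1}$ with $\xi'=t_0^{-1}t_N^{-1}t^{-(N-1)}\xi^{-1}$, equivalently $\widetilde{K}_1(x;\xi)^T=U_{\text{GC}}^{-1}\widetilde{K}_1(x;\xi')U_{\text{GC}}$. Substituting this into the transposed form of \eqref{eq:leftqKZ1} and left-multiplying by $U_{\text{GC}}$ gives
\[
  \widetilde{K}_1(x_1^{-1};\xi')\,U_{\text{GC}}\ket{\Phi(x_1^{-1},x_2,\ldots;\xi)}=U_{\text{GC}}\ket{\Phi(sx_1,x_2,\ldots;\xi)},
\]
i.e. $\widetilde{K}_1(x_1^{-1})\ket{\Psi(x_1^{-1},x_2,\ldots;s,\xi')}=\ket{\Psi(sx_1,x_2,\ldots;s,\xi')}$, which is \eqref{eq:qKZ1}. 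Collecting the three computations shows $\ket{\Psi(\mathbf{x};s,\xi')}=U_{\text{GC}}\ket{\Phi(\mathbf{x};s,\xi)}$ solves \eqref{eq:qKZi}--\eqref{eq:qKZN}. The whole argument is mechanical; the only genuinely delicate point — and hence the place to be careful — is the bookkeeping of which side of the $\widetilde{K}$ transpose relation carries $\xi$ versus $\xi'$, since reversing it would flip the claimed relation between the two fugacities. As an alternative one could work at the level of components, verifying that the $\bm\tau$-dependent rescaling by $U_{\text{GC}}$ sends the left exchange relations of Lemma~\ref{lemma:leftqkzExchange} to the right exchange relations of Lemma~\ref{lemma:rightExchange}, but the matrix argument above is more transparent.
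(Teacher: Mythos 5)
Your proposal is correct and is exactly the argument the paper intends: its proof is the one-line remark ``transpose the left $q$KZ equations and use the Gallavotti--Cohen symmetry on $\check{R}$, $\widetilde{K}$ and $\overline{K}$,'' and you have simply carried out that computation in full, including the one delicate point of tracking $\xi$ versus $\xi'$ through the $\widetilde{K}$ relation. No gaps.
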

\begin{proof}
This is checked by transposing the left $q$KZ equations and using the Gallavotti--Cohen symmetry on the
$\check{R}$, $\widetilde{K}$ and $\overline{K}$ matrices.
\end{proof}

\section{Matrix product solution} 
\label{sec:matrixproduct}
 
The matrix product ansatz for the stationary state of the ASEP was introduced in \cite{DerridaEHP93} and has led since then 
to numerous works in statistical physics and mathematical physics.
The connection with integrability was explored in \cite{SasamotoW97,CrampeRV14} and allowed the generalisation 
to the multi-species open ASEP \cite{CrampeMRV15,CrampeEMRV16,CrampeFRV16}. In \cite{LazarescuM11,GorissenLMV12,Lazarescu13jphysA} a perturbative
matrix ansatz was constructed giving successive cumulants of the particle current, and some of
the structures introduced there will be used here.  For the periodic ASEP, the matrix product method was used
to construct the stationary state of the multi-species system \cite{EvansFM09,ProlhacEM09,AritaAMP11,AritaAMP12} 
and the  Macdonald polynomials \cite{CantiniDGW15}, and has revealed a connection to the 3D integrability of the model \cite{KunibaMO15,KunibaMO16}.
In this section we give a matrix product construction
of the twice deformed ground state vectors, and show that this results in a matrix product formula for certain
symmetric Koornwinder polynomials.

\subsection{General construction}

The matrix product ansatz for the twice deformed inhomogeneous ground state vectors is written
\begin{equation}
    \ket{\Psi(\mathbf{x}; s, \xi)} = \bbra{W} \mathbb{S} \mathbb{A}_1(x_1) \ldots \mathbb{A}_N(x_N) \kket{V},
    \label{eq:mpaForm}
\end{equation}
with
\begin{equation*}
    \mathbb{A}(x) = \begin{pmatrix}
        A_0(x)
        \\
        A_1(x)
    \end{pmatrix}.
\end{equation*}
The entries $A_0(x)$, $A_1(x)$ as well as $\mathbb{S}$ are operators in some auxiliary algebraic space,
and the left and right vectors $\bbra{W}$ and $\kket{V}$ contract this space to give a scalar value.
The indices in \eqref{eq:mpaForm} denote the lattice site the vector $\mathbb{A}(x)$ relates to, and in
tensor product notation, we would write
\begin{equation*}
    \ket{\Psi(\mathbf{x}; s, \xi)}
        = \bbra{W} \mathbb{S} \mathbb{A}(x_1) \otimes \ldots \otimes \mathbb{A}(x_N) \kket{V}.
\end{equation*}
However, as a matter of convention, we will reserve the symbol `$\otimes$' for objects belonging to the
auxiliary algebraic space (see \eqref{eq:Sm} for example), and 
use the index notation to denote the tensor product in
the space of lattice configurations.  The aim is to provide a notational distinction between these two spaces.
Writing out
\eqref{eq:mpaForm} gives the $2^N$ component vector
\begin{equation*}
    \ket{\Psi(\mathbf{x}; s, \xi)}
    = \begin{pmatrix}
        \bbra{W} \mathbb{S} A_0(x_1) \ldots A_0(x_{N-1}) A_0(x_N) \kket{V}
        \\
        \bbra{W} \mathbb{S} A_0(x_1) \ldots A_0(x_{N-1}) A_1(x_N) \kket{V}
        \\
        \bbra{W} \mathbb{S} A_0(x_1) \ldots A_1(x_{N-1}) A_0(x_N) \kket{V}
        \\
        \vdots
        \\
        \bbra{W} \mathbb{S} A_1(x_1) \ldots A_1(x_{N-1}) A_1(x_N) \kket{V}
    \end{pmatrix},
\end{equation*}
with entries
\begin{equation*}
\psi_{\bm\tau}(\mathbf{x}; s, \xi)=   \bbra{W} \mathbb{S} A_{\tau_1}(x_1) \ldots A_{\tau_{N-1}}(x_{N-1}) A_{\tau_N}(x_N) \kket{V}.
\end{equation*}

\begin{lemma}
Sufficient conditions for a vector of form \eqref{eq:mpaForm} to satisfy the $q$KZ equations \eqref{eq:qKZi} --
\eqref{eq:qKZN} are the following:
\begin{align}
    \check{R}\left(\frac{x_{i+1}}{x_i}\right) \mathbb{A}_1(x_i) \mathbb{A}_2(x_{i+1})
    & =
    \mathbb{A}_1(x_{i+1}) \mathbb{A}_2(x_i),
    \label{eq:ZF}
    \\
    \widetilde{K}\left(x_1^{-1}\right) \bbra{W} \mathbb{S} \mathbb{A}\left(x_1^{-1}\right)
    & =
    \bbra{W} \mathbb{S} \mathbb{A}\left(s x_1\right),
    \label{eq:deformedGZ1}
    \\
    \overline{K}(x_N) \mathbb{A}(x_N) \kket{V}
    & =
    \mathbb{A}\left(x_N^{-1}\right) \kket{V}.
    \label{eq:deformedGZN}
\end{align}
\end{lemma}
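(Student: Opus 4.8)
The plan is to verify directly that the three relations \eqref{eq:ZF}--\eqref{eq:deformedGZN} force a vector of the form \eqref{eq:mpaForm} to satisfy each of the $q$KZ equations \eqref{eq:qKZi}--\eqref{eq:qKZN}, by substituting the ansatz into each equation and collapsing it onto the appropriate local relation. The single point that needs care throughout is to keep the two tensor structures apart: the operators $\check{R}_i$, $\widetilde{K}_1$, $\overline{K}_N$ act on the physical space $(\mathbb{C}^2)^{\otimes N}$ of lattice configurations, which in the ansatz is carried by the two-component objects $\mathbb{A}(x_j)$, whereas $\bbra{W}$, $\mathbb{S}$, $\kket{V}$ and the operator entries $A_0(x)$, $A_1(x)$ live purely in the auxiliary algebraic space. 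Because these two spaces are independent, a matrix acting on lattice slot $i$ (and $i+1$) commutes past all the auxiliary-space factors and touches only $\mathbb{A}(x_i)$ (and $\mathbb{A}(x_{i+1})$), acting on them in the obvious component-wise way, while the auxiliary operators keep multiplying in the order fixed by \eqref{eq:mpaForm}. I would state this as a preliminary observation before treating the three cases.

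For the bulk equations \eqref{eq:qKZi}: write $\ket{\Psi(\ldots, x_i, x_{i+1}, \ldots)} = \bbra{W} \mathbb{S} \, \mathbb{A}_1(x_1) \cdots \mathbb{A}_N(x_N) \kket{V}$, apply $\check{R}_i(x_{i+1}/x_i)$, and let it pass through $\bbra{W} \mathbb{S} \, \mathbb{A}_1(x_1) \cdots \mathbb{A}_{i-1}(x_{i-1})$ on the left and $\mathbb{A}_{i+2}(x_{i+2}) \cdots \mathbb{A}_N(x_N) \kket{V}$ on the right, so that only $\check{R}(x_{i+1}/x_i) \, \mathbb{A}_i(x_i) \mathbb{A}_{i+1}(x_{i+1})$ remains to be dealt with; relation \eqref{eq:ZF}, read in lattice slots $i,i+1$, rewrites this as $\mathbb{A}_i(x_{i+1}) \mathbb{A}_{i+1}(x_i)$, and reassembling the surrounding factors yields exactly $\ket{\Psi(\ldots, x_{i+1}, x_i, \ldots)}$. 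For the boundary equations: applying $\widetilde{K}_1(x_1^{-1})$ to $\ket{\Psi(x_1^{-1}, x_2, \ldots)}$ acts only on $\bbra{W} \mathbb{S} \, \mathbb{A}_1(x_1^{-1})$, so \eqref{eq:deformedGZ1} turns it into $\bbra{W} \mathbb{S} \, \mathbb{A}_1(s x_1)$ and hence into $\ket{\Psi(s x_1, x_2, \ldots)}$; symmetrically, applying $\overline{K}_N(x_N)$ to $\ket{\Psi(\ldots, x_{N-1}, x_N)}$ acts only on the tail $\mathbb{A}_N(x_N) \kket{V}$, and \eqref{eq:deformedGZN} converts it to $\mathbb{A}_N(x_N^{-1}) \kket{V}$, i.e.\ to $\ket{\Psi(\ldots, x_{N-1}, 1/x_N)}$.

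I do not expect a genuine obstacle here: the argument is essentially bookkeeping, and once the convention separating the physical and auxiliary spaces is pinned down each $q$KZ equation reduces to its local counterpart purely by associativity. The only thing to watch is that $\mathbb{S}$ sits to the left of the whole chain, so it enters only the left-boundary relation \eqref{eq:deformedGZ1} and is inert in \eqref{eq:ZF} and \eqref{eq:deformedGZN} — which is consistent with how those relations are written. The substantive work, deferred to the following subsections, is instead to exhibit operators $A_0(x)$, $A_1(x)$, $\mathbb{S}$ and vectors $\bbra{W}$, $\kket{V}$ that actually solve \eqref{eq:ZF}--\eqref{eq:deformedGZN}; consistency of that overdetermined system will rest on the Yang--Baxter, reflection and unitarity relations of $\check{R}$, $\widetilde{K}$, $\overline{K}$, not on anything in the present lemma.
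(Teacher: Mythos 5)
Your verification is correct and is exactly the routine bookkeeping the paper has in mind: the paper states this lemma without proof precisely because each $q$KZ equation collapses onto the corresponding local relation once one separates the physical tensor slots (carried by the two-component objects $\mathbb{A}(x_j)$) from the auxiliary algebra, as you do. Your closing remarks — that $\mathbb{S}$ enters only the left-boundary relation, and that the real work is exhibiting $A_0$, $A_1$, $\mathbb{S}$, $\bbra{W}$, $\kket{V}$ satisfying \eqref{eq:ZF}--\eqref{eq:deformedGZN} — also match how the paper proceeds.
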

Equation \eqref{eq:ZF} is the Zamolodchikov--Faddeev (ZF) algebra \cite{ZamolodchikovZ79,Faddeev80}.  Equations
\eqref{eq:deformedGZ1}, \eqref{eq:deformedGZN} are a deformation of the Ghoshal--Zamolodchikov (GZ) relations
\cite{GhoshalZ94}.  The undeformed GZ relations are obtained by setting $\mathbb{S}$ to the identity and $s = 1$.
The matrix product ansatz for the open boundary ASEP can be expressed as a solution of the undeformed
relations, and solutions for related models have also been found and studied \cite{CrampeRV14,CrampeRRV16}.

\subsection{Construction of solutions}
We now give an explicit construction of the $q$KZ solution when $\xi = s^m$, $m \ge 1$.  We first define certain algebraic objects through the relations they satisfy.
\begin{definition}
We define algebraic objects satisfying the following relations: operators $a$, $a^\dagger$ and $S$:
\begin{equation}
\begin{aligned}
    & a a^\dagger - t a^\dagger a = 1 - t,
    \\
    & a S = \sqrt{s} S a,
    \\
    & S a^\dagger = \sqrt{s} a^\dagger S.
\end{aligned}
\label{eq:oscAlgBulk}
\end{equation}
And paired boundary vectors $\bbra{w}$ and $\kket{v}$:
\begin{equation}
\begin{aligned}
    \bbra{w} \left( t_0^{1/2}a-t_0^{-1/2}a^\dagger \right)
        & = \bbra{w} \left( u_0^{1/2}-u_0^{-1/2} \right),
  \\
    \left( t_N^{1/2}a^\dagger-t_N^{-1/2}a \right) \kket{v}
        & =  \left( u_N^{1/2}-u_N^{-1/2} \right) \kket{v},
\label{eq:oscAlgBoundary}
\end{aligned}
\end{equation}
and $\bbra{\widetilde{w}}$ and $\kket{\widetilde{v}}$:
\begin{equation}
\begin{aligned}
    \bbra{\widetilde w} \left( t_0^{1/2}a-t_0^{-1/2}a^\dagger \right)
        & = \bbra{\widetilde w} \left( t_0^{1/2}-t_0^{-1/2} \right),
        \\
    \left( t_N^{1/2}a^\dagger-t_N^{-1/2} a \right) \kket{\widetilde v}
        & =  \left( t_N^{1/2}-t_N^{-1/2} \right)\kket{\widetilde v}.
\end{aligned}
\label{eq:oscAlgBoundaryTwid}
\end{equation}
\label{def:oscAlg}
\end{definition}
Elements of this algebra have appeared in many places in the context of the ASEP. 
The first algebraic relation of \eqref{eq:oscAlgBulk} and the relations \eqref{eq:oscAlgBoundary} were first stated in \cite{Sandow94} 
to study the stationary state of the open ASEP. This work shed new light on the DEHP algebra introduced in \cite{DerridaEHP93} by 
showing that it can be recast in a form of a $q$-deformed oscillator algebra by an appropriate shift and normalisation of the generators.
The representation of the algebraic elements involved in the first relation of \eqref{eq:oscAlgBulk} and in the
relations \eqref{eq:oscAlgBoundary} were found in \cite{Sandow94}, and permitted explicit computations. In
particular the author of that work
pointed out the relevance of the parametrisation used here. More precisely the parameters $\kappa_+(\alpha,\gamma)$ and $\kappa_+(\beta,\delta)$ 
defined in \cite{Sandow94} by $\kappa_+(x,y)=\frac{1}{2x}\left(y-x+p-q+\sqrt{(y-x+p-q)^2+4xy}\right)$ play a central role in the representation of the 
algebra, and are relevant in describing the phase transitions of the system. The precise relations with the parameters used here are
$\kappa_+(\alpha,\gamma)=u_0^{1/2}t_0^{-1/2}$ and $\kappa_+(\beta,\delta)=u_N^{1/2}t_N^{-1/2}$.

The other relations
\eqref{eq:oscAlgBulk}, \eqref{eq:oscAlgBoundary} and \eqref{eq:oscAlgBoundaryTwid} appear previously in
\cite{GorissenLMV12,Lazarescu13jphysA,Lazarescu13,LazarescuP14} to compute the fluctuations of the current. 
In Appendix~\ref{app:representation} we recall
an infinite dimensional representation of this algebra: $\kket{v},\kket{\widetilde v},\dots$ are vectors of a Fock space endowed 
with the usual scalar product. In this paper the scalar product of two vectors $\kket{x}$ and $\kket{y}$ of this Fock space is denoted
by $\bbra{x} \cdot \kket{y}$.
The operators $a$ and $a^\dagger$ are linear operators on this Fock space.
Let us stress here that the creation operator $a^\dagger$ is not the Hermitian conjugate of the annihilation operator $a$ (it is 
a standard notation which appears often in the literature, see for instance \cite{Sandow94}).  

Building on this algebra, we define
\begin{align}
    \mathbb{S}^{(m)} & = S^{2m-1} \otimes S^{2m-2} \otimes \ldots \otimes S^3 \otimes S^2 \otimes S,
    \label{eq:Sm}
    \\
    \mathbb{A}^{(m)}(x)
        & = \underbrace{L(x) \dot\otimes \ldots \dot\otimes L(x)}_{m-1 \text{ times}} \dot\otimes b(x),
    \label{eq:Am}
\end{align}
with
\begin{equation}
    L(x) = \begin{pmatrix}
        1 & a
        \\
        x a^\dagger & x
    \end{pmatrix}\dot\otimes \begin{pmatrix}
        1/x & a/x
        \\
         a^\dagger & 1
    \end{pmatrix},
    \qquad
    b(x) = \begin{pmatrix}
        1/x+a
        \\
        x+a^\dagger
    \end{pmatrix}.
    \label{eq:Lbdef}
\end{equation}
The symbol $\dot\otimes$ indicates the normal dot product in the physical space, taking the tensor product of
the entries -- elements of the auxiliary algebraic space.
For example, expanding the definition of $L(x)$ gives
\begin{equation*}
    L(x) = \begin{pmatrix}
        x^{-1} \id \otimes \id + a \otimes a^\dagger
        &
        x^{-1} \id \otimes a + a \otimes \id
        \\
        a^\dagger \otimes \id + x \id \otimes a^\dagger
        &
        a^\dagger \otimes a + x \id \otimes \id
    \end{pmatrix}.
\end{equation*}
We also define boundary vectors
\begin{align}
    \bbra{W^{(m)}}
    & =
    \underbrace{
        \bbra{w} \otimes \bbra{\widetilde w} \otimes \ldots \otimes \bbra{w} \otimes \bbra{\widetilde w}
    }_{m-1 \text{ times}} \otimes \bbra{w}
    \label{eq:Wm}
    \\
    \kket{V^{(m)}}
    &=
    \underbrace{
        |v\rrangle \otimes \kket{\widetilde v} \otimes \ldots \otimes \kket{v} \otimes \kket{\widetilde v}
    }_{m-1 \text{ times}} \otimes \kket{v}.
    \label{eq:Vm}
\end{align}

\begin{proposition}
For integer $m > 0$ and $\xi = s^m$, 
\begin{equation}
    \ket{\Psi^{(m)}(\mathbf{x};s)}
    =
    \frac{
        1
    }
    {
        \Omega^{(m)}
    }
    \bbra{W^{(m)}}
        \mathbb{S}^{(m)} \mathbb{A}^{(m)}_1(x_1) \ldots \mathbb{A}^{(m)}_N(x_N) 
    \kket{V^{(m)}},
    \label{eq:Psim}
\end{equation}
with normalisation factor
\begin{equation}
    \Omega^{(m)} = \bbra{W^{(m)}} \mathbb{S}^{(m)} \kket{V^{(m)}},
    \label{eq:Omegam}
\end{equation}
is a solution of the $q$KZ equations \eqref{eq:qKZi} -- \eqref{eq:qKZN}.
\end{proposition}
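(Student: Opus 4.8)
The plan is to invoke the preceding lemma, which reduces the claim to verifying that the explicit objects $\mathbb{S}^{(m)}$, $\mathbb{A}^{(m)}$, $\bbra{W^{(m)}}$, $\kket{V^{(m)}}$ satisfy the Zamolodchikov--Faddeev relation \eqref{eq:ZF}, together with the two deformed Ghoshal--Zamolodchikov relations \eqref{eq:deformedGZ1} and \eqref{eq:deformedGZN}. The starting point is that all the ingredients are defined recursively: from \eqref{eq:Sm}--\eqref{eq:Vm} one reads off
\begin{equation*}
    \mathbb{A}^{(m)}(x) = L(x) \dot\otimes \mathbb{A}^{(m-1)}(x), \qquad
    \mathbb{S}^{(m)} = \big(S^{2m-1} \otimes S^{2m-2}\big) \otimes \mathbb{S}^{(m-1)},
\end{equation*}
\begin{equation*}
    \bbra{W^{(m)}} = \bbra{w} \otimes \bbra{\widetilde w} \otimes \bbra{W^{(m-1)}}, \qquad
    \kket{V^{(m)}} = \kket{v} \otimes \kket{\widetilde v} \otimes \kket{V^{(m-1)}},
\end{equation*}
where the two auxiliary factors peeled off on the left are precisely those on which the Lax matrix $L(x)$ acts. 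I would therefore argue by induction on $m$, the base case $m=1$ (with $\xi = s$) being $\mathbb{A}^{(1)} = b$, $\mathbb{S}^{(1)} = S$, $\bbra{W^{(1)}} = \bbra{w}$, $\kket{V^{(1)}} = \kket{v}$.

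For the base case one checks directly, using only Definition~\ref{def:oscAlg}: the oscillator relation $a a^\dagger - t a^\dagger a = 1-t$ of \eqref{eq:oscAlgBulk} yields the $b$-version of the ZF relation, $\check{R}(x_2/x_1)\,b_1(x_1) b_2(x_2) = b_1(x_2) b_2(x_1)$; the relations \eqref{eq:oscAlgBoundary} give $\overline{K}(x)\,b(x)\kket{v} = b(x^{-1})\kket{v}$; and the $S$-intertwining relations $aS = \sqrt{s}\,Sa$, $Sa^\dagger = \sqrt{s}\,a^\dagger S$, together with \eqref{eq:oscAlgBoundary}, \eqref{eq:Ktilde} and $\xi = s$, give the left relation $\widetilde{K}(x^{-1})\bbra{w}S\,b(x^{-1}) = \bbra{w}S\,b(sx)$ --- here the single $S$ is exactly what converts the argument inversion into the $s$-shift.

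For the inductive step the real content is a short list of local identities for $L$, each proved by a direct matrix computation with the oscillator algebra: an $RLL$-type exchange relation $\check{R}(x_2/x_1)\,L_1(x_1)L_2(x_2) = L_1(x_2)L_2(x_1)\,\check{R}(x_2/x_1)$, and two ``bulk--boundary'' relations which let $\overline{K}(x)$ (acting on one physical end of $L$) be carried through $L(x)$ and the pair $\kket{v}\otimes\kket{\widetilde v}$ so that it re-emerges acting on the other physical end with $L(x)$ replaced by $L(x^{-1})$, and likewise let $\widetilde{K}(x)$ be carried through $L(x)$ and $\bbra{w}\otimes\bbra{\widetilde w}$, together with the accompanying prefactor $S^{2m-1}\otimes S^{2m-2}$. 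Since the auxiliary operators in $L(x)$ act on tensor factors disjoint from those of $\mathbb{A}^{(m-1)}(x)$ they commute, so after applying these identities one is left with precisely \eqref{eq:ZF}, \eqref{eq:deformedGZ1}, \eqref{eq:deformedGZN} for the reduced objects $\mathbb{S}^{(m-1)}$, $\mathbb{A}^{(m-1)}$, $\bbra{W^{(m-1)}}$, $\kket{V^{(m-1)}}$ --- and, crucially, at parameter $\xi = s^{m-1}$: each Lax factor $L$ carries exactly one power of $s$ and the terminal $b$ the remaining one, so that $1 + (m-1) = m$ and the bookkeeping is consistent. The induction hypothesis then closes the argument.

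I expect the main obstacle to be the left relation \eqref{eq:deformedGZ1}. The right relation \eqref{eq:deformedGZN} involves no $s$-shift ($x_N\mapsto x_N^{-1}$) and is a fairly mechanical reflection computation, while the bulk ZF relation is standard. At the left boundary, by contrast, the $s$-deformation enters simultaneously through the shift $x_1^{-1}\mapsto s x_1$, through the explicit $\xi = s^m$ inside $\widetilde{K}$, and through the whole tower $S^{2m-1}\otimes S^{2m-2}\otimes\cdots\otimes S$; one must check that the powers of $\sqrt{s}$ generated by pushing $a$ and $a^\dagger$ past these $S$'s, combined with the distinction between the twisted vectors $\bbra{w}$ and $\bbra{\widetilde w}$ in \eqref{eq:oscAlgBoundary}--\eqref{eq:oscAlgBoundaryTwid}, conspire at each step to regenerate an $L(sx_1)$-type factor and to leave behind exactly the reduced left relation for $\mathbb{A}^{(m-1)}$ at $\xi = s^{m-1}$. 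Keeping this $s$-bookkeeping straight through the $\dot\otimes$ structure of $L(x)$ is where the real work lies.
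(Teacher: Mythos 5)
Your proposal follows essentially the same route as the paper: reduce the claim to the ZF relation \eqref{eq:ZF} and the deformed GZ relations \eqref{eq:deformedGZ1}, \eqref{eq:deformedGZN}, and verify these by applying successively the same elementary local exchange relations for $b(x)$ and $L(x)$ (including the key fact that $\xi$ drops by one power of $s$ each time an $L$ passes through $\widetilde{K}$, so that $\xi=s^m$ closes the recursion). The one step you omit is showing that the normalisation $\Omega^{(m)}=\bbra{W^{(m)}}\mathbb{S}^{(m)}\kket{V^{(m)}}$ is non-zero --- i.e.\ that the abstractly defined algebra admits a representation in which these scalars make sense and do not vanish; the paper does this by an explicit Fock-space computation (Appendix~\ref{app:representation}), and without it the division by $\Omega^{(m)}$ in \eqref{eq:Psim} is not justified.
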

Note that the dependence on $\xi$ has disappeared in the vector $\ket{\Psi^{(m)}(\mathbf{x};s)}$ because of the constraint $\xi = s^m$. 
\begin{proof}
The normalisation factor $\Omega^{(m)}$ can be chosen freely, but we must show that the choice
\eqref{eq:Omegam} is non-zero.  To do so, we compute $\Omega^{(m)}$ using an infinite dimensional
representation of the algebra defined in \eqref{eq:oscAlgBulk} -- \eqref{eq:oscAlgBoundaryTwid}.  We give the
details in Appendix~\ref{app:representation}.  Then to prove that $\ket{\Psi^{(m)}(\mathbf{x};s)}$ is a $q$KZ
solution, it is sufficient to show that \eqref{eq:ZF}, \eqref{eq:deformedGZ1}, \eqref{eq:deformedGZN} are
satisfied.

By a direct computation, using the algebraic relations \eqref{eq:oscAlgBulk}, it can be checked that the
vector $b(x)$ and the matrix $L(x)$ satisfy the relations
\begin{align*}
 \check{R}(x_{i+1}/x_{i})b_1(x_{i})b_2(x_{i+1}) & = b_1(x_{i+1}) b_2(x_{i}),
 \\
 \check{R}(x_{i+1}/x_{i}) L_1(x_{i}) L_2(x_{i+1}) & = L_1(x_{i+1}) L_2(x_{i})\check{R}(x_{i+1}/x_{i}).
\end{align*}
These elementary exchange relations can be used successively several times to give \eqref{eq:ZF}.
On the right boundary, using relations \eqref{eq:oscAlgBoundary},
\eqref{eq:oscAlgBoundaryTwid} gives
\begin{align*}
 \overline{K}(x_N) b(x_N)|v\rrangle  & = b(1/x_N)|v\rrangle,
  \\
 \overline{K}(x_N) L(x_N)|v\rrangle \otimes | \widetilde v \rrangle
    & = L(1/x_N)\overline{K}(x_N)|v\rrangle \otimes | \widetilde v \rrangle.
\end{align*}
Using these properties several times, it is straightforward to prove \eqref{eq:deformedGZN}.
Finally, on the left boundary, the vector $b(x)$ satisfies
\begin{equation*}
 \llangle w| S \left.\widetilde{K}(x_1^{-1})\right|_{\xi=s} b(x_1^{-1}) = \llangle w| S b(s x_1),
\end{equation*}
and the matrix $L(x)$ satisfies
\begin{align*}
 \llangle w| \otimes \llangle \widetilde w| & S^{2a+1} \otimes S^{2a} \left.\widetilde{K}(x_1^{-1})\right|_{\xi=s^{a+1}} L(x_1^{-1})
 \\
 & = 
 \llangle w| \otimes \llangle \widetilde w| S^{2a+1} \otimes S^{2a} L(s x_1) \left.\widetilde{K}(x_1^{-1})\right|_{\xi=s^a}.
\end{align*}
In words, the last equation means that the parameter $\xi$ is multiplied by a factor $s$ when the matrix $L$
passes through the matrix $\widetilde{K}$.  Thus by imposing the constraint $\xi = s^m$ and applying these
relations successively, relation \eqref{eq:deformedGZ1} follows.
\end{proof}
We still need to show that the construction gives a non-zero vector.  Before doing so, we introduce some
notation, then look at some examples.

\begin{definition}
For a lattice configuration $\bm\tau = (\tau_1, \ldots, \tau_N)$, define the composition
$\lambda^{(m)}(\bm\tau)$, with
\begin{equation*}
    \lambda^{(m)}(\bm\tau)_i = \begin{cases}
        -m, & \tau_i = 0,
        \\
        m,  & \tau_i = 1.
    \end{cases}
\end{equation*}
The corresponding partition is $\lambda^{(m)+}(\bm\tau) = \left(m^N\right)$.
\end{definition}
\begin{definition}
We introduce the notation
\begin{align*}
    ^{k}\langle B \rangle & = \bbra{w} S^k B \kket{v},
    \\
    ^{j,k}\langle B \dot\otimes C \rangle
    & =
    \left(\bbra{w}\otimes\bbra{\widetilde{w}}\right)
         \left(S^j \otimes S^k \right) \left(B \dot\otimes C \right)
    \left(\kket{v} \otimes \kket{\widetilde{v}}\right)
    \\
    & =
    \Big(\bbra{w} S^j B \kket{v}\Big)
    .
    \left(\bbra{\widetilde{w}} S^k C \kket{\widetilde{v}}\right)
\end{align*}
Here $B$, $C$ may be scalars, vectors, or matrices in physical space, with entries belonging to the auxiliary
algebraic space.
\end{definition}

\begin{example}
With $m=1$, $\mathbb{A}^{(1)}(x) = b(x)$.  For $N=1$,
\begin{equation*}
\begin{aligned}
    \Omega^{(1)} \ket{\Psi^{(1)}(x_1; s)}
    & =
    {^{1}\langle} b(x_1)_1 \rangle
    \\
    & =
    \begin{pmatrix}
        \bbra{w} S \left(\frac{1}{x_1} + a \right)\kket{v}
        \\
        \bbra{w} S \left(x_1 + a^\dagger \right)\kket{v}
    \end{pmatrix},
\end{aligned}
\end{equation*}
and for $N=2$,
\begin{equation*}
\begin{aligned}
    \Omega^{(1)} \ket{\Psi^{(1)}(x_1, x_2; s)}
    & =
    {^1\langle} b(x_1)_1  b(x_2)_2 \rangle
    \\
    & =
    \begin{pmatrix}
        \bbra{w} S \left(\frac{1}{x_1} + a\right) \left(\frac{1}{x_2} + a \right) \kket{v}
        \\
        \bbra{w} S \left(\frac{1}{x_1} + a\right) \left(x_2 + a^\dagger \right) \kket{v}
        \\
        \bbra{w} S \left(x_1 + a^\dagger\right) \left(\frac{1}{x_2} + a \right) \kket{v}
        \\
        \bbra{w} S \left(x_1 + a^\dagger\right) \left(x_2 + a^\dagger \right) \kket{v}
    \end{pmatrix}.
\end{aligned}
\end{equation*}
In general,
\begin{equation*}
    \Omega^{(1)} \ket{\Psi^{(1)}(\mathbf{x}; s)}
    =
    {^1\langle} b(x_1)_1  \ldots b(x_N)_N \rangle.
\end{equation*}
Note that the normalisation  $\Omega^{(1)} = \bbra{w} S \kket{v}$ ensures that each component $\psi^{(1)}_{\bm\tau}$
has leading term $\mathbf{x}^{\lambda^{(1)}(\bm\tau)}$ with coefficient $1$, and all other terms correspond
to compositions $\mu$ with
\begin{equation*}
    \mu^+ < \lambda^{(1)+}(\bm\tau) = \left(1^N\right).
\end{equation*}
\label{ex:Psim1}
\end{example}

\begin{example}
With $m=2$,
\begin{equation*}
    \mathbb{A}^{(2)}(x) = L(x) \dot\otimes b(x).
\end{equation*}
Then for $N=1$,
\begin{align*}
    \Omega^{(2)}\ket{\Psi^{(2)}(x_1; s)}
    & =
    \left(\bbra{w}\otimes\bbra{\widetilde{w}}\otimes{\bbra{w}}\right)
    \left(S^3 \otimes S^2 \otimes S\right)
    \left(L(x_1) \dot\otimes b(x_1)\right)_1
    \left(\kket{v}\otimes\kket{\widetilde{v}}\otimes\kket{v}\right)
    \\
    & =
    {^{3,2}\langle} L(x_1)_1 \rangle . \left({^1\langle} b(x_1)_1 \rangle\right)
    \\
    & =
    {^{3,2}\langle} L(x_1)_1 \rangle . \Omega^{(1)} \ket{\Psi^{(1)}(x_1; s)},
\end{align*}
with
\begin{equation*}
    {^{3,2}\langle} L(x_1)_1 \rangle
    =
    \bbra{w}\otimes\bbra{\widetilde{w}}S^3\otimes S^2
    \begin{pmatrix}
        x_1^{-1} \id \otimes \id + a \otimes a^\dagger
        &
        x_1^{-1} \id \otimes a + a \otimes \id
        \\
        a^\dagger \otimes \id + x_1 \id \otimes a^\dagger
        &
        a^\dagger \otimes a + x_1 \id \otimes \id
    \end{pmatrix}_1
    \kket{v} \otimes \ket{\widetilde{v}}.
\end{equation*}
For $N=2$,
\begin{align*}
    \Omega^{(2)}\ket{\Psi^{(2)}(x_1; s)}
    =
    \left(\bbra{w}\otimes\bbra{\widetilde{w}}\otimes{\bbra{w}}\right)
    \left(S^3 \otimes S^2 \otimes S\right)
    & \left(L(x_1) \dot\otimes b(x_1)\right)_1
    \\
    & . \left(L(x_2) \dot\otimes b(x_2)\right)_2
    \left(\kket{v}\otimes\kket{\widetilde{v}}\otimes\kket{v}\right).
\end{align*}
The matrix $L(x_2)_2$ can be brought past $b(x_1)_1$ as they are in different physical spaces, and
their entries are in different auxiliary algebraic spaces.  Thus we obtain
\begin{equation*}
    \Omega^{(2)}\ket{\Psi^{(2)}(x_1, x_2; s)}
    =
    {^{3,2}\langle} L(x_1)_1 L(x_2)_2\rangle . \left(\Omega^{(1)} \ket{\Psi^{(1)}(x_1, x_2; s)} \right),
\end{equation*}
with
\begin{equation*}
\begin{aligned}
    {^{3,2}\langle} L(x_1)_1 L(x_2)_2 \rangle
    =
    \bbra{w}\otimes\bbra{\widetilde{w}} S^3 \otimes S^2
    & \begin{pmatrix}
        x_1^{-1} \id \otimes \id + a \otimes a^\dagger
        &
        x_1^{-1} \id \otimes a + a \otimes \id
        \\
        a^\dagger \otimes \id + x_1 \id \otimes a^\dagger
        &
        a^\dagger \otimes a + x_1 \id \otimes \id
    \end{pmatrix}_1
    \\
    & . \begin{pmatrix}
        x_2^{-1} \id \otimes \id + a \otimes a^\dagger
        &
        x_2^{-1} \id \otimes a + a \otimes \id
        \\
        a^\dagger \otimes \id + x_2 \id \otimes a^\dagger
        &
        a^\dagger \otimes a + x_2 \id \otimes \id
    \end{pmatrix}_2
    \kket{v}\otimes\kket{\widetilde{v}}.
\end{aligned}
\end{equation*}
The normalisation factor is
\begin{equation*}
    \Omega^{(2)}
    = \bbra{w}\otimes\bbra{\widetilde{w}}\otimes\bbra{w}
        S^3 \otimes S^2 \otimes S \kket{v}\otimes\kket{\widetilde{v}} \otimes \kket{v},
\end{equation*}
and it can be checked directly for $N=1,2$ that
each component $\psi^{(2)}_{\bm\tau}$
has leading term $\mathbf{x}^{\lambda^{(2)}(\bm\tau)}$ with coefficient $1$, and all other terms correspond
to compositions $\mu$ with
\begin{equation*}
    \mu^+ < \lambda^{(2)+}(\bm\tau) = \left(2^N\right).
\end{equation*}
\label{ex:Psim2}
\end{example}
We now give the general form.
\begin{theorem}
\label{theorem:Psisol}
The $q$KZ equations have a solution when $\xi = s^m$, written recursively on $m$:
For $m > 1$
\begin{equation}
    \ket{\Psi^{(m)}(\mathbf{x};s)}
    =
    \frac{1}{\bbra{w}S^{2m-1}\kket{v} \bbra{\tilde{w}}S^{2m-2}\kket{\tilde{v}}}
    \left(
        {^{2m-1,2m-2}\langle} L(x_1)_1 \ldots L(x_N)_N \rangle
    \right)
    \ket{\Psi^{(m-1)}(\mathbf{x};s)},
    \label{eq:Psimrec}
\end{equation}
with
\begin{equation}
    \ket{\Psi^{(1)}(\mathbf{x};s)}
    =
    \frac{1}{\bbra{w} S \kket{v}}\left({^1\langle}b(x_1)_1 \ldots b(x_N)\rangle\right).
    \label{eq:Psi1rec}
\end{equation}
The components of the solution, $\psi^{(m)}_{\bm\tau}(\mathbf{x};s)$, have leading term
$\mathbf{x}^{\lambda^{(m)}(\bm\tau)}$, and all other terms correspond to compositions $\mu$ with
\begin{equation*}
    \mu^+ < \lambda^{(m)+}(\bm\tau) = \left(m^N\right).
\end{equation*}
\end{theorem}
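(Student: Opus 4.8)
The plan is to verify the two assertions of the theorem separately: first that the recursively defined vectors $\ket{\Psi^{(m)}(\mathbf{x};s)}$ are well-defined (the normalisation denominators are nonzero) and solve the $q$KZ equations, and second that their components have the stated triangular form with leading monomial $\mathbf{x}^{\lambda^{(m)}(\bm\tau)}$. The first point is essentially already established: the Proposition of the previous subsection states that the matrix product vector \eqref{eq:Psim} solves the $q$KZ equations for $\xi = s^m$, and by the very definitions \eqref{eq:Sm}--\eqref{eq:Vm} of $\mathbb{S}^{(m)}$, $\mathbb{A}^{(m)}$, $\bbra{W^{(m)}}$, $\kket{V^{(m)}}$ one sees that the $(m-1)$-th and $m$-th tensor slots can be factored out, which yields exactly the recursion \eqref{eq:Psimrec} together with the base case \eqref{eq:Psi1rec} (the cases $N=1,2$ of which are spelled out in Examples~\ref{ex:Psim1} and~\ref{ex:Psim2}). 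So for this part I would just observe that the factorisation is valid because $L(x_j)$ in slot $m$ commutes past $b(x_i)_i$ and the lower-slot factors, exactly as argued in Example~\ref{ex:Psim2}, and that nonvanishing of the denominators $\bbra{w}S^{2m-1}\kket{v}$ and $\bbra{\tilde w}S^{2m-2}\kket{\tilde v}$ (hence of $\Omega^{(m)}$) follows from the explicit infinite-dimensional representation recalled in Appendix~\ref{app:representation}.

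For the triangularity statement I would argue by induction on $m$, and for each fixed $m$ by a separate argument on the monomial structure. The base case $m=1$: each component of $\ket{\Psi^{(1)}}$ is $\bbra{w}S\,(\,\cdot\,)\kket{v}$ where the product factor at site $i$ is $x_i^{-1}+a$ if $\tau_i=0$ and $x_i+a^\dagger$ if $\tau_i=1$. Expanding this product, the extreme monomials in $x_i$ are $x_i^{-1}$ (from $\tau_i=0$) and $x_i^{+1}$ (from $\tau_i=1$), so the monomial $\mathbf{x}^{\lambda^{(1)}(\bm\tau)}$ occurs; one checks its coefficient is $\bbra{w}S\kket{v}$, cancelled by the normalisation, and that every other monomial $\mathbf{x}^\mu$ that appears has each $|\mu_i|\le 1$, with at least one $|\mu_i|<1$, so $\mu^+ < (1^N)$ in dominance order (since all entries of $\mu^+$ are $\le 1$ but their total is strictly smaller than $N$). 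For the inductive step, \eqref{eq:Psimrec} multiplies $\ket{\Psi^{(m-1)}}$ by the matrix $M(\mathbf{x}) := \tfrac{1}{(\cdots)}\,{}^{2m-1,2m-2}\langle L(x_1)_1\cdots L(x_N)_N\rangle$ acting in physical space. The key observation is the monomial content of $L(x)$: from \eqref{eq:Lbdef}, each entry of $L(x)$ is a sum of terms proportional to $x^{-1}$, $1$, or $x$, and more precisely the diagonal action (the term that maps $\tau_i$ to itself) contributes $x_i^{-1}$ in the $(0,0)$ entry and $x_i$ in the $(1,1)$ entry, while off-diagonal entries contribute monomials of smaller or equal degree. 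Combining the leading monomial $\prod x_i^{\lambda^{(m-1)}(\bm\tau)_i}$ of $\Psi^{(m-1)}$ with the leading diagonal contribution $\prod x_i^{\lambda^{(1)}(\bm\tau)_i}$ of $L$ gives $\prod x_i^{\lambda^{(m)}(\bm\tau)_i}$, and its coefficient is $\bbra{w}S^{2m-1}\kket{v}\,\bbra{\tilde w}S^{2m-2}\kket{\tilde v}$ times the (unit) leading coefficient of $\Psi^{(m-1)}$, cancelled by the normalisation in \eqref{eq:Psimrec}.

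The main obstacle — and the step requiring the most care — is controlling the \emph{subleading} terms: I must show that after multiplying by $M(\mathbf{x})$, every monomial $\mathbf{x}^\mu$ with nonzero coefficient in $\psi^{(m)}_{\bm\tau}$ satisfies $\mu^+ < (m^N)$ unless $\mu = \lambda^{(m)}(\bm\tau)$. This is not automatic from the componentwise degree bounds alone, because $M(\mathbf{x})$ mixes components $\bm\sigma$ (with $\sigma^+$ possibly $(m-1)^N$) into $\bm\tau$, and one has to verify that no rearrangement of exponents produces a partition dominating $(m^N)$. The clean way is to track, for each site $i$, the $x_i$-degree: $L(x)$ shifts any monomial's $x_i$-degree by at most $+1$ and at most $-1$, and the base factor $b(x)$ likewise, so after $m$ such factors each $x_i$-exponent lies in $\{-m,\dots,m\}$; hence every monomial $\mathbf{x}^\mu$ appearing has $\|\mu\|_\infty\le m$, forcing $\mu^+ \le (m^N)$ in dominance order, with equality of $\mu^+$ and $(m^N)$ only if every $|\mu_i| = m$. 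One then argues that the only way to reach every $|\mu_i|=m$ through the recursion is via the all-diagonal contribution at every factor and every site, which forces $\mu_i = m$ exactly when $\tau_i=1$ and $\mu_i = -m$ when $\tau_i=0$, i.e. $\mu = \lambda^{(m)}(\bm\tau)$; any non-diagonal choice at some factor strictly lowers some $|\mu_i|$ below $m$ and cannot be compensated, so the resulting $\mu^+$ is strictly dominated by $(m^N)$. Assembling these observations with the inductive hypothesis on $\Psi^{(m-1)}$ completes the proof.
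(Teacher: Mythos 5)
Your proposal is correct and follows essentially the same route as the paper: the recursion is obtained by reordering the matrix product form as in Example~\ref{ex:Psim2}, non-vanishing of the normalisation comes from the representation in Appendix~\ref{app:representation}, and the triangularity is proved by induction on $m$ using the degree shifts of the diagonal and off-diagonal entries of the increment matrix, with the base case read off from Example~\ref{ex:Psim1}. Your explicit tracking of the per-site exponent bound $|\mu_i|\le m$ and of why off-diagonal entries cannot reach $|\mu_i|=m$ merely fills in details the paper states as ``can be deduced by writing the increment matrix in component form''.
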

\begin{proof}
The recursive form \eqref{eq:Psimrec}, \eqref{eq:Psi1rec} is obtained by a reordering of the matrix product
form \eqref{eq:Psim}, as in Example~\ref{ex:Psim2}.

The second part of the claim, on the degree and normalisation of components of the solution, can be proven
inductively.  We assume the property holds at $m-1$ and use \eqref{eq:Psimrec} to obtain the solution at $m$.
That is, we multiply by the `increment' matrix
\begin{equation*}
    \frac{1}{\bbra{w}S^{2m-1}\kket{v} \bbra{\tilde{w}}S^{2m-2}\kket{\tilde{v}}}
    \left(
        {^{2m-1,2m-2}\langle} L(x_1)_1 \ldots L(x_N)_N \rangle
    \right).
\end{equation*}
The following points can be deduced by writing \eqref{eq:Psimrec} and the increment matrix in component form:
\begin{itemize}
    \item
A term $\mathbf{x}^\mu$ with $\mu^+ = \left(m^N\right)$ can only be produced from the leading order terms
of the $m-1$ solution, which correspond to the partition $\left((m-1)^N\right)$, and thus we can ignore
sub-leading terms.

    \item
The diagonal entries of the increment matrix produce the term $\mathbf{x}^{\lambda^{(m)}(\bm\tau)}$ with
coefficient $1$ (plus lower order terms) in $\psi^{(m)}_{\bm\tau}$, from the corresponding
component of the $m-1$ solution.

    \item
The off-diagonal entries of the increment matrix, acting on the leading order term of a component of the $m-1$ solution,
either reduces the degree or leaves it unchanged.
\end{itemize}
These points are sufficient to show that the degree and normalisation properties hold at $m$, assuming they hold
at $m-1$.  As the $m=1$ case was checked in Example~\ref{ex:Psim1}, the properties hold for all $m$.
\end{proof}

The construction of the solution of the left $q$KZ equations at $s = \xi^m$ is similar to the above, and we
defer the details to Appendix~\ref{app:leftgroundstate}.  We state here the main result.

\begin{theorem}
For integer $m \ge 0$ and $\xi = s^m$, solutions of the left $q$KZ equations can be constructed in matrix
product form, and can be defined recursively.  For $m > 0$
\begin{equation}
    \bra{\Phi^{(m)}(\mathbf{x};s)}
    =
    \frac{1}{\bbra{w} S^{2m-1} \kket{v} \bbra{\tilde{w}} S^{2m} \kket{\tilde{v}}}
    \bra{\Phi^{(m-1)}(\mathbf{x};s)}
    \left(
        {^{2m-1,2m}\langle} L\left(\frac{1}{x_1}\right)_1 \ldots L\left(\frac{1}{x_N}\right)_N \rangle
    \right)
    \label{eq:Phimrec}
\end{equation}
with
\begin{equation}
    \bra{\Phi^{(0)}(\mathbf{x};s)} = \bra{1} = (1, 1)^{\otimes N}
\end{equation}
The solution is non-zero: the component of the solution, $\phi^{(m)}_{\bm\tau}(\mathbf{x};s)$, contains the term
$\mathbf{x}^{-\lambda^{(m)}(\bm\tau)}$ with coefficient $1$, and all terms correspond to compositions $\mu$ with
\begin{equation*}
    \mu^+ \le \lambda^{(m)+}(\bm\tau) = \left(m^N\right).
\end{equation*}

\label{theorem:Phisol}
\end{theorem}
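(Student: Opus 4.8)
The plan is to mirror the proof of Theorem~\ref{theorem:Psisol}, exchanging the roles of the two boundaries. First I would set up a matrix product ansatz for the left $q$KZ equations \eqref{eq:leftqKZi}--\eqref{eq:leftqKZN} of the same shape as \eqref{eq:mpaForm}, say $\bra{\Phi(\mathbf{x};s,\xi)} = \frac{1}{\widehat{\Omega}}\,\bbra{\widehat{W}}\,\widehat{\mathbb{A}}_1(x_1)\cdots\widehat{\mathbb{A}}_N(x_N)\,\widehat{\mathbb{S}}\,\kket{\widehat{V}}$, and establish the left analogue of the sufficient conditions \eqref{eq:ZF}--\eqref{eq:deformedGZN}: a reversed Zamolodchikov--Faddeev relation $\widehat{\mathbb{A}}_1(x_i)\widehat{\mathbb{A}}_2(x_{i+1})\check{R}(x_{i+1}/x_i) = \widehat{\mathbb{A}}_1(x_{i+1})\widehat{\mathbb{A}}_2(x_i)$ together with deformed Ghoshal--Zamolodchikov relations at the two ends. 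These follow either from a direct check, or by transposing the right-boundary relations and invoking the $\check{R}$, $\widetilde{K}$, $\overline{K}$ transpose symmetries already used in Lemma~\ref{lemma:leftrightsol}.

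For $\xi = s^m$ I would then build the explicit solution from the oscillator algebra of Definition~\ref{def:oscAlg}, using the matrix $L$ of \eqref{eq:Lbdef} evaluated at the inverse arguments $L(1/x)$ (the $b$-factor of the right construction being replaced here by the trivial core $\bra{1}$ at $m=0$), together with boundary vectors built from $\bbra{w}$, $\bbra{\widetilde{w}}$, $\kket{v}$, $\kket{\widetilde{v}}$ and suitably shifted powers of $S$. The local exchange relations $\check{R}(x_{i+1}/x_i)b_1(x_i)b_2(x_{i+1}) = b_1(x_{i+1})b_2(x_i)$ and $\check{R}(x_{i+1}/x_i)L_1(x_i)L_2(x_{i+1}) = L_1(x_{i+1})L_2(x_i)\check{R}(x_{i+1}/x_i)$ proved inside Theorem~\ref{theorem:Psisol}, together with the boundary relations there, transpose and reflect into exactly what is required; the one genuinely new input is, again, that the parameter $\xi$ picks up a factor $s$ each time an $L$ passes through $\widetilde{K}$ at the left boundary, so that the constraint $\xi = s^m$ threads the $m$-fold product. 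Finally the matrix product is rearranged into the recursive form \eqref{eq:Phimrec} exactly as in Example~\ref{ex:Psim2}: the increment ${}^{2m-1,2m}\langle L(1/x_1)_1\cdots L(1/x_N)_N\rangle$ lives in the newly added pair of auxiliary tensor factors and so commutes past $\bra{\Phi^{(m-1)}(\mathbf{x};s)}$, while the base case $m=0$ collapses to $\bra{1} = (1,1)^{\otimes N}$.

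The non-vanishing and leading-term statement is proved by induction on $m$, following the second half of the proof of Theorem~\ref{theorem:Psisol}. The base case $m=0$ is immediate, since $\bra{\Phi^{(0)}} = (1,1)^{\otimes N}$ has every component equal to $1 = \mathbf{x}^{-\lambda^{(0)}(\bm\tau)}$ and $\lambda^{(0)+}(\bm\tau) = (0^N)$. Assuming the property at $m-1$, one writes \eqref{eq:Phimrec} and the increment matrix in component form and applies the three observations of that proof, adapted to $L(1/x)$: a monomial $\mathbf{x}^\mu$ with $\mu^+ = (m^N)$ can only come from the leading terms of the $(m-1)$ solution; the diagonal part of the increment reproduces the leading monomial $\mathbf{x}^{-\lambda^{(m)}(\bm\tau)}$ of the $m$ solution with coefficient $1$; and the off-diagonal part acting on a leading term either preserves or lowers the relevant degree. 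The normalisation $\bbra{w}S^{2m-1}\kket{v}\,\bbra{\widetilde{w}}S^{2m}\kket{\widetilde{v}}$, shown non-zero via the Fock representation of Appendix~\ref{app:representation}, then fixes the coefficient at $1$. The only structural difference from Theorem~\ref{theorem:Psisol} is that here the bound is $\mu^+ \le (m^N)$ and not strict: by Lemma~\ref{lemma:asepleftEigenref}, for $\xi = s^m$ the reference component $\phi_{\circ\ldots\circ}$ is an eigenfunction attached to the composition $(m^N)$, which is the \emph{maximal} element of its $W_0$-orbit in dominance order, rather than to the minimal element $((-m)^N)$ as in the right case; hence monomials of the same total degree as the leading one do occur, and the construction is genuinely distinct from what the Gallavotti--Cohen symmetry of Lemma~\ref{lemma:leftrightsol} would give starting from $\ket{\Psi^{(m)}}$.

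I expect the main obstacle to be bookkeeping rather than anything conceptual: fixing the powers of $S$ in $\widehat{\mathbb{S}}^{(m)}$ and in the boundary vectors so that each passage of $L$ through $\widetilde{K}$ shifts $\xi$ by precisely one power of $s$ in the sense appropriate to the \emph{left} equations, and so that the recursion emerges with exactly the normalisation $\bbra{w}S^{2m-1}\kket{v}\,\bbra{\widetilde{w}}S^{2m}\kket{\widetilde{v}}$ that appears in \eqref{eq:Phimrec} rather than some neighbouring combination. The underlying algebraic identities are the very ones already verified for the right case; what is new is only the transposition and reflection symmetry and the index arithmetic, which is why the details are deferred to Appendix~\ref{app:leftgroundstate}.
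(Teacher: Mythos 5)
Your proposal is correct and follows essentially the same route as the paper: Appendix~\ref{app:leftgroundstate} sets up exactly the left matrix product ansatz with reversed ZF and deformed GZ relations, builds $\mathbb{A}^{(m)}_{\text{left}}(x)=b_{\text{left}}(x)\dot\otimes L(1/x)^{\dot\otimes m}$ with $b_{\text{left}}=(1,1)$ over the same oscillator algebra, threads $\xi=s^m$ through the $m$ passages of $L$ across $\widetilde{K}$, and then runs the same induction as in Theorem~\ref{theorem:Psisol} for the degree and normalisation claims. The only points of divergence are cosmetic or minor: the paper places $\mathbb{S}^{(m)}_{\text{left}}=1\otimes S\otimes\cdots\otimes S^{2m}$ adjacent to the \emph{left} boundary vector (not on the right as in your tentative ansatz, which matters since $S$ does not commute with $a$, $a^\dagger$), and it justifies the non-strict bound $\mu^+\le(m^N)$ by exhibiting the coefficient of $\mathbf{x}^{(m^N)}$ produced by the top row of the increment matrix, whereas your dominance-order explanation via the maximality of $(m^N)$ in its $W_0$-orbit is an equally valid (and arguably more conceptual) account of the same phenomenon.
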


\begin{corollary}
For $m > 0$ and $\xi = t_0^{-1} t_N^{-1} t^{-(N-1)} s^{-m}$, the right $q$KZ equations
\eqref{eq:qKZi} -- \eqref{eq:qKZN} have solution
\begin{equation}
    \ket{\Psi(\mathbf{x}; s, \xi = t_0^{-1} t_N^{-1} t^{-(N-1)} s^{-m})}
    =
    U_{\text{GC}} \ket{\Phi^{(m)}(\mathbf{x}; s)}.
\end{equation}
\label{corr:PsiPhisol}
\end{corollary}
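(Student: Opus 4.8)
The plan is to obtain the statement as an immediate consequence of Theorem~\ref{theorem:Phisol} together with Lemma~\ref{lemma:leftrightsol}; there is essentially no new work to do beyond checking the parameter bookkeeping. By Theorem~\ref{theorem:Phisol}, for integer $m>0$ the vector $\bra{\Phi^{(m)}(\mathbf{x};s)}$ is a non-zero solution of the left $q$KZ equations \eqref{eq:leftqKZi} -- \eqref{eq:leftqKZN} with deformation parameters $s$ and $\xi = s^m$. First I would invoke Lemma~\ref{lemma:leftrightsol} with exactly this choice of parameters: it produces, from any left $q$KZ solution $\bra{\Phi(\mathbf{x};s,\xi)}$, a solution $\ket{\Psi(\mathbf{x};s,\xi')} = U_{\text{GC}}\ket{\Phi(\mathbf{x};s,\xi)}$ of the right $q$KZ equations \eqref{eq:qKZi} -- \eqref{eq:qKZN}, where $\xi' = t_0^{-1} t_N^{-1} t^{-(N-1)} \xi^{-1}$ and we use the convention $\bra{\cdot}^T = \ket{\cdot}$. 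Substituting $\xi = s^m$ gives $\xi' = t_0^{-1} t_N^{-1} t^{-(N-1)} s^{-m}$, which is precisely the value appearing in the corollary, so this directly yields the claimed matrix product formula for $\ket{\Psi(\mathbf{x}; s, \xi = t_0^{-1} t_N^{-1} t^{-(N-1)} s^{-m})}$.

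The one point still to be addressed is that the resulting vector is not identically zero. Since $U_{\text{GC}}$, as defined in \eqref{eq:UGC}, is a tensor product of invertible diagonal $2\times 2$ matrices, it is invertible; hence $U_{\text{GC}}\ket{\Phi^{(m)}(\mathbf{x};s)} \ne 0$ follows at once from the non-vanishing of $\bra{\Phi^{(m)}(\mathbf{x};s)}$ established in Theorem~\ref{theorem:Phisol} (proven via the leading-term/degree analysis carried out in Appendix~\ref{app:leftgroundstate}). If one wants an explicit record of the leading behaviour, one can simply note that $U_{\text{GC}}$ is lower-triangular-free in the monomial degree and multiplies each component by a nonzero scalar, so the component $\psi_{\bm\tau}$ of the image still contains $\mathbf{x}^{-\lambda^{(m)}(\bm\tau)}$ with nonzero coefficient.

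I do not expect any genuine obstacle here: the substance of the corollary is entirely carried by the matrix product construction and its non-degeneracy (Theorem~\ref{theorem:Phisol}) and by the Gallavotti--Cohen argument underlying Lemma~\ref{lemma:leftrightsol} (transposition of the left $q$KZ relations combined with the symmetry properties of $\check{R}$, $\widetilde{K}$, $\overline{K}$). The only thing to verify carefully is that the Gallavotti--Cohen parameter map $\xi \mapsto \xi' = t_0^{-1} t_N^{-1} t^{-(N-1)} \xi^{-1}$ sends $s^m$ to $t_0^{-1} t_N^{-1} t^{-(N-1)} s^{-m}$, which matches case~2 of the discussion following Lemma~\ref{lemma:asepleftEigenref} and confirms that the $\xi$-dependence of the constructed right $q$KZ solution is consistent with the eigenvalue \eqref{eq:asepEigenref} for the composition $\left((-m)^N\right)$.
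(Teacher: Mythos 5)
Your proposal is correct and follows exactly the paper's own (one-line) proof, which simply cites Lemma~\ref{lemma:leftrightsol} applied to the left $q$KZ solution of Theorem~\ref{theorem:Phisol} at $\xi=s^m$; the extra remark that $U_{\text{GC}}$ is invertible and hence preserves non-vanishing is a harmless addition. One peripheral slip in your closing aside: for the \emph{right} $q$KZ solution at $\xi'=t_0^{-1}t_N^{-1}t^{-(N-1)}s^{-m}$ the eigenvalue \eqref{eq:asepEigenref} corresponds to the composition $\left(m^N\right)$ (case~2 after Lemma~\ref{lemma:asepEigenref}), not $\left((-m)^N\right)$.
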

\begin{proof}
This follows from Lemma~\ref{lemma:leftrightsol}.
\end{proof}

The $m=0$ case is a bit special:
$\ket{\Psi(\mathbf{x}; s, \xi = t_0^{-1} t_N^{-1} t^{-(N-1)})} = U_{\text{GC}} \ket{1}$. The solution does not depend on 
$\mathbf{x}$ and $s$. Imposing in addition that $\xi=1$, i.e. $t_0t_Nt^{N-1}=1$, gives
a very simple ASEP stationary state. Indeed the system is at thermal equilibrium in this case: written in the usual ASEP parameters
the constraint is $\frac{\alpha\beta}{\gamma\delta}\left(\frac{p}{q}\right)^{N-1}=1$.

\subsection{Symmetric Koornwinder polynomials}
We can now make the connection between solutions of the $q$KZ equations, and the symmetric and non-symmetric
Koornwinder polynomials.

\begin{lemma}
The component $\psi^{(m)}_{\circ \ldots \circ}$ of the vector $\ket{\Psi^{(m)}(\mathbf{x},s)}$ is the
non-symmetric Koornwinder polynomial $E_{((-m)^N)}$.  All other components can be constructed through the
relations
\begin{equation}
\begin{aligned}
    \psi^{(m)}_{\circ \ldots \circ \bullet} & = t_N^{-1/2} T_N^{-1} \psi^{(m)}_{\circ \ldots \circ \circ},
    \\
    \psi^{(m)}_{\ldots \bullet \circ \ldots}
    &
    = t^{-1/2} T_i^{-1} \psi^{(m)}_{\ldots \circ \bullet \ldots},
    \qquad 1 \le i \le N-1.
\end{aligned}
\label{eq:constructpsis}
\end{equation}
The set of all components $\{\psi^{(m)}_{\bm\tau}\}$ forms a basis for $\mathcal{R}^{(m^N)}$, the space
spanned by non-symmetric Koornwinder polynomials $\{E_\mu | \mu \in \mathbb{Z}^N, \mu^+ = (m^N)\}$.
\label{lemma:psibasis}
\end{lemma}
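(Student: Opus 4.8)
The plan is to establish three things in sequence: (i) that $\psi^{(m)}_{\circ\ldots\circ}$ equals $E_{((-m)^N)}$; (ii) that the stated relations \eqref{eq:constructpsis} correctly generate all other components from the empty-lattice one; and (iii) that the resulting set of $2^N$ functions is a basis of $\mathcal{R}^{(m^N)}$. For (i), I would invoke the characterisation of the non-symmetric Koornwinder polynomial as the unique eigenfunction of the commuting $Y_i$ with a prescribed leading monomial. By Lemma~\ref{lemma:asepEigenref}, the component $\psi^{(m)}_{\circ\ldots\circ}$ of any $q$KZ solution is a $Y_i$-eigenfunction; with $\xi = s^m$ the eigenvalue is exactly \eqref{eq:ynegm}, which is the eigenvalue attached to the composition $((-m)^N)$. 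It then remains only to check the leading-monomial normalisation: by Theorem~\ref{theorem:Psisol}, $\psi^{(m)}_{\circ\ldots\circ}$ has leading term $\mathbf{x}^{\lambda^{(m)}(\circ\ldots\circ)} = \mathbf{x}^{(-m,\ldots,-m)} = \mathbf{x}^{((-m)^N)}$ with coefficient $1$, and all remaining terms $\mathbf{x}^\mu$ satisfy $\mu^+ < (m^N) = ((-m)^N)^+$, hence $\mu \prec ((-m)^N)$. This matches the triangularity in the definition of $E_\lambda$ exactly, so uniqueness forces $\psi^{(m)}_{\circ\ldots\circ} = E_{((-m)^N)}$.

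For (ii), the relations in \eqref{eq:constructpsis} are just inversions of the exchange relations of Lemma~\ref{lemma:rightExchange}. Concretely, \eqref{eq:qkzExchangeRight} reads $T_N\psi^{(m)}_{\ldots\bullet} = t_N^{-1/2}\psi^{(m)}_{\ldots\circ}$; since $T_N$ is invertible in the Hecke algebra (its quadratic relation gives $T_N^{-1} = T_N - (t_N^{1/2}-t_N^{-1/2})$), this rearranges to $\psi^{(m)}_{\ldots\circ\bullet} = t_N^{-1/2}T_N^{-1}\psi^{(m)}_{\ldots\circ\circ}$. Likewise \eqref{eq:qkzExchangeBulk01} gives $\psi^{(m)}_{\ldots\bullet\circ\ldots} = t^{-1/2}T_i^{-1}\psi^{(m)}_{\ldots\circ\bullet\ldots}$. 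One checks that starting from $\psi^{(m)}_{\circ\ldots\circ}$ and applying these moves one can reach every configuration $\bm\tau$: first create a single $\bullet$ at site $N$, then transport it leftwards with the bulk relations, and repeat to build up any occupation pattern — the relations \eqref{eq:qkzExchangeBulk00}, \eqref{eq:qkzExchangeBulk11} guarantee consistency when two particles or two holes are adjacent. Since the $q$KZ solution is unique once $\psi^{(m)}_{\circ\ldots\circ}$ is fixed (the exchange relations determine all other components), this reconstruction is forced, and in particular it is independent of the order of moves.

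For (iii), I would argue as follows. Each $\psi^{(m)}_{\bm\tau}$ lies in $\mathcal{R}^{(m^N)}$: it is obtained from $E_{((-m)^N)} \in \mathcal{R}^{(m^N)}$ by applying the Hecke operators $T_i^{-1}$, $T_N^{-1}$, and by the standard intertwining property these operators preserve the span of $\{E_\mu : \mu^+ = (m^N)\}$ (they map each such $E_\mu$ to a linear combination of others with the same $\mu^+$; this is where I would cite the relevant intertwiner facts for non-symmetric Koornwinder polynomials, as in \cite{Sahi99,Stokman00}). So $\{\psi^{(m)}_{\bm\tau}\}$ is a set of $2^N$ vectors inside $\mathcal{R}^{(m^N)}$, a space of dimension $2^N$ (the compositions $\mu$ with $\mu^+ = (m^N)$ are precisely the $2^N$ sign-choices $(\pm m,\ldots,\pm m)$). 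It therefore suffices to prove linear independence, and this is exactly what the leading-term statement of Theorem~\ref{theorem:Psisol} provides: $\psi^{(m)}_{\bm\tau}$ has leading monomial $\mathbf{x}^{\lambda^{(m)}(\bm\tau)}$, and as $\bm\tau$ ranges over $\{0,1\}^N$ these $2^N$ monomials $\mathbf{x}^{(\pm m,\ldots,\pm m)}$ are all distinct, with all lower-order terms supported on $\mu \prec \lambda^{(m)+}$. Hence the transition matrix from $\{\psi^{(m)}_{\bm\tau}\}$ to the monomial-type basis indexed by these compositions is unitriangular with respect to the order $\prec$, so invertible; the $\psi^{(m)}_{\bm\tau}$ are independent and span $\mathcal{R}^{(m^N)}$.

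The main obstacle is the intertwining claim used in (iii): one must know that the operators $T_i^{-1}$ and $T_N^{-1}$ genuinely map $\mathcal{R}^{(m^N)}$ into itself and, relatedly, that applying them to $E_{((-m)^N)}$ following the move-sequence of (ii) never produces a component lying outside the span of the $E_\mu$ with $\mu^+=(m^N)$. Equivalently, one needs that no move changes $\mu^+$. This is standard in the Hecke-algebra theory of Koornwinder polynomials — the $T_i$ act within each $W$-orbit block — but it is the point that requires care rather than "direct computation," and I would spell it out by combining the eigenvalue bookkeeping (each $\psi^{(m)}_{\bm\tau}$ remains a simultaneous $Y_i$-eigenfunction, or a generalised one, with eigenvalue in the correct orbit) with the degree bound from Theorem~\ref{theorem:Psisol}. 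Given that, the rest is bookkeeping.
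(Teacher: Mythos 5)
Your proof is correct, and for the identification $\psi^{(m)}_{\circ\ldots\circ}=E_{((-m)^N)}$ (via Lemma~\ref{lemma:asepEigenref}, Theorem~\ref{theorem:Psisol} and uniqueness) and for the derivation of \eqref{eq:constructpsis} by inverting \eqref{eq:qkzExchangeRight} and \eqref{eq:qkzExchangeBulk01}, you follow the same route as the paper. Where you genuinely diverge is the basis claim. The paper disposes of it in one line: the identification of the reference component together with the exchange relations are exactly the preconditions of Proposition~1 and Corollary~1 of \cite{CantiniGDGW16}, and the conclusion is imported from there. You instead give a self-contained argument: containment of each $\psi^{(m)}_{\bm\tau}$ in $\mathcal{R}^{(m^N)}$ via the fact that the finite Hecke generators $T_i^{\pm 1}$ preserve each orbit space $\mathcal{R}^\lambda$, a dimension count ($\dim\mathcal{R}^{(m^N)}=2^N$, one $E_\mu$ per sign vector $(\pm m,\ldots,\pm m)$), and linear independence read off from Theorem~\ref{theorem:Psisol} --- since every sub-leading monomial $\mathbf{x}^\mu$ of $\psi^{(m)}_{\bm\tau}$ has $\mu^+<(m^N)$, projecting onto the span of the monomials $\mathbf{x}^\nu$ with $\nu^+=(m^N)$ sends $\psi^{(m)}_{\bm\tau}$ to the single monomial $\mathbf{x}^{\lambda^{(m)}(\bm\tau)}$, and these are pairwise distinct; the projected matrix is in fact the identity, which is even cleaner than the ``unitriangular'' phrasing you use. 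The one ingredient you rightly flag as needing care --- that $T_i^{-1}$ and $T_N^{-1}$ map $\mathcal{R}^{(m^N)}$ into itself --- is standard intertwiner theory (\cite{Sahi99,Stokman00}) and is in substance the same input the paper outsources to \cite{CantiniGDGW16}; with that reference supplied, your version buys a proof independent of the precise external statements, at the cost of slightly more bookkeeping.
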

\begin{proof}
By Theorem~\ref{theorem:Psisol}, and Lemma~\ref{lemma:asepEigenref} with $\xi = s^m$, $\psi^{(m)}_{\circ
\ldots \circ}$ is an eigenfunction of the $Y_i$ operators, and is a Laurent polynomial with the required
degree and normalisation.  Thus by uniqueness, we can identify $\psi^{(m)}_{\circ \ldots \circ} =
E_{((-m)^N)}.$  The relations \eqref{eq:constructpsis} come from the exchange relations
\eqref{eq:qkzExchangeBulk01} and \eqref{eq:qkzExchangeRight}.

The preceding parts of this lemma give the preconditions for
Proposition 1 and Corollary 1 of \cite{CantiniGDGW16}, from which it follows that
$\{\psi^{(m)}_{\bm\tau}\}$ forms a basis for $\mathcal{R}^{(m^N)}$.
\end{proof}

\begin{lemma}
The component $\phi^{(m)}_{\circ \ldots \circ}$ of the vector $\bra{\Phi^{(m)}(\mathbf{x},s)}$ is the
non-symmetric Koornwinder polynomial $E_{(m^N)}$.  All other components can be constructed through the
relations
\begin{align*}
    \phi^{(m)}_{\circ \ldots \circ \bullet} & = t_N^{1/2} T_N^{-1} \phi^{(m)}_{\circ \ldots \circ \circ},
    \\
    \phi^{(m)}_{\ldots \bullet \circ \ldots}
    &
    = t^{1/2} T_i^{-1} \phi^{(m)}_{\ldots \circ \bullet \ldots},
    \qquad 1 \le i \le N-1.
\end{align*}
The set of all components $\{\phi^{(m)}_{\bm\tau}\}$ forms a basis for $\mathcal{R}^{(m^N)}$.
\end{lemma}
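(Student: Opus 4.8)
The plan is to mirror, step for step, the proof of Lemma~\ref{lemma:psibasis}, exchanging the right $q$KZ data for the left. First I would invoke Theorem~\ref{theorem:Phisol} for the empty configuration $\bm\tau = \circ\ldots\circ$: since $-\lambda^{(m)}(\circ\ldots\circ) = (m^N)$, the component $\phi^{(m)}_{\circ\ldots\circ}$ is a Laurent polynomial whose monomial $\mathbf{x}^{(m^N)}$ has coefficient $1$, with every other monomial $\mathbf{x}^\mu$ satisfying $\mu^+ \le (m^N)$ and $\mu \ne (m^N)$. A short check in the dominance order then identifies this monomial support with $\{\mu : \mu \prec (m^N)\}$: if $\mu^+ = (m^N)$ then every entry of $\mu$ equals $\pm m$, so $\sum_{i=1}^j \mu_i \le jm$ for all $j$, i.e. $\mu \le (m^N)$ automatically, whence $\mu \prec (m^N)$ as soon as $\mu \ne (m^N)$; and $\mu^+ < (m^N)$ gives $\mu \prec (m^N)$ by definition. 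Hence $\phi^{(m)}_{\circ\ldots\circ}$ carries exactly the $\mathbf{x}$-triangular expansion that appears in the definition of $E_{(m^N)}$.

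Next I would apply Lemma~\ref{lemma:asepleftEigenref} with $\xi = s^m$, which yields $Y_i \phi^{(m)}_{\circ\ldots\circ} = t_0^{1/2} t_N^{1/2} s^m t^{N-i} \phi^{(m)}_{\circ\ldots\circ}$; by \eqref{eq:yposm} these are precisely the eigenvalues $y((m^N))_i$. Since, by definition, $E_{(m^N)}$ is the unique Laurent polynomial that is simultaneously a $Y_i$-eigenfunction with these eigenvalues and has the triangular expansion just described, I conclude $\phi^{(m)}_{\circ\ldots\circ} = E_{(m^N)}$. The two stated construction relations are nothing but \eqref{eq:leftqkzExchangeRight} and \eqref{eq:leftqkzExchangeBulk01} of Lemma~\ref{lemma:leftqkzExchange}, rewritten by moving $T_N$ and $T_i$ to the other side as $T_N^{-1}$ and $T_i^{-1}$; and that every $\phi^{(m)}_{\bm\tau}$ is produced this way follows exactly as in Lemma~\ref{lemma:psibasis}, by first creating a particle at site $N$, then transporting it leftward with the bulk relation, and iterating over all lattice configurations.

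Finally, for the basis assertion I would argue as in Lemma~\ref{lemma:psibasis}: with $\phi^{(m)}_{\circ\ldots\circ} = E_{(m^N)}$ established and all the components $\phi^{(m)}_{\bm\tau}$ linked through the Hecke-generator relations above, the preconditions of Proposition~1 and Corollary~1 of \cite{CantiniGDGW16} are met — those statements use only the structure of the exchange relations, not the particular scalar prefactors, so the passage from the right-handed to the left-handed situation is immaterial — and the claim that $\{\phi^{(m)}_{\bm\tau}\}$ is a basis of $\mathcal{R}^{(m^N)}$ follows. I do not anticipate a genuine obstacle; the whole argument is a transcription of the right-handed one, and the only step carrying content beyond bookkeeping is the elementary dominance-order identification of the monomial support with $\{\mu \prec (m^N)\}$.
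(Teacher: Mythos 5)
Your argument is correct and follows exactly the route the paper takes: its proof of this lemma simply says it "follows in the same way" as Lemma~\ref{lemma:psibasis}, citing Theorem~\ref{theorem:Phisol} and Lemmas~\ref{lemma:asepleftEigenref} and~\ref{lemma:leftqkzExchange}, which is precisely the transcription you carry out (your explicit dominance-order check of the monomial support is a detail the paper leaves implicit). Nothing further is needed.
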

\begin{proof}
This follows in the same way, with reference to Theorem~\ref{theorem:Phisol}, and Lemmas
\ref{lemma:asepleftEigenref} and \ref{lemma:leftqkzExchange}.
\end{proof}

\begin{lemma}
Given a solution $\ket{\Psi(\mathbf{x}; s, \xi)}$ of the $q$KZ equations \eqref{eq:qKZi} --
\eqref{eq:qKZN}, the sum of components
\begin{equation}
    \mathcal{Z}(\mathbf{x}; s, \xi) = \langle 1 | \Psi(\mathbf{x}; s, \xi) \rangle
\end{equation}
is $W_0$ invariant.
\label{lemma:ZW0}
\end{lemma}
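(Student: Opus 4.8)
## Proof plan

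The plan is to show $W_0$-invariance of $\mathcal{Z}(\mathbf{x};s,\xi)$ by establishing invariance under each generator $s_i$ ($1 \le i \le N-1$) of the finite Weyl group $W_0$, since these generate $W_0$ and invariance under all of them is equivalent to $W_0$-invariance. Recall from \eqref{eq:sidef} that $s_i$ acts by swapping $x_i \leftrightarrow x_{i+1}$. So what must be shown is that $\mathcal{Z}(\ldots, x_i, x_{i+1}, \ldots; s, \xi) = \mathcal{Z}(\ldots, x_{i+1}, x_i, \ldots; s, \xi)$ for each $i$.

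First I would contract the $q$KZ equation \eqref{eq:qKZi} with the all-ones covector $\bra{1} = (1,1)^{\otimes N}$ from the left. This gives
\begin{equation*}
    \bra{1} \check{R}_i(x_{i+1}/x_i) \ket{\Psi(\ldots, x_i, x_{i+1}, \ldots)}
    = \langle 1 | \Psi(\ldots, x_{i+1}, x_i, \ldots) \rangle
    = \mathcal{Z}(\ldots, x_{i+1}, x_i, \ldots; s, \xi).
\end{equation*}
The key step is then to observe that $\bra{1}$ is a left eigenvector of $\check{R}_i(x)$ with eigenvalue $1$, i.e. $\bra{1} \check{R}_i(x) = \bra{1}$. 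This is immediate from the explicit form \eqref{eq:Rcheck}, $\check{R}(x) = 1 + r(x) w$, together with the fact that $w$ (given in \eqref{eq:BBBarwHecke}) has columns summing to zero, so that $(1,1,1,1) w = 0$ in the two-site space on which it acts; tensoring with identities everywhere else gives $\bra{1} w_{i,i+1} = 0$ and hence $\bra{1}\check{R}_i(x) = \bra{1}$. Substituting this into the contracted $q$KZ equation yields $\mathcal{Z}(\ldots, x_i, x_{i+1}, \ldots; s, \xi) = \mathcal{Z}(\ldots, x_{i+1}, x_i, \ldots; s, \xi)$, which is exactly invariance under $s_i$.

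Since $i$ was arbitrary in the range $1 \le i \le N-1$, and the $s_i$ generate $W_0$, the sum of components $\mathcal{Z}(\mathbf{x}; s, \xi)$ is $W_0$-invariant. I do not expect a genuine obstacle here: the entire content is the stochasticity-type property $\bra{1}\check{R}_i(x) = \bra{1}$ (equivalently $\bra{1} w = 0$), which is built into the construction because $w$ is a conservative bulk rate matrix. The only point requiring a little care is to note that $s_0$ and $s_N$ are \emph{not} needed — only $W_0$, not the full affine Weyl group $W$, is being claimed — so one does not need to contract the boundary $q$KZ relations \eqref{eq:qKZ1}, \eqref{eq:qKZN}; indeed $\bra{1}$ is generally not preserved by $\widetilde K_1$ or $\overline K_N$, consistent with $\mathcal{Z}$ not being invariant under the boundary reflections.
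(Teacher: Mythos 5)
There is a genuine gap. You have identified $W_0$ with the symmetric group generated by $s_1,\ldots,s_{N-1}$, but the paper defines the finite Weyl group of type $C_N$ as the subgroup generated by $s_1,\ldots,s_N$, where $s_N: x_N \to x_N^{-1}$ is a sign change, not a transposition. So $W_0$ is the hyperoctahedral group, and your argument only establishes invariance under permutations of the $x_i$; it says nothing about invariance under $x_i \to x_i^{-1}$. This is not a cosmetic omission: the full type-$C$ invariance is exactly what is needed later to identify $\mathcal{Z}^{(m)}$ with the symmetric Koornwinder polynomial $P_{(m^N)}$ via Theorem~\ref{theorem:SahiSymmetric}, since those polynomials are invariant under the whole hyperoctahedral group.

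Moreover, your closing assertion that $\bra{1}$ is \emph{not} preserved by $\overline{K}_N$ is false, and correcting it is precisely how the gap is filled. From \eqref{eq:Kbar} and \eqref{eq:BBBarwHecke}, $\overline{K}(x) = 1 + k(x^{-1};t_N^{1/2},u_N^{1/2})\,\overline{B}$, and each column of $\overline{B}$ sums to zero, so $(1,1)\overline{B}=0$ and hence $\bra{1}\overline{K}_N(x)=\bra{1}$. Contracting the right boundary $q$KZ equation \eqref{eq:qKZN} with $\bra{1}$ then gives $\mathcal{Z}(\ldots,x_{N-1},x_N)=\mathcal{Z}(\ldots,x_{N-1},1/x_N)$, i.e.\ invariance under $s_N$; combined with your bulk argument this yields the full $W_0$-invariance, which is exactly the paper's proof. (Your observation about $s_0$ is correct: $\bra{1}$ is not a left eigenvector of $\widetilde{K}_1(x;\xi)$ for $\xi\neq 1$, since the columns of $B(\xi)$ sum to $(\xi-1)t_0^{1/2}$ and $(\xi^{-1}-1)t_0^{-1/2}$, and indeed only $W_0$, not the affine group $W$, is claimed.)
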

\begin{proof}
We first note that $\bra{1}$ is a left eigenvector of $\check{R}_i$, $\overline{K}_N$ with eigenvalue 1 (see
\eqref{eq:Rcheck}, \eqref{eq:Kbar}).  Then applying $\bra{1}$ to the bulk and right boundary $q$KZ equations
\eqref{eq:qKZi}, \eqref{eq:qKZN} we see that $\mathcal{Z}(\mathbf{x}; s, \xi)$ is invariant under $s_i$, $1
\le i \le N$, and hence is $W_0$ invariant.
\end{proof}

\begin{theorem} \label{thm:symmetricKoornAsNormalisation}
The sum of components of $\ket{\Psi^{(m)}(\mathbf{x}; s)}$ is the
symmetric Koornwinder polynomial $P_{\left(m^N\right)}$.  That is
\begin{equation}
    P_{\left(m^N\right)}(\mathbf{x}) = \mathcal{Z}^{(m)}(\mathbf{x}; s),
\end{equation}
where
\begin{equation}
    \mathcal{Z}^{(m)}(\mathbf{x}; s)= \langle 1 | \Psi^{(m)}(\mathbf{x}; s) \rangle,
\end{equation}
and $\ket{\Psi^{(m)}(\mathbf{x}; s)}$ is the $q$KZ solution with $\xi = s^m$, constructed as in
Theorem~\ref{theorem:Psisol}.
\end{theorem}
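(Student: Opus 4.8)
The plan is to pin down $\mathcal{Z}^{(m)}(\mathbf{x};s)$ by appealing to the uniqueness statement of Theorem~\ref{theorem:SahiSymmetric}: since $P_{\left(m^N\right)}$ is the \emph{unique} $W_0$-invariant element of $\mathcal{R}^{\left(m^N\right)}$ whose coefficient of $\mathbf{x}^{\left(m^N\right)}$ is $1$, it suffices to verify that $\mathcal{Z}^{(m)}$ (i) lies in $\mathcal{R}^{\left(m^N\right)}$, (ii) is $W_0$-invariant, and (iii) has the coefficient of $\mathbf{x}^{\left(m^N\right)}$ equal to $1$. All three facts are assembled from results already in hand.

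For (i), Lemma~\ref{lemma:psibasis} tells us that every component $\psi^{(m)}_{\bm\tau}$ lies in $\mathcal{R}^{\left(m^N\right)}$ (in fact the components form a basis of that space); writing $\mathcal{Z}^{(m)} = \langle 1 | \Psi^{(m)}(\mathbf{x};s)\rangle = \sum_{\bm\tau} \psi^{(m)}_{\bm\tau}$ as a finite sum of these components shows $\mathcal{Z}^{(m)} \in \mathcal{R}^{\left(m^N\right)}$. For (ii), the vector $\ket{\Psi^{(m)}(\mathbf{x};s)}$ of Theorem~\ref{theorem:Psisol} is a solution of the $q$KZ equations with $\xi = s^m$, so Lemma~\ref{lemma:ZW0} applies directly and gives that $\mathcal{Z}^{(m)}$ is $W_0$-invariant.

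Condition (iii) is where a small amount of care with degrees is needed. The monomial $\mathbf{x}^{\left(m^N\right)} = x_1^m\cdots x_N^m$ equals $\mathbf{x}^{\lambda^{(m)}(\bm\tau)}$ for exactly one configuration, namely $\bm\tau = (\bullet,\ldots,\bullet)$, and by Theorem~\ref{theorem:Psisol} it occurs there as the leading term with coefficient $1$. For any other $\bm\tau$ the leading monomial $\mathbf{x}^{\lambda^{(m)}(\bm\tau)}$ is different from $\mathbf{x}^{\left(m^N\right)}$, while every remaining monomial $\mathbf{x}^\mu$ occurring in $\psi^{(m)}_{\bm\tau}$ has $\mu^+ < \left(m^N\right)$ in dominance order, hence $\mu \neq \left(m^N\right)$ (reordering $\left(m^N\right)$ leaves it fixed, so $\mu = \left(m^N\right)$ would force $\mu^+ = \left(m^N\right)$). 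Therefore $\mathbf{x}^{\left(m^N\right)}$ is contributed to the sum $\mathcal{Z}^{(m)}$ solely by $\psi^{(m)}_{\bullet\ldots\bullet}$, with coefficient $1$. Combining (i)--(iii) with Theorem~\ref{theorem:SahiSymmetric} identifies $\mathcal{Z}^{(m)}(\mathbf{x};s) = P_{\left(m^N\right)}(\mathbf{x})$.

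The main obstacle here is really only this degree bookkeeping in step (iii): one must be sure the normalisation bounds of Theorem~\ref{theorem:Psisol} genuinely forbid the dominant monomial $\mathbf{x}^{\left(m^N\right)}$ from appearing in any component other than $\psi^{(m)}_{\bullet\ldots\bullet}$, and that ``coefficient of $\mathbf{x}^\lambda$'' in Sahi's characterisation refers to the coefficient of the dominant monomial of the $W_0$-orbit, which it does for the partition $\lambda = \left(m^N\right)$. No further structural input is required beyond Lemmas~\ref{lemma:psibasis} and~\ref{lemma:ZW0}, Theorem~\ref{theorem:Psisol}, and Theorem~\ref{theorem:SahiSymmetric}.
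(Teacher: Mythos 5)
Your proposal is correct and follows essentially the same route as the paper's proof: membership in $\mathcal{R}^{\left(m^N\right)}$ via Lemma~\ref{lemma:psibasis}, $W_0$-invariance via Lemma~\ref{lemma:ZW0}, the coefficient of $\mathbf{x}^{\left(m^N\right)}$ from Theorem~\ref{theorem:Psisol}, and the conclusion via Sahi's characterisation (Theorem~\ref{theorem:SahiSymmetric}). Your step (iii) simply makes explicit the degree bookkeeping that the paper leaves implicit.
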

\begin{proof}
By Lemmas \ref{lemma:psibasis} and \ref{lemma:ZW0}, $\mathcal{Z}^{(m)}(\mathbf{x},; s)$ is $W_0$ invariant,
and belongs to the space $\mathcal{R}^{(m^N)}$, and from Theorem~\ref{theorem:Psisol}, we see that it contains
$\mathbf{x}^{(m^N)}$ with coefficient 1.  The result then follows from the characterisation of symmetric
Koornwinder polynomials in \cite{Sahi99}, quoted in Theorem~\ref{theorem:SahiSymmetric}.
\end{proof}

We note that Theorem~\ref{thm:symmetricKoornAsNormalisation} implies a matrix product construction for the
symmetric Koornwinder polynomial $P_{(m^N)}$.  Direct computations from this form would be difficult, but the
structure leads to certain conjectures that we discuss in Section~\ref{sec:currentKoornwinder}.  We also note
that an integral form for the polynomial $P_{(m^N)}$ is already known \cite{Mimachi01}.  The solution of the
left $q$KZ equation is also related to the same symmetric Koornwinder polynomial.

\begin{theorem} \label{thm:symmetricKoornAsNormalisation2}
The sum of components of $U_{\text{GC}} \ket{\Phi^{(m)}(\mathbf{x}; s)}$ is proportional
to the symmetric Koornwinder polynomial $P_{\left(m^N\right)}$.  That is
\begin{equation}
    P_{\left(m^N\right)}(\mathbf{x}) \propto \mathcal{Z}^{(m)}(\mathbf{x}; s, \xi'),
\end{equation}
where
\begin{equation}
    \mathcal{Z}^{(m)}(\mathbf{x}; s, \xi')
        = \langle 1 | U_{\text{GC}} \ket{\Phi^{(m)}(\mathbf{x}; s)} \rangle,
\end{equation}
with $\xi' = t_0^{-1} t_N^{-1} t^{-(N-1)} s^{-m}$.
\end{theorem}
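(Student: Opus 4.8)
The plan is to run, \emph{mutatis mutandis}, the argument used for Theorem~\ref{thm:symmetricKoornAsNormalisation}, with the Gallavotti--Cohen symmetry doing the work of converting the \emph{left} $q$KZ data into \emph{right} $q$KZ data so that Lemma~\ref{lemma:ZW0} and the basis lemma for the $\phi^{(m)}_{\bm\tau}$ become available. Take $m\ge 1$ (the case $m=0$ is immediate since $\ket{\Phi^{(0)}}=\ket{1}$ makes $\mathcal{Z}^{(0)}(\mathbf{x};s,\xi')$ a non-zero constant, equal up to scalar to $P_{(0^N)}=1$). Set $\xi' = t_0^{-1}t_N^{-1}t^{-(N-1)}s^{-m}$ and $\ket{\Psi(\mathbf{x};s,\xi')} := U_{\text{GC}}\ket{\Phi^{(m)}(\mathbf{x};s)}$. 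By Corollary~\ref{corr:PsiPhisol} (equivalently Lemma~\ref{lemma:leftrightsol}) this vector solves the right $q$KZ equations \eqref{eq:qKZi}--\eqref{eq:qKZN} with deformation parameters $s$, $\xi'$, so Lemma~\ref{lemma:ZW0} applies \emph{verbatim} and shows that $\mathcal{Z}^{(m)}(\mathbf{x};s,\xi')=\langle 1|\Psi(\mathbf{x};s,\xi')\rangle$ is $W_0$-invariant. It is worth stressing that inserting $U_{\text{GC}}$ is essential here: $\langle 1|\Phi^{(m)}\rangle$ by itself need not be $W_0$-invariant, since $\ket{1}$ is a \emph{left} eigenvector of $\check{R}_i$ and $\overline{K}_N$ but not a right one.

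Next I would show that $\mathcal{Z}^{(m)}(\mathbf{x};s,\xi')$ is a non-zero element of $\mathcal{R}^{(m^N)}$. Since $U_{\text{GC}}$ is diagonal in the lattice-configuration basis with non-zero diagonal entries $u_{\bm\tau}$ (with $u_{\circ\ldots\circ}=1$), each component of $\ket{\Psi(\mathbf{x};s,\xi')}$ is $u_{\bm\tau}\,\phi^{(m)}_{\bm\tau}$, so that
\begin{equation*}
    \mathcal{Z}^{(m)}(\mathbf{x};s,\xi') = \sum_{\bm\tau} u_{\bm\tau}\,\phi^{(m)}_{\bm\tau}(\mathbf{x};s).
\end{equation*}
By the $\Phi$-analogue of Lemma~\ref{lemma:psibasis} (the lemma proved via Theorem~\ref{theorem:Phisol}, stating that $\{\phi^{(m)}_{\bm\tau}\}$ is a basis of $\mathcal{R}^{(m^N)}$), this is a linear combination of basis vectors with all coefficients non-zero, hence a non-zero vector lying in $\mathcal{R}^{(m^N)}$.

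Finally I would combine the two facts: $\mathcal{Z}^{(m)}(\mathbf{x};s,\xi')$ is a non-zero, $W_0$-invariant element of $\mathcal{R}^{(m^N)}$, and by the characterisation of symmetric Koornwinder polynomials (Theorem~\ref{theorem:SahiSymmetric}, Corollary~6.5 of \cite{Sahi99}) the $W_0$-invariant subspace of $\mathcal{R}^{(m^N)}$ is one-dimensional, spanned by $P_{(m^N)}$; therefore $\mathcal{Z}^{(m)}(\mathbf{x};s,\xi') = c\,P_{(m^N)}$ for some scalar $c$, necessarily non-zero. I do not expect a genuine obstacle: every ingredient has been prepared above. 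The only point requiring marginally more than the quoted form of Theorem~\ref{theorem:SahiSymmetric} is the one-dimensionality of $(\mathcal{R}^{(m^N)})^{W_0}$, which is standard in Koornwinder theory; alternatively one bypasses it by applying a $W_0$-symmetriser to the basis expansion above and using that the symmetriser sends each $E_\mu$ with $\mu^+=(m^N)$ to a multiple of $P_{(m^N)}$. It is also worth remarking \emph{why} the conclusion is only ``$\propto$'' rather than ``$=$'' as in Theorem~\ref{thm:symmetricKoornAsNormalisation}: the rescaling by $U_{\text{GC}}$ does not preserve the normalisation of the top monomial $\mathbf{x}^{(m^N)}$, because the individual components $\phi^{(m)}_{\bm\tau}$ are not proportional to single non-symmetric Koornwinder polynomials, and so several of them may contribute to its coefficient.
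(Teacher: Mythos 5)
Your proposal is correct and follows essentially the same route as the paper: identify $U_{\text{GC}}\ket{\Phi^{(m)}}$ as the right $q$KZ solution at $\xi'$ via the Gallavotti--Cohen symmetry, apply Lemma~\ref{lemma:ZW0} for $W_0$-invariance and the $\phi$-basis lemma for membership in $\mathcal{R}^{(m^N)}$, then invoke the Sahi characterisation, with the loss of normalisation explaining the ``$\propto$''. Your explicit treatment of the one-dimensionality of the $W_0$-invariant subspace (or the symmetriser alternative) is a welcome detail that the paper leaves implicit.
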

\begin{proof}
Note that $U_{\text{GC}} \ket{\Phi^{(m)}(\mathbf{x}; s)} \rangle$ is the solution of the right $q$KZ equations
at $\xi'$, and the proof follows as in Theorem~\ref{thm:symmetricKoornAsNormalisation}.  However, because of
the structure of the components $\phi^{(m)}_{\bm\tau}$ (see Theorem~\ref{theorem:Phisol} and
Appendix~\ref{app:leftgroundstate}), and the multiplication by matrix $U_{\text{GC}}$, the coefficient of
$\mathbf{x}^{(m^N)}$ in $\mathcal{Z}^{(m)}(\mathbf{x}; s, \xi')$ is not 1.  Thus the identification with
$P_{(m^N)}$ can only be made up to normalisation.
\end{proof}

\section{Current fluctuations and Koornwinder polynomials}
\label{sec:currentKoornwinder}
 
The aim of this section is to make contact between the machinery developed previously, and the generating
function of the cumulants of the current. The idea is quite simple and arises from the following observation:
the constraint $\xi=s^m$ that was imposed in order to solve the $q$KZ equations, can be satisfied by setting
$s=\xi^{1/m}$, leaving $\xi$ free instead of $s$, which then implies $s \to 1$ as $m \to \infty$.  It appears
then natural to think that the scattering relation \eqref{eq:scattering} may degenerate, in this $s \to 1$
limit, to an eigenvector equation.  Then as $m \to \infty$, the vector
$\ket{\Psi^{(m)}(\mathbf{x};s=\xi^{1/m})}$ should thus converge in some sense to an eigenvector of the
scattering matrix.  To move towards this direction, we make the following conjectures.
 
\begin{conjecture}
 It is conjectured that 
 \begin{eqnarray}
  & & \lim\limits_{m\rightarrow \infty}\frac{\ket{\Psi^{(m)}(\mathbf{x};s=\xi^{1/m})}}{\mathcal{Z}^{(m)}(\mathbf{x};s=\xi^{1/m})}
  = \ket{\Psi_0(\mathbf{x};\xi)}, \\
  & & \lim\limits_{m\rightarrow \infty} \frac{\ln(\xi)}{m}\ln \left(\mathcal{Z}^{(m)}(\mathbf{x};s=\xi^{1/m})\right) 
  = F_0(\mathbf{x};\xi),
 \end{eqnarray}
 with $\ket{\Psi_0}$ and $F_0$  regular functions of $\mathbf{x}$.
\end{conjecture}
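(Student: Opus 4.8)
The conjecture asserts that two limits exist and are regular, so any proof must actually \emph{construct} the limiting objects; the natural vehicle is the recursive matrix-product formula of Theorem~\ref{theorem:Psisol}. The plan is to read \eqref{eq:Psimrec} as a transfer-matrix recursion
\[
    \ket{\Psi^{(m)}(\mathbf{x};s)} = T_m(\mathbf{x};s)\,T_{m-1}(\mathbf{x};s)\ldots T_2(\mathbf{x};s)\,\ket{\Psi^{(1)}(\mathbf{x};s)},
\]
where the increment at level $k$,
\[
    T_k(\mathbf{x};s) = \frac{{}^{2k-1,2k-2}\langle L(x_1)_1\ldots L(x_N)_N\rangle}{\bbra{w}S^{2k-1}\kket{v}\;\bbra{\widetilde w}S^{2k-2}\kket{\widetilde v}},
\]
is a $2^N\times 2^N$ matrix in the lattice space whose entries are Laurent polynomials in $\mathbf{x}$. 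The decisive simplification comes from \eqref{eq:oscAlgBulk}: in the Fock representation $S$ acts as $s^{N_{\mathrm{op}}/2}$ for a number operator $N_{\mathrm{op}}$, so the substitution $s=\xi^{1/m}$ turns $S^{2k-1}$ into $\xi^{(2k-1)N_{\mathrm{op}}/(2m)}$, and the $m$-dependence collapses onto the scaling variable $u=k/m\in(0,1]$: one finds $T_k(\mathbf{x};\xi^{1/m}) = \mathcal{T}(u;\mathbf{x};\xi) + O(1/m)$ uniformly for $u$ in compact subsets of $(0,1]$, where $\mathcal{T}(u;\mathbf{x};\xi)$ is the same expression with $\xi^{u N_{\mathrm{op}}}$ in place of $S^{2k-1}$ and $S^{2k-2}$. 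Thus $\ket{\Psi^{(m)}}$ is an ordered product of $m$ matrices sampled from a \emph{smoothly varying} family, and I would attack the limit by a Perron--Frobenius plus adiabatic-product argument.

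The first step of this route is to prove that for each $u\in(0,1]$ and each $\mathbf{x}$ in a suitable real domain, $\mathcal{T}(u;\mathbf{x};\xi)$ has a simple positive dominant eigenvalue $\rho(u;\mathbf{x};\xi)$, with one-dimensional dominant left and right eigenspaces depending analytically on $\mathbf{x}$, and a spectral gap bounded away from $1$ uniformly in $u$. Granted such control, the standard estimate for ordered products of slowly varying matrices with a spectral gap gives $\tfrac1m\ln\mathcal{Z}^{(m)}(\mathbf{x};\xi^{1/m})\to\int_0^1\ln\rho(u;\mathbf{x};\xi)\,du$, so the second limit holds with $F_0(\mathbf{x};\xi)=\ln\xi\int_0^1\ln\rho(u;\mathbf{x};\xi)\,du$; and the normalised vector converges to the dominant right eigenvector of $\mathcal{T}(1;\mathbf{x};\xi)$, rescaled to have unit sum of components, which is $\ket{\Psi_0(\mathbf{x};\xi)}$. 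Regularity in $\mathbf{x}$ then follows from the analytic dependence of the Perron data together with the uniform gap, with the proviso that the Fock-space pairings $\bbra{w}\xi^{u N_{\mathrm{op}}}\kket{v}$, $\bbra{\widetilde w}\xi^{u N_{\mathrm{op}}}\kket{\widetilde v}$ converge only for $\xi$ and $\mathbf{x}$ in a restricted region, elsewhere one must argue by analytic continuation.

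For the scalar limit there is also an independent, possibly cleaner route: by Theorem~\ref{thm:symmetricKoornAsNormalisation}, $\mathcal{Z}^{(m)}(\mathbf{x};s)$ equals the symmetric Koornwinder polynomial $P_{(m^N)}(\mathbf{x})$ (with $s$ among its parameters), for which an integral representation is known \cite{Mimachi01}; a Laplace / saddle-point analysis of that integral as $m\to\infty$ with $s=\xi^{1/m}$ should produce $F_0$ and its regularity directly, and could then be used to identify $\rho$ in the transfer-matrix picture. It is worth recording the self-consistency already sketched before the conjecture: inserting $\ket{\Psi^{(m)}}\sim e^{(m/\ln\xi)F_0}\ket{\Psi_0}$ into the scattering relation \eqref{eq:scattering} with $s=\xi^{1/m}=1+\tfrac{\ln\xi}{m}+O(m^{-2})$ and matching leading order forces $\mathcal{S}_i(\mathbf{x})\ket{\Psi_0}=e^{x_i\partial_{x_i}F_0}\ket{\Psi_0}$; differentiating at $x_1=\ldots=x_N=1$, using \eqref{eq:properties_scatteringmatrices} and the $W_0$-invariance of $F_0$ (which forces $\partial_{x_i}F_0|_{\mathbf{x}=\mathbf{1}}=0$), yields $M(\xi)\ket{\Psi_0(\mathbf{1};\xi)}=\tfrac{p-q}{2}\,\partial_{x_i}^2F_0|_{\mathbf{x}=\mathbf{1}}\,\ket{\Psi_0(\mathbf{1};\xi)}$, which is how one then connects $F_0$ to $\Lambda_0(\xi)$. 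This pins down what the limit must be, but does not establish that it exists.

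The main obstacle -- and the reason for stating this as a conjecture -- is precisely the Perron--Frobenius step: the increment operators $T_k$ are not manifestly positive, their entries being signed Laurent polynomials arising from contractions of products of $a$, $a^\dagger$, $S$ against fixed Fock vectors, so showing they possess a simple dominant eigenvalue with a gap that is \emph{uniform} in the level $u=k/m$ and analytic in $\mathbf{x}$ is nontrivial. Without such uniform control one cannot upgrade the numerically observed convergence to a theorem, nor guarantee the regularity of $\ket{\Psi_0}$ and $F_0$. Establishing it would likely require either exhibiting a (quasi-)positive form of $\mathcal{T}(u;\mathbf{x};\xi)$ in a well-chosen basis, or using the Koornwinder-asymptotics route to fix the eigenvalue and then arguing perturbatively for the gap.
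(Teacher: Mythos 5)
The statement you are asked to prove is presented in the paper as a \emph{conjecture}: the authors offer no proof, only numerical evidence (up to $N=3$ sites) and an analogy with the perturbative matrix ansatz of \cite{GorissenLMV12,Lazarescu13jphysA} and the quasi-classical limit of $q$KZ equations \cite{Reshetikhin1992,ReshetikhinV94}. Your proposal is likewise not a proof, and to your credit you say so explicitly: the whole argument hinges on the increment matrices $T_k$ possessing a simple dominant eigenvalue with a spectral gap that is uniform in $u=k/m$ and analytic in $\mathbf{x}$, and you establish none of this. Since the entries of $T_k$ are contractions against the signed Fock coefficients $w_n=H_n(t_0^{1/2}u_0^{-1/2},-t_0^{1/2}u_0^{1/2})$ (and similarly $v_n$, $\widetilde w_n$, $\widetilde v_n$), no positivity is available and Perron--Frobenius does not apply off the shelf; moreover the family $\mathcal{T}(u;\mathbf{x};\xi)$ degenerates as $u\to 0^+$, where the pairings $\bbra{w}S^a\kket{v}$ of \eqref{eq:infNormFactor} lose convergence for generic parameters, so even the ``uniform on compact subsets of $(0,1]$'' control you invoke leaves an endpoint unhandled. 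The gap is therefore genuine: you have identified a plausible mechanism for \emph{why} the limits should exist, not a demonstration that they do.

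That said, your framing is sensible and consistent with the paper's own heuristics. Reading \eqref{eq:Psimrec} with $s=\xi^{1/m}$ as an adiabatic product of slowly varying transfer matrices makes precise the analogy the authors draw with the perturbative matrix ansatz, and your closing ``self-consistency'' paragraph reproduces essentially the two propositions the paper derives \emph{from} the conjecture (the eigenvector equation \eqref{eq:eigenvector_scatteringmatrices} for $\mathcal{S}_i(\mathbf{x};1,\xi)$, and the identification $E(\mu)=\frac{p-q}{2}\partial_{x_i}^2F_0(\mathbf{1};e^{\mu})$ via \eqref{eq:properties_scatteringmatrices} and the $W_0$-invariance of $F_0$); as you correctly note, this pins down what the limit must be but cannot establish that it exists. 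Of your two suggested routes to an actual theorem, the asymptotic analysis of the integral representation of $P_{(m^N)}$ from \cite{Mimachi01} (via Theorem~\ref{thm:symmetricKoornAsNormalisation}) seems the more promising, since it sidesteps the positivity problem entirely, but you have not carried out that saddle-point analysis either, so the statement remains, as in the paper, a conjecture.
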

These conjectures are supported by strong numerical evidences (up to 3 sites) and by the fact that the matrix
product construction of $\ket{\Psi^{(m)}(\mathbf{x};s)}$ is similar to the one presented in
\cite{GorissenLMV12,Lazarescu13jphysA}. In those works, the authors developed a method called the
``perturbative matrix ansatz'', which allowed them to approximate the ground state of $M(\xi)$, at any order
in the current counting parameter $\xi$. Let us also mention that these kind of results have already been
observed in the context of $q$KZ equations of different models, and are known as the ``quasi-classical'' limit
\cite{Reshetikhin1992,ReshetikhinV94}.
 
Note that the second part of the conjecture can be immediately rewritten in terms of symmetric Koornwinder
polynomials as
\begin{equation}
  \lim\limits_{m\rightarrow \infty} \frac{\ln(\xi)}{m}\ln \left(P_{(m^N)}(\mathbf{x};s=\xi^{1/m})\right) 
  = F_0(\mathbf{x};\xi).
\end{equation}

In the following these conjectures will be considered as facts and properties will be deduced from them. But
one has to keep in mind that the validity of the deduced results relies obviously on the validity of these
conjectures.
 
 \begin{proposition}
  The function $F_0(\mathbf{x};\xi)$ is $W_0$ invariant and its derivative, with respect to any of the $x_i$,
  is invariant under the Gallavotti-Cohen symmetry
  $\xi \to \xi' = t_0^{-1}t_N^{-1}t^{-(N-1)} \xi^{-1}$.
 \end{proposition}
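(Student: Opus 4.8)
The first claim is essentially immediate. Every $\ket{\Psi^{(m)}(\mathbf x;s)}$ solves the $q$KZ equations, so by Lemma~\ref{lemma:ZW0} its normalisation $\mathcal Z^{(m)}(\mathbf x;s)=\langle 1|\Psi^{(m)}(\mathbf x;s)\rangle$ is $W_0$-invariant; the specialisation $s=\xi^{1/m}$ changes nothing here, since the $W_0$-action only permutes and inverts the $x_j$. As $\mathcal Z^{(m)}$ is non-zero by Theorem~\ref{theorem:Psisol}, the function $\tfrac{\ln\xi}{m}\ln\mathcal Z^{(m)}(\mathbf x;\xi^{1/m})$ is $W_0$-invariant for every $m$, hence so is its pointwise limit $F_0(\mathbf x;\xi)$. (Alternatively, use the conjecture's reformulation together with the $W_0$-invariance of $P_{(m^N)}$.)

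For the Gallavotti--Cohen statement the plan is to recognise $\partial_{x_i}F_0$ as the logarithmic derivative of an eigenvalue of the scattering/transfer matrix at $s=1$, and then exploit the Gallavotti--Cohen covariance of that matrix. Since the $q$KZ equations imply the scattering relation \eqref{eq:scattering}, the vector $\ket{\Psi^{(m)}(\mathbf x;s=\xi^{1/m})}$ obeys $\mathcal S_i(\mathbf x)\ket{\Psi^{(m)}(\dots,x_i,\dots)}=\ket{\Psi^{(m)}(\dots,sx_i,\dots)}$ with $s=\xi^{1/m}$. I would divide by $\mathcal Z^{(m)}$ and let $m\to\infty$ (so $s\to1$ and $\mathcal S_i(\mathbf x)\to\mathcal S_i(\mathbf x)|_{s=1}$); using the conjecture this gives
\begin{equation*}
    \mathcal S_i(\mathbf x)|_{s=1}\,\ket{\Psi_0(\mathbf x;\xi)}
    = \Big(\lim_{m\to\infty}\frac{\mathcal Z^{(m)}(\dots,sx_i,\dots;\xi^{1/m})}{\mathcal Z^{(m)}(\dots,x_i,\dots;\xi^{1/m})}\Big)\ket{\Psi_0(\mathbf x;\xi)}.
\end{equation*}
Because $s=\xi^{1/m}=1+\tfrac{\ln\xi}{m}+O(m^{-2})$ while $\tfrac{\ln\xi}{m}\ln\mathcal Z^{(m)}(\mathbf x;\xi^{1/m})\to F_0(\mathbf x;\xi)$, the ratio on the right tends to $e^{x_i\,\partial_{x_i}F_0(\mathbf x;\xi)}$. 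Hence $\ket{\Psi_0(\mathbf x;\xi)}$ is the ground state of the transfer matrix $\mathcal S_i(\mathbf x)|_{s=1}$, with eigenvalue $e^{x_i\,\partial_{x_i}F_0(\mathbf x;\xi)}$; by \eqref{eq:properties_scatteringmatrices} this is the branch through the ground-state eigenvalue $\Lambda_0(\xi)$ of $M(\xi)$, and it takes the value $1$ at $\mathbf x=\mathbf 1$.

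Next I would invoke the Gallavotti--Cohen relations on $\check R_i$, $K_1$ and $\overline K_N$ with $\xi'=t_0^{-1}t_N^{-1}t^{-(N-1)}\xi^{-1}$. Transposing the product \eqref{eq:scatteringmatrix} at $s=1$ shows, exactly as for $M(\xi)$ in the introduction, that the transfer matrices at $\xi$ and at $\xi'$ are related by transposition together with conjugation by $U_{\text{GC}}$. Consequently the left ground state at $\xi$ multiplied by $U_{\text{GC}}$ is the right ground state at $\xi'$, with the same eigenvalue; since left and right ground-state eigenvalues coincide, that common value is $e^{x_i\,\partial_{x_i}F_0(\mathbf x;\xi)}$. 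On the other hand the argument of the previous paragraph, applied at $\xi'$, identifies the right ground-state eigenvalue at $\xi'$ as $e^{x_i\,\partial_{x_i}F_0(\mathbf x;\xi')}$. By uniqueness of the ground state the two coincide, and since both equal $1$ at $\mathbf x=\mathbf 1$, continuity promotes $e^{x_i\,\partial_{x_i}F_0(\mathbf x;\xi)}=e^{x_i\,\partial_{x_i}F_0(\mathbf x;\xi')}$ to $\partial_{x_i}F_0(\mathbf x;\xi)=\partial_{x_i}F_0(\mathbf x;\xi')$.

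The genuinely delicate point is analytic rather than algebraic: commuting the $x_i$-derivative with the $m\to\infty$ limit requires enough uniformity of the subleading part of $\ln\mathcal Z^{(m)}$, and one needs that the limiting eigenvector really is the non-degenerate ground state, so that its eigenvalue is the invariant of the Gallavotti--Cohen map. A more algebraic route avoids the scattering relation altogether: by Corollary~\ref{corr:PsiPhisol} the vector $U_{\text{GC}}\ket{\Phi^{(m)}(\mathbf x;\xi^{1/m})}$ solves the right $q$KZ equations at $\xi'$, and by Theorems~\ref{thm:symmetricKoornAsNormalisation} and \ref{thm:symmetricKoornAsNormalisation2} its normalisation equals $c_m\,\mathcal Z^{(m)}(\mathbf x;\xi^{1/m})$ with $c_m$ independent of $\mathbf x$; applying the conjecture to this family and then differentiating away the $\mathbf x$-independent factor again yields $\partial_{x_i}F_0(\mathbf x;\xi')=\partial_{x_i}F_0(\mathbf x;\xi)$ -- the surviving $\mathbf x$-independent difference between $F_0(\mathbf x;\xi')$ and $F_0(\mathbf x;\xi)$ being precisely why $F_0$ itself fails to be Gallavotti--Cohen invariant.
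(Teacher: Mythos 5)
Your argument is correct (modulo the conjecture, as in the paper), but your primary route for the Gallavotti--Cohen part differs from the paper's, while your closing ``more algebraic route'' is essentially the paper's proof verbatim. For $W_0$-invariance the two are equivalent: you pass the invariance of each $\mathcal{Z}^{(m)}$ (Lemma~\ref{lemma:ZW0}) to the limit, the paper quotes the $W_0$-invariance of $P_{(m^N)}$ via Theorem~\ref{thm:symmetricKoornAsNormalisation} --- same content. For the Gallavotti--Cohen statement, the paper does not go through the spectrum of the scattering matrices at all: it simply combines Theorems~\ref{thm:symmetricKoornAsNormalisation} and \ref{thm:symmetricKoornAsNormalisation2} to get $P_{(m^N)}(\mathbf{x};\zeta^{1/m})=\mathcal{Z}^{(m)}(\mathbf{x};\zeta^{1/m},\zeta)\propto\mathcal{Z}^{(m)}(\mathbf{x};\zeta^{1/m},\xi')$ with an $\mathbf{x}$-independent proportionality constant, takes logarithms and the large-$m$ limit to obtain $F_0(\mathbf{x};\xi')=F_0(\mathbf{x};\xi)+c$, and differentiates. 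Your main route --- identifying $\exp\bigl(x_i\,\partial_{x_i}F_0\bigr)$ as the eigenvalue of $\mathcal{S}_i(\mathbf{x})|_{s=1}$ on $\ket{\Psi_0}$ (which is the paper's own Proposition in Section~\ref{sec:currentKoornwinder}) and then transporting it through the conjugation $U_{\text{GC}}(\cdot)^TU_{\text{GC}}^{-1}$ --- is a genuinely different and physically illuminating argument, but it carries the extra burdens you yourself flag: the transpose of the ordered product \eqref{eq:scatteringmatrix} is the reverse-ordered (left) scattering matrix, so one must match left and right spectra; one needs non-degeneracy and continuity of the relevant eigenvalue branch away from $\mathbf{x}=\mathbf{1}$ to know that the Gallavotti--Cohen map preserves \emph{that} eigenvalue; and one must justify exchanging the $x_i$-derivative with the $m\to\infty$ limit. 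The paper's route (your last paragraph) sidesteps all of this because the $\mathbf{x}$-independence of the constant is established at finite $m$ by the normalisation analysis of Theorem~\ref{theorem:Phisol}, so nothing spectral is needed.
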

 \begin{proof}
  The $W_0$ invariance directly follows from the $W_0$ invariance of the symmetric Koornwinder polynomials. 
  The Gallavotti-Cohen symmetry follows from theorems \ref{thm:symmetricKoornAsNormalisation} and \ref{thm:symmetricKoornAsNormalisation2}
  which give that 
\begin{align*}
    P_{(m^N)}(\mathbf{x};s=\zeta^{1/m}) & = \mathcal{Z}^{(m)}(\mathbf{x}; s = \zeta^{1/m}, \xi = \zeta)
    \\
    & \propto \mathcal{Z}^{(m)}(\mathbf{x}; s = \zeta^{1/m}, \xi = t_0^{-1} t_N^{-1} t^{-(N-1)} \zeta^{-1})
\end{align*}
  with a proportionality coefficient independent 
  of $\mathbf{x}$. Taking the large $m$ limit it translates into the fact that $F_0(\mathbf{x};\xi')=F_0(\mathbf{x};\xi)+c$ with $c$ a constant 
  term independent of $\mathbf{x}$, which concludes the proof.
 \end{proof}

 In the following we will specify when needed the dependence on $s$ and $\xi$ of the scattering matrix $\mathcal{S}_i(\mathbf{x};s,\xi)$ defined
 in \eqref{eq:scatteringmatrix}.
 
 \begin{proposition}
 The vector $\ket{\Psi_0(\mathbf{x};\xi)}$ is an eigenvector of the scattering matrices evaluated at $s=1$,
 with
 \begin{equation} \label{eq:eigenvector_scatteringmatrices}
  \mathcal{S}_i(\mathbf{x};1,\xi)\ket{\Psi_0(\mathbf{x};\xi)}=
  \exp\left( x_i \frac{\partial F_0}{\partial x_i}(\mathbf{x};\xi)\right) \ket{\Psi_0(\mathbf{x};\xi)}.
 \end{equation}
 \end{proposition}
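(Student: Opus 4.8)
The plan is to descend from the exact scattering relation \eqref{eq:scattering} obeyed by the finite-$m$ solution and then pass to the limit $m\to\infty$ with $s=\xi^{1/m}$. Since the $q$KZ equations \eqref{eq:qKZi}--\eqref{eq:qKZN} imply \eqref{eq:scattering}, the vector built in Theorem~\ref{theorem:Psisol} satisfies, at $\xi=s^m$,
\begin{equation*}
    \mathcal{S}_i(\mathbf{x};s,\xi=s^m)\,\ket{\Psi^{(m)}(x_1,\dots,x_i,\dots,x_N;s)}
    =
    \ket{\Psi^{(m)}(x_1,\dots,s x_i,\dots,x_N;s)} .
\end{equation*}
Dividing by $\mathcal{Z}^{(m)}(\mathbf{x};s)=\langle 1|\Psi^{(m)}(\mathbf{x};s)\rangle$, and multiplying and dividing the right-hand side by $\mathcal{Z}^{(m)}(\dots,s x_i,\dots;s)$, this becomes
\begin{equation*}
    \mathcal{S}_i(\mathbf{x};s,s^m)\,\frac{\ket{\Psi^{(m)}(\mathbf{x};s)}}{\mathcal{Z}^{(m)}(\mathbf{x};s)}
    =
    \frac{\mathcal{Z}^{(m)}(\dots,s x_i,\dots;s)}{\mathcal{Z}^{(m)}(\mathbf{x};s)}\;
    \frac{\ket{\Psi^{(m)}(\dots,s x_i,\dots;s)}}{\mathcal{Z}^{(m)}(\dots,s x_i,\dots;s)} .
\end{equation*}

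Now set $s=\xi^{1/m}$ and let $m\to\infty$, so $s\to 1$ and $s x_i=\xi^{1/m}x_i\to x_i$. The scattering matrix \eqref{eq:scatteringmatrix} depends continuously on $s$ and $\xi$, so the operator on the left converges to $\mathcal{S}_i(\mathbf{x};1,\xi)$, while the normalised vector converges to $\ket{\Psi_0(\mathbf{x};\xi)}$ by the first part of the conjecture. Because that limit is, by hypothesis, a regular (hence continuous) function of $\mathbf{x}$ and the convergence may be taken locally uniform, the last factor on the right converges to $\ket{\Psi_0(x_1,\dots,x_i,\dots;\xi)}=\ket{\Psi_0(\mathbf{x};\xi)}$ as well. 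Since $\langle 1|\Psi_0(\mathbf{x};\xi)\rangle=1$ (each ratio $\ket{\Psi^{(m)}}/\mathcal{Z}^{(m)}$ has components summing to $1$), $\ket{\Psi_0(\mathbf{x};\xi)}$ is non-zero, so the scalar prefactor must converge too, say to $\Lambda_i(\mathbf{x};\xi)$, and we obtain the eigenvalue equation
\begin{equation*}
    \mathcal{S}_i(\mathbf{x};1,\xi)\,\ket{\Psi_0(\mathbf{x};\xi)}=\Lambda_i(\mathbf{x};\xi)\,\ket{\Psi_0(\mathbf{x};\xi)},
    \qquad
    \Lambda_i(\mathbf{x};\xi)=\lim_{m\to\infty}\frac{\mathcal{Z}^{(m)}(\dots,\xi^{1/m}x_i,\dots;\xi^{1/m})}{\mathcal{Z}^{(m)}(\mathbf{x};\xi^{1/m})} .
\end{equation*}
It remains to identify $\Lambda_i$ with $\exp\!\big(x_i\,\partial F_0/\partial x_i\big)$. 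Writing $\xi^{1/m}=\ee^{\ln\xi/m}=1+\tfrac{\ln\xi}{m}+O(m^{-2})$, the replacement $x_i\mapsto\xi^{1/m}x_i$ is an infinitesimal dilation of the $i$-th variable, generated by $x_i\partial_{x_i}$. Taking logarithms and inserting the second part of the conjecture, $\ln\mathcal{Z}^{(m)}(\mathbf{x};\xi^{1/m})=\tfrac{m}{\ln\xi}F_0(\mathbf{x};\xi)+o(m)$, a first-order Taylor expansion gives $\ln\Lambda_i=\tfrac{m}{\ln\xi}\big(F_0(\xi^{1/m}x_i,\dots)-F_0(\mathbf{x})\big)+o(1)\to x_i\,\partial F_0/\partial x_i$, which is precisely \eqref{eq:eigenvector_scatteringmatrices}.

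The main obstacle is this last identification: interchanging the $m\to\infty$ limit with the divided difference of $\ln\mathcal{Z}^{(m)}$ requires controlling the subleading corrections to the conjectured asymptotics — one needs the error term to remain $o(m)$ after differentiation in $\mathbf{x}$ (equivalently, locally uniform $C^1$ convergence of $\tfrac{\ln\xi}{m}\ln\mathcal{Z}^{(m)}$ to $F_0$), and similarly the convergence $\ket{\Psi^{(m)}}/\mathcal{Z}^{(m)}\to\ket{\Psi_0}$ used above should be locally uniform so that it survives evaluation at the moving point $\xi^{1/m}x_i$. Both are mild strengthenings of the regularity already postulated in the conjecture; since the conjectures are treated as facts throughout this section, they may be assumed, after which the computation is routine. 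As a consistency check, the resulting family $\{\Lambda_i\}$ is compatible with the commutativity of the $\mathcal{S}_i(\mathbf{x};1,\xi)$ and with the previous proposition, the eigenvalues $x_i\,\partial F_0/\partial x_i$ being the logarithmic $x_i$-derivatives of a single potential $F_0$.
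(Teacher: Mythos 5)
Your proposal is correct and follows essentially the same route as the paper: start from the scattering relation \eqref{eq:scattering} at $s=\xi^{1/m}$, divide by $\mathcal{Z}^{(m)}$, insert the ratio $\mathcal{Z}^{(m)}(\dots,\xi^{1/m}x_i,\dots)/\mathcal{Z}^{(m)}(\mathbf{x})$, and take $m\to\infty$ using the conjectured limits. Your added discussion of why that ratio converges to $\exp\bigl(x_i\,\partial F_0/\partial x_i\bigr)$ and of the uniformity needed only makes explicit what the paper's proof leaves implicit.
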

 \begin{proof}
  Our starting point is the scattering relation \eqref{eq:scattering} applied with $s=\xi^{1/m}$.  We divide by 
  $\mathcal{Z}^{(m)}(\mathbf{x};s=\xi^{1/m})$ to obtain
  \begin{eqnarray*}
  & &  \mathcal{S}_i(\mathbf{x};s=\xi^{1/m},\xi) 
  \frac{\ket{\Psi^{(m)}(\ldots, x_i, \ldots;s=\xi^{1/m})}}{\mathcal{Z}^{(m)}(\ldots, x_i, \ldots;s=\xi^{1/m})} \\
  & & \hspace{1cm} = \frac{\mathcal{Z}^{(m)}(\ldots, \xi^{1/m}x_i, \ldots;s=\xi^{1/m})}{\mathcal{Z}^{(m)}(\ldots, x_i, \ldots;s=\xi^{1/m})}
  \frac{\ket{\Psi^{(m)}(\ldots, \xi^{1/m}x_i, \ldots;s=\xi^{1/m})}}{\mathcal{Z}^{(m)}(\ldots, \xi^{1/m}x_i, \ldots;s=\xi^{1/m})}.
\end{eqnarray*}
We then have the limits 
 \begin{eqnarray*}
  & & \lim\limits_{m\rightarrow \infty}\mathcal{S}_i(\mathbf{x};s=\xi^{1/m},\xi)
  =\mathcal{S}_i(\mathbf{x};1,\xi), \\
  & & \lim\limits_{m\rightarrow \infty}\frac{\ket{\Psi^{(m)}(\ldots, x_i, \ldots;s=\xi^{1/m})}}{\mathcal{Z}^{(m)}(\ldots, x_i, \ldots;s=\xi^{1/m})}
  = \ket{\Psi_0(\ldots, x_i, \ldots;\xi)}, \\
  & & \lim\limits_{m\rightarrow \infty}\frac{\ket{\Psi^{(m)}(\ldots, \xi^{1/m}x_i, \ldots;s=\xi^{1/m})}}{\mathcal{Z}^{(m)}(\ldots, \xi^{1/m}x_i, \ldots;s=\xi^{1/m})}
  = \ket{\Psi_0(\ldots, x_i, \ldots;\xi)}, \\
  & & \lim\limits_{m\rightarrow \infty} \frac{\mathcal{Z}^{(m)}(\ldots, \xi^{1/m}x_i, \ldots;s=\xi^{1/m})}{\mathcal{Z}^{(m)}(\ldots, x_i, \ldots;s=\xi^{1/m})}
  = \exp\left( x_i \frac{\partial F_0}{\partial x_i}(\mathbf{x};\xi)\right),
 \end{eqnarray*}
 which yield the desired result.
 \end{proof}

 \begin{proposition}
 The vector $\ket{\Psi_0(\mathbf{1};\xi)}$ is an eigenvector of the deformed Markov matrix, with
 \begin{equation}
  M(\xi)\ket{\Psi_0(\mathbf{1};\xi)}=\frac{p-q}{2}\frac{\partial^2 F_0}{\partial
  x_i^2}(\mathbf{1};\xi)\ket{\Psi_0(\mathbf{1};\xi)}.
 \end{equation}
This implies immediately the following expression for the generating function of the cumulants of the
 current:
 \begin{equation}
  E(\mu)=\Lambda_0(e^{\mu})=\frac{p-q}{2}\frac{\partial^2 F_0}{\partial x_i^2}(\mathbf{1};e^{\mu}).
 \end{equation}
 \end{proposition}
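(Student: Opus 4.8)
The plan is to promote the scattering eigenvalue equation \eqref{eq:eigenvector_scatteringmatrices} to a genuine eigenvalue equation for $M(\xi)$ by the same device used at \eqref{eq:scattering}--\eqref{eq:MxiOnPsi}: differentiate once with respect to $x_i$ and specialise at $\mathbf{x}=\mathbf{1}$, this time keeping the non-trivial eigenvalue instead of throwing it away. The structural inputs are the relations \eqref{eq:properties_scatteringmatrices}, namely $\mathcal{S}_i(\mathbf{x})|_{s=x_1=\dots=x_N=1}=1$ and $\partial_{x_i}\mathcal{S}_i(\mathbf{x})|_{s=x_1=\dots=x_N=1}=\tfrac{2}{p-q}M(\xi)$, together with the $W_0$-invariance of $F_0$ established in the previous proposition.

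First I would record two facts about $F_0$ at the symmetric point. Since $F_0$ is $W_0$-invariant it is, in particular, invariant under $x_i\mapsto x_i^{-1}$; differentiating that relation in $x_i$ and setting $x_i=1$ forces $\partial_{x_i}F_0(\mathbf{1};\xi)=0$, so the scattering eigenvalue $\exp\!\big(x_i\,\partial_{x_i}F_0(\mathbf{x};\xi)\big)$ reduces to $1$ at $\mathbf{x}=\mathbf{1}$. Invariance under the transpositions $s_i$ similarly gives $\partial_{x_i}^2F_0(\mathbf{1};\xi)=\partial_{x_{i+1}}^2F_0(\mathbf{1};\xi)$, so the second derivative appearing in the statement is independent of $i$ and the claim is well posed.

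Next I would apply $\partial_{x_i}$ to \eqref{eq:eigenvector_scatteringmatrices}. By the product rule the left side becomes $(\partial_{x_i}\mathcal{S}_i)\ket{\Psi_0}+\mathcal{S}_i\,\partial_{x_i}\ket{\Psi_0}$ and the right side $\big(\partial_{x_i}e^{x_i\partial_{x_i}F_0}\big)\ket{\Psi_0}+e^{x_i\partial_{x_i}F_0}\,\partial_{x_i}\ket{\Psi_0}$. Setting $s=1$ and $\mathbf{x}=\mathbf{1}$, the two terms containing $\partial_{x_i}\ket{\Psi_0}$ cancel because $\mathcal{S}_i|_{\mathbf{1}}=1$ and the eigenvalue equals $1$ there; using $\partial_{x_i}\mathcal{S}_i|_{\mathbf{1}}=\tfrac{2}{p-q}M(\xi)$ on the left, and $\partial_{x_i}\!\big(x_i\partial_{x_i}F_0\big)=\partial_{x_i}F_0+x_i\partial_{x_i}^2F_0$, which equals $\partial_{x_i}^2F_0(\mathbf{1};\xi)$ at $\mathbf{x}=\mathbf{1}$ by the first step, one is left with $\tfrac{2}{p-q}M(\xi)\ket{\Psi_0(\mathbf{1};\xi)}=\partial_{x_i}^2F_0(\mathbf{1};\xi)\ket{\Psi_0(\mathbf{1};\xi)}$, which is the asserted identity after rearranging. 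That $\ket{\Psi_0(\mathbf{1};\xi)}$ is a bona fide non-zero eigenvector follows from the normalisation by $\mathcal{Z}^{(m)}$, which gives $\langle 1|\Psi_0(\mathbf{1};\xi)\rangle=1$. For the consequence on the cumulant generating function, one notes that at $\xi=1$ one has $s=\xi^{1/m}=1$ for every $m$, so $\ket{\Psi_0(\mathbf{1};1)}$ is the normalised ASEP stationary state with eigenvalue $0$; by continuity in $\xi$ the eigenvalue constructed above is the branch that tends to $0$, i.e. $\Lambda_0(\xi)$, and putting $\xi=e^{\mu}$ yields $E(\mu)=\Lambda_0(e^{\mu})=\tfrac{p-q}{2}\partial_{x_i}^2F_0(\mathbf{1};e^{\mu})$.

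The main obstacle is not the short computation above but the regularity it presupposes: differentiability of $\ket{\Psi_0(\mathbf{x};\xi)}$ and $F_0(\mathbf{x};\xi)$ in $\mathbf{x}$ near $\mathbf{x}=\mathbf{1}$, and the interchange of $\partial_{x_i}$ with the $m\to\infty$ limit used to pass from \eqref{eq:eigenvector_scatteringmatrices} to its derivative. These are exactly the ``regular functions'' hypotheses built into the Conjecture, so, granting that conjecture, the argument is complete.
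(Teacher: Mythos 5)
Your proof is correct and follows essentially the same route as the paper: differentiate the scattering eigenvalue equation \eqref{eq:eigenvector_scatteringmatrices} in $x_i$, specialise at $s=x_1=\dots=x_N=1$ using the properties \eqref{eq:properties_scatteringmatrices}, and use the $W_0$-invariance of $F_0$ to kill the first-derivative term. You supply somewhat more detail than the paper (the cancellation of the $\partial_{x_i}\ket{\Psi_0}$ terms, the $i$-independence of the second derivative, and the continuity argument identifying the eigenvalue with the branch $\Lambda_0$), but the argument is the same.
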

\begin{proof}
This is proven by taking the derivative of \eqref{eq:eigenvector_scatteringmatrices} with respect to $x_i$ and
then setting $x_1=\dots=x_N=1$. One has to make basic use of the properties given in
\eqref{eq:properties_scatteringmatrices}, and notice the fact that $\frac{\partial F_0}{\partial
x_i}(\mathbf{1};\xi)=0$ because $F_0$ is $W_0$ invariant.
\end{proof}

Note that despite revealing a beautiful connection between the symmetric functions and the fluctuations of the
current, in practice, the last expression does not help to compute the cumulants of the current because a
closed expression for $F_0$ is missing.
 
However a step can be made toward an exact expression of $F_0$, using the characterization of the symmetric Koornwinder polynomials
 as eigenfunctions of the finite difference operator $D$ defined in \eqref{eq:operatorD}. 
 The eigenvalue $d_{(m^N)}$ of this operator associated to the symmetric Koornwinder polynomial $P_{(m^N)}$ is given for $s=\xi^{1/m}$ by
 \begin{equation}
 d_0(\xi)=\frac{1-t^N}{1-t}(\xi-1)(t_0t_Nt^{N-1}-1/\xi).
 \end{equation}
It is straightforward to check that $d_0(\xi)$ is invariant under the Gallavotti-Cohen symmetry, that is $d_0(\xi)=d_0(\xi')$.

In the following we will explicitly write the dependence on $s$ of the functions $g_i(\mathbf{x};s)$ defined in \eqref{eq:function_g_i}.
\begin{proposition}
We have the following characterization of the function $F_0$:
\begin{eqnarray*}
 & &\sum_{i=1}^{N} g_i(\mathbf{x};1)\left[\exp\left( x_i \frac{\partial F_0}{\partial x_i}(\mathbf{x};\xi)\right)-1\right]
 +\sum_{i=1}^{N} g_i(\mathbf{x}^{-1};1)\left[\exp\left(- x_i \frac{\partial F_0}{\partial x_i}(\mathbf{x};\xi)\right)-1\right] \\
 & & =d_0(\xi)
\end{eqnarray*}
\end{proposition}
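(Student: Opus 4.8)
The plan is to derive the identity as the $m\to\infty$ limit of the eigenvalue equation \eqref{eq:eigenvalue_eq_symmetric_Koornwinder} for $P_{(m^N)}$, specialised to $s=\xi^{1/m}$, combining the conjectures with the results already established in this section. First I would recall that at $s=\xi^{1/m}$ one has $s^{\lambda_i}=s^m=\xi$ for the partition $\lambda=(m^N)$, so the eigenvalue of $D$ does not depend on $m$ and equals $d_0(\xi)$, as recorded above. Dividing the eigenvalue equation $D P_{(m^N)}=d_0(\xi)\,P_{(m^N)}$ by $P_{(m^N)}$, writing $D$ out through \eqref{eq:operatorD}, \eqref{eq:function_g_i}, and using that the shift operator $T_{s,i}$ acts on $x_i$ by $x_i\mapsto sx_i$, gives, for every $m$ and with $s=\xi^{1/m}$ throughout,
\begin{align*}
    & \sum_{i=1}^{N} g_i(\mathbf{x};s)\!\left[\frac{P_{(m^N)}(\ldots,sx_i,\ldots;s)}{P_{(m^N)}(\mathbf{x};s)}-1\right] \\
    & \qquad +\sum_{i=1}^{N} g_i(\mathbf{x}^{-1};s)\!\left[\frac{P_{(m^N)}(\ldots,s^{-1}x_i,\ldots;s)}{P_{(m^N)}(\mathbf{x};s)}-1\right] = d_0(\xi).
\end{align*}

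Then I would pass to the limit $m\to\infty$ term by term. The coefficients $g_i(\mathbf{x};s)$ depend continuously on $s$, so $g_i(\mathbf{x};\xi^{1/m})\to g_i(\mathbf{x};1)$ and likewise with $\mathbf{x}$ replaced by $\mathbf{x}^{-1}$. For the ratios, I would use Theorem~\ref{thm:symmetricKoornAsNormalisation} to identify $P_{(m^N)}(\mathbf{x};s=\xi^{1/m})=\mathcal{Z}^{(m)}(\mathbf{x};s=\xi^{1/m})$, and then the conjectures, which yield $\ln\mathcal{Z}^{(m)}(\mathbf{x};s=\xi^{1/m})=\tfrac{m}{\ln\xi}F_0(\mathbf{x};\xi)+o(m)$. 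Combining this with $\eta^{1/m}-1=\tfrac{\ln\eta}{m}+O(m^{-2})$ and the differentiability of $F_0$ one obtains
\begin{equation*}
    \lim_{m\to\infty}\frac{\mathcal{Z}^{(m)}(\ldots,\eta^{1/m}x_i,\ldots;s=\xi^{1/m})}{\mathcal{Z}^{(m)}(\ldots,x_i,\ldots;s=\xi^{1/m})}
    =\exp\!\left(\frac{\ln\eta}{\ln\xi}\,x_i\frac{\partial F_0}{\partial x_i}(\mathbf{x};\xi)\right).
\end{equation*}
For $\eta=\xi$ this is precisely the limit already used in the proof of the Proposition stating that $\ket{\Psi_0(\mathbf{x};\xi)}$ is an eigenvector of the $s=1$ scattering matrices, and for $\eta=\xi^{-1}$ it follows by the identical computation. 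Taking $\eta=\xi$ in the first family of ratios (shift $x_i\mapsto sx_i=\xi^{1/m}x_i$) and $\eta=\xi^{-1}$ in the second (shift $x_i\mapsto s^{-1}x_i=\xi^{-1/m}x_i$), and letting $m\to\infty$ in the displayed identity above, produces exactly the claimed characterisation of $F_0$.

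The step I expect to be the main obstacle is making this term-by-term passage to the limit rigorous: one must know that the remainder $R_m(\mathbf{x}):=\ln\mathcal{Z}^{(m)}(\mathbf{x};\xi^{1/m})-\tfrac{m}{\ln\xi}F_0(\mathbf{x};\xi)$ is regular enough that $R_m(\ldots,\eta^{1/m}x_i,\ldots)-R_m(\ldots,x_i,\ldots)\to 0$, i.e.\ that the convergence in the conjecture is sufficiently uniform in $\mathbf{x}$ (equivalently, that the relevant logarithmic derivatives converge). Since the very same limit is invoked, with the same understanding, in the proof of the scattering-matrix eigenvector Proposition, we proceed on the basis of the regularity built into the conjectures and treat the interchange of the limit with the finite difference as legitimate; the remaining manipulations are the elementary algebraic identities already displayed.
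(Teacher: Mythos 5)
Your proposal is correct and follows essentially the same route as the paper: the authors likewise divide the eigenvalue equation $DP_{(m^N)}=d_0(\xi)P_{(m^N)}$ at $s=\xi^{1/m}$ by $P_{(m^N)}$ and take the large-$m$ limit, relying on the conjectured regularity of $F_0$ exactly as you do. Your write-up simply makes explicit the term-by-term limits (and the uniformity caveat) that the paper leaves implicit.
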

\begin{proof}
 This follows directly from the relation \eqref{eq:eigenvalue_eq_symmetric_Koornwinder}
 applied for the symmetric Koornwinder polynomial $P_{(m^N)}$ and $s=\xi^{1/m}$. Dividing the 
 latter relation by $P_{(m^N)}(\mathbf{x};s=\xi^{1/m})$ and taking the large $m$ limit yield the desired result.
\end{proof}
It would be interesting to understand if this characterization of the function $F_0$ can be related to a Baxter $t-Q$ relation \cite{Baxter82}.
It has been shown in \cite{CrampeMRV15inhomogeneous} that the normalisation of the stationary state (corresponding to the case $s=\xi=1$)
satisfies a $t-Q$ difference equation.
We can mention also in this context the work \cite{LazarescuP14} where the authors constructed a Baxter Q operator for the open ASEP
(with current-counting deformation) and derived the corresponding $t-Q$ relations.

\section*{Acknowledgments}
We would like to warmly thank Nicolas Cramp{\'e} and Eric Ragoucy
for stimulating discussions and suggestions.

\appendix

\section{Calculation of the normalisation}
\label{app:representation}

We wish to show that the normalisation factor $\Omega^{(m)}$, defined in \eqref{eq:Omegam}, is non-zero.  To do so we
show how it is computed in the infinite dimensional representation of the algebra given in Definition~\ref{def:oscAlg}.
Our presentation of the representation follows the review in \cite{Lazarescu13}, and we refer to that work for further
references.

The representation of the algebra is defined on the Fock space $\text{Span}\{ \kket{k} \}_{k=0}^\infty$.  The
bulk matrices are given by
\begin{align*}
    & a = \sum_{k=1}^\infty (1 - t^k) \kket{k - 1} \bbra{k},
    \qquad
    a^\dagger = \sum_{k=0}^\infty \kket{k + 1} \bbra{k},
    \\
    & S  = \sum_{k=0}^\infty s^{k/2} \kket{k} \bbra{k}.
\end{align*}
The boundary vectors are written
\begin{equation*}
    \bbra{w} = \sum_{k=0}^\infty w_k \bbra{k},
    \qquad
    \kket{v} = \sum_{k=0}^\infty v_k \kket{k},
\end{equation*}
and
\begin{equation*}
    \bbra{\widetilde{w}} = \sum_{k=0}^\infty \widetilde{w}_k \bbra{k},
    \qquad
    \kket{\widetilde{v}} = \sum_{k=0}^\infty \widetilde{v}_k \kket{k}.
\end{equation*}
As a consequence of the boundary relations, the coefficients appearing in $\bbra{w}$, $\kket{v}$ satisfy the
recursion relations
\begin{align*}
    & w_{k+1} + t_0^{1/2}(u_0^{1/2} - u_0^{-1/2}) w_k - t_0(1-t^k)w_{k-1} = 0,
    \\
    & (t)_{k+1} v_{k+1} + t_N^{1/2}(u_N^{1/2} - u_N^{-1/2}) (t)_k v_k - t_N (1-t^k) (t)_{k-1} v_{k-1} = 0,
\end{align*}
with $w_{-1} = v_{-1} = 0$.  We have used the $t$-Pochhammer symbol
\begin{equation*}
    (x)_n = \prod_{k=0}^{n-1} (1 - t^k x).
\end{equation*}
The $t$-Pochhammer symbol can be defined for $n = \infty$ if $t < 1$, then
\begin{equation*}
    (x)_\infty = \prod_{k=0}^\infty (1 - t^k x),
\end{equation*}
and we use the notation
\begin{equation*}
    (x, y, z, \ldots)_\infty = (x)_\infty (y)_\infty (z)_\infty \ldots
\end{equation*}
The $t$-Hermite polynomials are given by
\begin{equation*}
    H_n(x, y) = \sum_{k=0}^n \frac{(t)_n}{(t)_k (t)_{n-k}} x^k y^{n-k},
\end{equation*}
and satisfy the recursion relation
\begin{equation*}
    H_{n+1}(x, y) - (x + y) H_n(x, y) + x y (1 - t^n) H_{n-1}(x, y) = 0.
\end{equation*}
Thus we find
\begin{equation}
\begin{aligned}
    w_n & = H_n(t_0^{1/2} u_0^{-1/2}, -t_0^{1/2} u_0^{1/2}),
    \\
    v_n & = \frac{H_n(t_N^{1/2} u_N^{-1/2}, -t_N^{1/2} u_N^{1/2})}{(t)_n}.
\end{aligned}
\label{eq:wnvn}
\end{equation}
The coefficients $\widetilde{w}_n$, $\widetilde{v}_n$ are obtained by setting $u_i = t_i$ in the above.

To compute the normalisation factors in the $q$KZ solution, we will make use of the $t$-Mehler formula
\begin{equation*}
    \sum_{n=0}^\infty H_n(x, y) H_n(w, z) \frac{\lambda^n}{(q)_n}
    =
    \frac{(x y w z \lambda^2)_\infty}{(x w \lambda, x z \lambda, y w \lambda, y z \lambda)_\infty}.
\end{equation*}
For the normalisation, we need to compute
\begin{equation*}
    \bbra{w} S^a \kket{v} = \sum_{n=0}^\infty w_n \left(s^{n /2}\right)^a v_n.
\end{equation*}
Using the coefficients \eqref{eq:wnvn} and the $t$-Mehler formula gives
\begin{equation}
\begin{aligned}
   &  \bbra{w} S^a \kket{v} = 
   \\
   & \frac{\left(t_0^{1/2} t_N^{1/2} s^a\right)_\infty}
    {\left(
        t_0^{1/2} u_0^{-1/2} t_N^{1/2} u_N^{-1/2} s^{a/2},
        -t_0^{1/2} u_0^{-1/2} t_N^{1/2} u_N^{1/2} s^{a/2},
        -t_0^{1/2} u_0^{1/2} t_N^{1/2} u_N^{-1/2} s^{a/2},
        t_0^{1/2} u_0^{1/2} t_N^{1/2} u_N^{1/2} s^{a/2}
    \right)_\infty}.
\end{aligned}
\label{eq:infNormFactor}
\end{equation}
We also need to compute $\bbra{\widetilde{w}} S^a \kket{\widetilde{v}}$, but this is obtained from
\eqref{eq:infNormFactor} by setting $u_i = t_i$, $i = 0, N$. 

\section{Construction of a left ground state vector}
\label{app:leftgroundstate}

In this appendix, we construct row vector solutions of the \emph{left} $q$KZ equations \eqref{eq:leftqKZi} --
\eqref{eq:leftqKZN} in the matrix product form
\begin{equation}
    \bra{\Phi(\mathbf{x}; s, \xi)} = \bbra{W} \mathbb{S} \mathbb{A}_1(x_1) \ldots \mathbb{A}_N(x_N) \kket{V},
    \label{eq:leftmpaForm}
\end{equation}
with
\begin{equation}
    \mathbb{A}(x) = \begin{pmatrix}
        A_0(x)
        &
        A_1(x)
    \end{pmatrix}.
\end{equation}

\begin{lemma}
Sufficient conditions for a vector of form \eqref{eq:leftmpaForm} to satisfy the left $q$KZ equations
\eqref{eq:leftqKZi} -- \eqref{eq:leftqKZN} are the following:
\begin{align}
    \mathbb{A}_1(x_i) \mathbb{A}_2(x_{i+1}) \check{R}\left(\frac{x_{i+1}}{x_i}\right) 
    & =
    \mathbb{A}_1(x_{i+1}) \mathbb{A}_2(x_i),
    \label{eq:leftZF}
    \\
    \bbra{W} \mathbb{S} \mathbb{A}\left(x_1^{-1}\right)\widetilde{K}\left(x_1^{-1}\right) 
    & =
    \bbra{W} \mathbb{S} \mathbb{A}\left(s x_1\right),
    \label{eq:leftdeformedGZ1}
    \\
    \mathbb{A}(x_N) \kket{V} \overline{K}(x_N)
    & =
    \mathbb{A}\left(x_N^{-1}\right) \kket{V}.
    \label{eq:leftdeformedGZN}
\end{align}
\end{lemma}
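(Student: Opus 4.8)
The plan is to prove this exactly as the analogous sufficient-conditions lemma for the right $q$KZ equations: substitute the matrix product form \eqref{eq:leftmpaForm} into each of \eqref{eq:leftqKZi}--\eqref{eq:leftqKZN} and use the three relations \eqref{eq:leftZF}--\eqref{eq:leftdeformedGZN} to bring the result into the required form. The single book-keeping principle that makes this work is that the index $i$ on $\check R_i$, $\widetilde K_1$, $\overline K_N$ refers to the \emph{physical} (lattice-site) tensor factor, whereas the entries $A_0(x), A_1(x)$ of $\mathbb A(x)$, together with $\bbra W$, $\mathbb S$, $\kket V$, live in the auxiliary algebraic space; hence a physical operator at site $i$ commutes with every $\mathbb A_j(x_j)$ with $j \neq i$ and with each of $\bbra W$, $\mathbb S$, $\kket V$. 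Throughout, $\check R$ acts from the \emph{right} in the left $q$KZ setting, which is why \eqref{eq:leftZF} carries $\check R(x_{i+1}/x_i)$ to the right of the $\mathbb A_1 \mathbb A_2$ product, i.e. it is the mirror image of the right-handed Zamolodchikov--Faddeev relation \eqref{eq:ZF}, and the column vectors $\binom{A_0}{A_1}$ of the right-handed case are replaced here by row vectors $(A_0, A_1)$.

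Concretely, for the bulk equation \eqref{eq:leftqKZi} the operator $\check R_i(x_{i+1}/x_i)$, multiplying \eqref{eq:leftmpaForm} from the right, sees only the block $\mathbb A_i(x_i)\mathbb A_{i+1}(x_{i+1})$; relation \eqref{eq:leftZF} replaces this block by $\mathbb A_i(x_{i+1})\mathbb A_{i+1}(x_i)$, which is precisely $\bra{\Phi(\ldots, x_{i+1}, x_i, \ldots)}$. For the left boundary \eqref{eq:leftqKZ1}, one inserts the ansatz with first argument $x_1^{-1}$, moves $\widetilde K_1(x_1^{-1})$ leftward past $\kket V$ and past $\mathbb A_2(x_2), \ldots, \mathbb A_N(x_N)$ until it stands immediately to the right of $\bbra W \mathbb S \mathbb A_1(x_1^{-1})$, and applies \eqref{eq:leftdeformedGZ1} to obtain $\bbra W \mathbb S \mathbb A_1(sx_1)\mathbb A_2(x_2)\cdots\kket V = \bra{\Phi(sx_1, x_2, \ldots)}$. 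For the right boundary \eqref{eq:leftqKZN}, $\overline K_N(x_N)$ multiplying from the right commutes past $\kket V$, so the sub-expression $\mathbb A_N(x_N)\kket V \,\overline K_N(x_N)$ appears verbatim, and \eqref{eq:leftdeformedGZN} turns it into $\mathbb A_N(x_N^{-1})\kket V$, giving $\bra{\Phi(\ldots, x_{N-1}, 1/x_N)}$.

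None of these steps is a genuine obstacle: the argument is a routine transcription of the right-handed proof with all products reversed. The point that deserves care — the closest thing to a subtlety — is the $\xi$-dependence hidden inside $\widetilde K$: as in the right-handed case (cf.\ the remark that $s$ enters through $T_0$, and the proof of the corresponding proposition in which $\xi$ picks up a factor of $s$ on crossing $\widetilde K$ in \eqref{eq:deformedGZ1}), the relation \eqref{eq:leftdeformedGZ1} must be read with the $\xi$-parameter of $\widetilde K$ and of the $\bbra W \mathbb S$ factors on its two sides chosen consistently, so that the deformation parameter is threaded correctly through the left boundary. Finally, the fact that the three moves above can be performed in any order and yield a well-defined vector is guaranteed, exactly as for the right $q$KZ equations, by the Yang--Baxter, reflection and unitarity relations satisfied by $\check R$, $\widetilde K$ and $\overline K$.
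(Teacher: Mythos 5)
Your argument is correct and is exactly the routine verification the paper has in mind (it states this lemma, like its right-handed counterpart, without writing out a proof): substitute the matrix product form, use that $\check R_i$, $\widetilde K_1$, $\overline K_N$ act only on the indicated physical tensor factors and hence commute with everything else, and apply the three auxiliary-space relations. Your bookkeeping of the physical versus auxiliary spaces and of the $\xi$-dependence of $\widetilde K$ matches the paper's conventions, so nothing further is needed.
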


We will construct a solution at $\xi = s^m$, with $m \ge 0$ an integer.  We define the following objects:
\begin{equation}
\begin{aligned}
    b_{\text{left}}(x) & = (1, 1),
    \\
    \mathbb{A}^{(m)}_{\text{left}}(x)
        & = b_{\text{left}}(x) \dot\otimes \underbrace{L(1/x) \dot\otimes \ldots \dot\otimes L(1/x)}_{m \text{ times}},
    \\
    \mathbb{S}^{(m)}_{\text{left}} & = 1 \otimes S \otimes S^2 \otimes \ldots \otimes S^{2m-1} \otimes S^{2m},
    \\
    \bbra{W^{(m)}_{\text{left}}}
    & = \bbra{\widetilde w} \otimes \underbrace{\bbra{w} \otimes \bbra{\widetilde w}
        \otimes \ldots \otimes \bbra{w} \otimes \bbra{\widetilde w}}_{m \text{ times}},
    \\
    \kket{V^{(m)}_{\text{left}}}
    & = \kket{\widetilde v } \otimes \underbrace{\kket{v} \otimes \kket{\widetilde v}
        \otimes \ldots \otimes \kket{v} \otimes \kket{\widetilde v}}_{m \text{ times}}.
\end{aligned}
\end{equation}
The algebraic objects are as given in Definition~\ref{def:oscAlg}, and $L(x)$ is defined in \eqref{eq:Lbdef}.

\begin{proposition}
For integer $m \ge 0$ and $\xi = s^m$, 
\begin{equation}
    \bra{\Phi^{(m)}(\mathbf{x}; s)}
    =
    \frac{1}{\Omega^{(m)}_{\text{left}}}
    \bbra{W^{(m)}_{\text{left}}}
        \mathbb{S}^{(m)}_{\text{left}} \mathbb{A}^{(m)}_{\text{left},1}(x_1)
        \ldots
        \mathbb{A}^{(m)}_{\text{left},N}(x_N) 
    \kket{V^{(m)}_{\text{left}}},
    \label{eq:Phim}
\end{equation}
with normalisation factor
\begin{equation}
    \Omega^{(m)}_{\text{left}}
        = \bbra{W^{(m)}_{\text{left}}} \mathbb{S}^{(m)}_{\text{left}} \kket{V^{(m)}_{\text{left}}},
\end{equation}
is a solution of left the $q$KZ equations \eqref{eq:leftqKZi} -- \eqref{eq:leftqKZN}.
\end{proposition}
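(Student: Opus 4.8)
The plan is to invoke the lemma immediately above, which reduces the statement to three matrix-product identities: the left Zamolodchikov--Faddeev relation \eqref{eq:leftZF} and the deformed Ghoshal--Zamolodchikov relations \eqref{eq:leftdeformedGZ1}, \eqref{eq:leftdeformedGZN}, for the building blocks $b_{\text{left}}$, $L$, $\mathbb{S}^{(m)}_{\text{left}}$, $\bbra{W^{(m)}_{\text{left}}}$, $\kket{V^{(m)}_{\text{left}}}$ introduced just above. I would verify each of these by a direct computation closely mirroring the proof of the matrix-product formula \eqref{eq:Psim} for $\ket{\Psi^{(m)}(\mathbf{x};s)}$, and separately check that $\Omega^{(m)}_{\text{left}}$ is non-zero. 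As in that proof, the three identities all reduce --- after peeling off one site, or one $L$-pair, at a time --- to a handful of elementary exchange relations between $\check R$, $\widetilde K$, $\overline K$ and the blocks $(1,1)$ and $L$; these follow from the oscillator relations of Definition~\ref{def:oscAlg}, and are the physical-space transposes of those used for $\ket{\Psi^{(m)}}$, brought into the required form with the help of the Gallavotti--Cohen relations on $\check R$, $\widetilde K$, $\overline K$ recorded in the introduction.

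Concretely, for \eqref{eq:leftZF} I would use that the row vector $b_{\text{left}}(x)=(1,1)$ is a left eigenvector of $\check R(x)$ with eigenvalue $1$ (since $\bra 1 w=0$, by \eqref{eq:Rcheck} and the vanishing column sums of $w$), together with an exchange relation of the form $L_1(1/x_i)L_2(1/x_{i+1})\check R(x_{i+1}/x_i) = \check R(x_{i+1}/x_i)L_1(1/x_{i+1})L_2(1/x_i)$: pushing $\check R(x_{i+1}/x_i)$ past the $m$ copies of $L(1/x)$ in $\mathbb{A}^{(m)}_{\text{left},i}$ and $\mathbb{A}^{(m)}_{\text{left},i+1}$, one $L$-pair at a time, leaves it acting on the trivial factors $(1,1)$, which absorb it. For the right boundary \eqref{eq:leftdeformedGZN} I would use \eqref{eq:oscAlgBoundary} and \eqref{eq:oscAlgBoundaryTwid} to pull $\overline K(x_N)$ through $\kket{\widetilde v}$ at no cost and through each $L(1/x_N)$ at the cost of a fresh $\overline K$ acting further in, so that iterating collapses the string and sends $x_N\mapsto 1/x_N$. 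For the left boundary \eqref{eq:leftdeformedGZ1} the essential point is that $\xi$ is multiplied by $s$ each time a copy of $L$ is dragged past $\widetilde K$ (recall $\widetilde K(x)=K(s^{-1/2}x)$, \eqref{eq:Ktilde}): starting from $\xi=s^m$, this telescopes $\xi$ down through $s^{m-1},\dots,s,1$ as the $m$ copies of $L(1/x_1)$ are crossed, after which the residual $\widetilde K(x_1^{-1})|_{\xi=1}$ is absorbed by the leftmost block $b_{\text{left}}(x_1^{-1})=(1,1)$ --- because the boundary matrix $B(\xi=1)$ has vanishing column sums --- and $b_{\text{left}}(x_1^{-1})=b_{\text{left}}(sx_1)$ holds trivially. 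The reversed grading $1\otimes S\otimes\dots\otimes S^{2m}$ in $\mathbb{S}^{(m)}_{\text{left}}$ and the ordering of the boundary vectors in \eqref{eq:Phim} are exactly what make this bookkeeping work.

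Non-vanishing of $\Omega^{(m)}_{\text{left}}=\bbra{W^{(m)}_{\text{left}}}\mathbb{S}^{(m)}_{\text{left}}\kket{V^{(m)}_{\text{left}}}$ I would establish by evaluating it in the infinite-dimensional representation of Appendix~\ref{app:representation}: it factorises into a product of quantities $\bbra w S^a\kket v$ and $\bbra{\widetilde w}S^a\kket{\widetilde v}$, each given in closed form by the $t$-Mehler formula exactly as in \eqref{eq:infNormFactor}, hence non-zero. I expect the main obstacle to be the left-boundary bookkeeping: one must check carefully that, with the auxiliary factors written in reversed order, with the grading $\mathbb{S}^{(m)}_{\text{left}}$, and with the twiddled vectors placed as in $\bbra{W^{(m)}_{\text{left}}}$ and $\kket{V^{(m)}_{\text{left}}}$, the increments of $\xi$ telescope \emph{exactly} from $s^m$ down to $1$ with no residual $S$-powers or boundary mismatch, and that the ``left'' elementary relations for $\widetilde K$ and $L$ really are the honest transposes of the ones already established, rather than relations for a differently parametrised $L$ --- this is precisely the point at which the Gallavotti--Cohen identities on $\check R$, $\widetilde K$, $\overline K$ do their work.
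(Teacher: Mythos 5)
Your proposal follows essentially the same route as the paper: reduce the Proposition to the three sufficient conditions \eqref{eq:leftZF}--\eqref{eq:leftdeformedGZN}, verify each via elementary exchange relations for $b_{\text{left}}=(1,1)$ and $L$ (with $\check R$ absorbed by $(1,1)$ in the bulk, $\overline K$ pulled through the $L$-string on the right, and $\xi$ telescoping from $s^m$ down to $1$ as $\widetilde K$ crosses the $m$ copies of $L$ on the left), and evaluate $\Omega^{(m)}_{\text{left}}$ in the Fock-space representation via the $t$-Mehler formula. The only cosmetic difference is that the paper simply states and directly verifies the left elementary relations from the oscillator algebra of Definition~\ref{def:oscAlg}, without appealing to the Gallavotti--Cohen transposition identities (which relate $\xi$ to $\xi'$ and are reserved for converting left solutions into right ones in Lemma~\ref{lemma:leftrightsol}).
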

\begin{proof}
We give the elementary exchange relations, which imply \eqref{eq:leftZF} -- \eqref{eq:leftdeformedGZN},
and thus a solution of the left $q$KZ equations.

In the bulk,
\begin{align*}
    b_{\text{left},1}(x_i) b_{\text{left},2}(x_{i+1})\check{R}(x_{i+1}/x_{i})
    & =
    b_{\text{left},1}(x_{i+1}) b_{\text{left},2}(x_i),
    \\
    L_1(1/x_{i}) L_2(1/x_{i+1})\check{R}(x_{i+1}/x_{i})  & = \check{R}(x_{i+1}/x_{i}) L_1(1/x_{i+1}) L_2(1/x_{i}),
\end{align*}
from which \eqref{eq:leftZF} follows.  On the right boundary
\begin{align*}
    b_{\text{left}}(x_N) \overline{K}(x_N)\kket{v}  & =  b_{\text{left}}(1/x_N) \kket{v},
    \\
 L(1/x_N)\overline{K}(x_N)|v\rrangle \otimes | \widetilde v \rrangle & = \overline{K}(x_N)L(x_N)|v\rrangle \otimes | \widetilde v \rrangle,
\end{align*}
from which \eqref{eq:leftdeformedGZN} follows.  On the left boundary
\begin{equation*}
 \bbra{w}  b_{\text{left}}(1/x_1) \left.\widetilde{K}(x_1^{-1})\right|_{\xi=1}
 =
 \bbra{w} b_{\text{left}}(s x_1),
\end{equation*}
and 
\begin{align*}
 & \llangle w| \otimes \llangle \widetilde w| S^{2a-1} \otimes S^{2a} L(x_1) \left.\widetilde{K}(x_1^{-1})\right|_{\xi=s^{a}}
 \\
 & = 
 \llangle w| \otimes \llangle \widetilde w| S^{2a-1} \otimes S^{2a} \left.\widetilde{K}(x_1^{-1})\right|_{\xi=s^{a-1}} L(1/(s x_1)).
\end{align*}
With the constraint $\xi = s^m$, these imply \eqref{eq:leftdeformedGZ1}.
\end{proof}

With this result, the proof of Theorem~\ref{theorem:Phisol} is then very similar to that for
Theorem~\ref{theorem:Psisol}.  We will, however, comment briefly on the terms appearing in each component
$\phi^{(m)}_{\bm\tau}(\mathbf{x})$, and the normalisation.  To do so, we look at the normalised `increment
matrix' taking the $m-1$ solution to the $m$ solution:
\begin{equation*}
    \frac{1}{\bbra{w} S^{2m-1} \kket{v} \bbra{\widetilde{w}} S^{2m} \kket{\widetilde{v}}}
    \left(
        {^{2m-1,2m}\langle} L\left(\frac{1}{x_1}\right)_1 \ldots L\left(\frac{1}{x_N}\right)_N \rangle
    \right).
\end{equation*}
The diagonal entries of this matrix contain the term $\mathbf{x}^{-\lambda^{(1)}(\bm\tau)}$ with coefficient
$1$, which produce the term $\mathbf{x}^{-\lambda^{(m)}(\bm\tau)}$ in $\phi^{(m)}_{\bm\tau}$.  In the top row of
the increment matrix, the entry in column $\bm\tau'$ has leading term
\begin{equation}
    \frac{\bbra{\widetilde w}S^{2m}{a^{\sum_i \tau'_i}} \kket{\widetilde v}}
         {\bbra{\widetilde{w}} S^{2m} \kket{\widetilde{v}}} x_1 \ldots x_N,
    \label{eq:xaaacoeff}
\end{equation}
and as a consequence, each component $\phi^{(m)}_{\bm\tau}$ contains the term $\mathbf{x}^{(m^N)}$ with the same
coefficient as in \eqref{eq:xaaacoeff}.


\providecommand{\href}[2]{#2}\begingroup\raggedright\endgroup

\end{document}